\newtheorem{proposition}{Proposition}
\newtheorem{lemma}{Lemma}
\newtheorem{theorem}{Theorem}
\newtheorem{remark}{Remark}
\newtheorem{example}{Example}
\newtheorem{conjecture}{Conjecture}
\newcommand{\BF}[1]{{{\boldmath{\bf #1}\unboldmath}}}
\newcommand{\EM}[1]{{\em\textcolor{Maroon}{#1}}}
\newcommand{\EMM}[1]{\textcolor{Maroon}{#1}}
\newcommand{\B}{\{0,1\}}
\newcommand{\ONE}{\mathbf{1}}
\newcommand{\ZERO}{\mathbf{0}}
\title{Attractor separation and signed cycles\\ in asynchronous Boolean networks}
\date{\today}
\author{
Adrien Richard\footnote{Universit\'e Côte d’Azur, CNRS, I3S, Sophia Antipolis, France. \newline \indent\indent{\tt  adrien.richard@cnrs.fr}}~~and 
Elisa Tonello\footnote{Department of Mathematics and Computer Science, Freie Universit\"at Berlin, Germany.\newline \indent\indent{\tt elisa.tonello@fu-berlin.de}}
}
\begin{document}

\maketitle


\begin{abstract}
The structure of the graph defined by the interactions in a Boolean network can determine properties of the asymptotic dynamics. 
For instance, considering the asynchronous dynamics, the absence of positive cycles guarantees the existence of a unique attractor, and the absence of negative cycles ensures that all attractors are fixed points.
In presence of multiple attractors, one might be interested in properties that ensure that attractors are sufficiently ``isolated'', that is, they can be found in separate subspaces or even trap spaces, subspaces that are closed with respect to the dynamics.
Here we introduce notions of separability for attractors and identify corresponding necessary conditions on the interaction graph. In particular, we show that if the interaction graph has at most one positive cycle, or at most one negative cycle, or if no positive cycle intersects a negative cycle, then the attractors can be separated by subspaces. If the interaction graph has no path from a negative to a positive cycle, then the attractors can be separated by trap spaces.
Furthermore, we study networks with interaction graphs admitting two vertices that intersect all cycles, and show that if their attractors cannot be separated by subspaces, then their interaction graph must contain a copy of the complete signed digraph on two vertices, deprived of a negative loop. We thus establish a connection between a dynamical property and a complex network motif.
The topic is far from exhausted and we conclude by stating some open questions.

\end{abstract}

\section{Introduction}

A \EM{Boolean network} (BN) is a finite dynamical system usually defined by a function  
\[
f:\B^n\to\B^n,\qquad x=(x_1,\dots,x_n)\mapsto f(x)=(f_1(x),\dots,f_n(x)).
\]

BNs have many applications. In particular, since the seminal papers of McCulloch and Pitts \cite{MP43}, Hopfield \cite{H82}, Kauffman \cite{K69,K93} and Thomas \cite{T73,TA90}, they are omnipresent in the modeling of neural and gene networks (see \cite{B08,N15} for reviews). They are also essential tools in computer science, see \cite{ANLY00,GRF16,BGT14,CFG14a,GR15b} for instance. 

\medskip
The ``network'' terminology comes from the fact that the \EM{interaction graph} of $f$ is often considered as the main parameter of $f$: the vertex set is $[n]=\{1,\dots,n\}$ and there is an arc from $j$ to $i$ if $f_i$ depends on input $j$. The \EM{signed interaction graph} of $f$, denoted \EM{$G(f)$}, provides useful additional information about interactions, and is commonly considered in the context of gene networks: the vertex set is $[n]$ and there is a positive (negative) arc from $j$ to $i$ if there are $x,y\in \B^n$ that only differ in $x_j<y_j$ such that $f_i(y)-f_i(x)$ is positive (negative). Note that the presence of both a positive and a negative arc from one vertex to another is allowed.

\medskip
From a dynamical point of view, the successive iterations of $f$ describe the so called \EM{synchronous dynamics}: if $x^t$ is the configuration of the system at time $t$, then $x^{t+1}=f(x^t)$ is the configuration of the system at the next time. Hence, all components are updated in parallel at each time step. However, when BNs are used as models of natural systems, such as gene networks, synchronicity can be an issue. This led researchers to consider the \EM{(fully) asynchronous dynamics}, where one component is updated at each time step (see e.g. \cite{T91,TA90,TK01,A-J16}). If $x^t$ is the configuration of the system at time $t$, then the configuration at time $t+1$ is $x$ if $f(x)=x$ and, otherwise, a configuration $y$ obtained from $x$ by flipping a component $i$ such that $f_i(x)\neq x_i$. The asynchronous dynamics can be described by the paths of the \EM{asynchronous graph} of $f$, denoted \EM{$\Gamma(f)$}: the vertex set is $\B^n$, and there is an arc from $x$ to $y$ if and only if $y$ is obtained from $x$ by flipping a component $i$ such that $x_i\neq f_i(x)$. The asymptotic behaviors are described by the \EM{attractors} of $\Gamma(f)$, which are the inclusion-minimal \EM{trap sets}, where $X\subseteq \B^n$ is a trap set if $\Gamma(f)$ has no arc from a vertex in $X$ to a vertex outside $X$. In particular, we say that $\Gamma(f)$ is:
\begin{itemize}
\item
\EM{fixing} if all the attractors are of size one,
\item  
\EM{converging} if there is a unique attractor.
\end{itemize}

\medskip
In biological applications, and for gene networks in particular, the first reliable experimental information often concern the signed interaction graph while the actual dynamics are very difficult to observe \cite{TK01,N15}. One is thus faced with the following question: {\em what can be said about $\Gamma(f)$ according to $G(f)$?} An influential result is this direction is the following \cite{ADG04a,A08}: if $G(f)$ has no negative cycle, then $f$ has at least one fixed point; and if $G(f)$ has no positive cycle, then $f$ has at most one fixed point. Soon after, it was realized that ``fixed point'' can be replaced by ``asynchronous attractor'' giving: if $G(f)$ has no negative cycle, then $\Gamma(f)$ is fixing \cite{R10}; and if $G(f)$ has no positive cycle, then $\Gamma(f)$ is converging \cite{RC07}.  

\medskip
In this paper, we are interested in conditions on $G(f)$ that imply asymptotic properties in $\Gamma(f)$ which are weaker than the fixing and converging properties. To describe them, we need additional definitions. A \EM{subspace} is set of configurations $X$ such that, for some $I\subseteq [n]$ and $c:I\to \B$, we have $x\in X$ if and only if $x_i=c(i)$ for all $i\in I$. Hence a subspace is obtained by fixing some components. Given a set of configurations $X$, we denote by $[X]$ the smallest subspace containing $X$. A \EM{trap space} is a trap set which is also a subspace. We denote by $\langle X\rangle$ the smallest trap space containing $X$. Obviously, $[X]\subseteq \langle X\rangle$. We say that $\Gamma(f)$ is
\begin{itemize}
\item
\EM{separating} if $[A]\cap [B]=\emptyset$ for all distinct attractors $A,B$,
\item
\EM{trap-separating} if $\langle A\rangle\cap \langle B\rangle=\emptyset$  for all distinct attractors $A,B$, 
\item
\EM{trapping} if it is separating and $[A]=\langle A\rangle$ for each attractor $A$.
\end{itemize}
The trapping property has been introduced in \cite{NRT22} with the following equivalent definition: for each attractor $A$, $[A]=\langle A\rangle$ and $A$ is the unique attractor reachable from any state in $[A]$. 

\medskip

Arriving at a description of the attractor landscape is important for the identification of phenomena such as differentiation or stable periodicity, but is in general a hard problem, subject of ongoing research~\cite{klarner2015approximating,rozum2021parity}.
On the other hand, trap spaces can be computed more easily, for instance using logic programming~\cite{klarner2015computing}.
The number of minimal trap spaces provides a lower bound on the number of attractors. Moreover, an analysis of published biological models~\cite{klarner2015approximating} found that minimal trap spaces are often good approximations of attractors, meaning that each minimal trap space contains only one attractor (``univocality''), all attractors are found inside minimal trap spaces (``completeness''), and oscillating variables in attractors span the minimal containing trap space in all directions (``faithfullness'').
Under these conditions, model analyses that investigate reachability of attractors or existence of control strategies (see e.g.~\cite{cf2022control}) can be greatly facilitated.
Of particular interest are structural conditions on the interaction graph that can guarantee these properties.
In this work we look for conditions for the dynamics to be trap-separating, which implies that minimal trap spaces are complete and univocal, and for the dynamics to be trapping, which adds faithfulness of the minimal trap spaces, and investigate the ``worst case scenario'' where attractors cannot be separated even by subspaces.

\medskip
One easily check that fixing $\Rightarrow$ trapping $\Rightarrow$ trap-separating $\Rightarrow$ separating. Furthermore, converging $\Rightarrow$ trap-separating (but converging $\not\Rightarrow$ trapping). The situation is described at the top of \cref{fig:results}. We deduce that if $\Gamma(f)$ is not trap-separating, then it is neither converging nor fixing, and thus $G(f)$ has at least one positive cycle and at least one negative cycle. But can something stronger be said? This paper provides partial answers. 

\medskip
In particular, we prove that if $\Gamma(f)$ is not trap-separating, then $G(f)$ has a path from a negative cycle to a positive cycle. If $\Gamma(f)$ is non-separating, we say more:
\begin{itemize}
\item 
$G(f)$ has a positive cycle which intersects a negative cycle, and
\item
$G(f)$ has at least two negative cycles, and
\item
at least two vertices must be removed from $G(f)$ to destroy all the positive cycles. 
\end{itemize}
The first point is particularly interesting since little is known about the dynamical influence of such intersections (see however \cite{didier2012relations,MNRSS15,remy2016boolean,ARS17b,R18,mosse2020combinatorial}). Consider the following signed digraph, called $H_2$ (throughout the paper, green arcs are positive and red arcs are negative):
\[
H_2\qquad 
\begin{array}{c}
\begin{tikzpicture}
\useasboundingbox (-2.2,-0.7) rectangle (2.2,0.7);
\node[outer sep=1,inner sep=2,circle,draw,thick] (1) at ({180}:1){$1$};
\node[outer sep=1,inner sep=2,circle,draw,thick] (2) at ({0}:1){$2$};
\draw[Green,->,thick] (1.{180-20}) .. controls ({180-20}:2.3) and ({180+20}:2.3) .. (1.{180+20});
\draw[red,->,thick] (1.{180-60}) .. controls ({180-30}:2.8) and ({180+30}:2.8) .. (1.{180+60});
\draw[Green,->,thick] (2.{0-20}) .. controls ({0-20}:2.3) and ({0+20}:2.3) .. (2.{0+20});
\path[->,thick]
(1) edge[red,bend right=25] (2)
(1) edge[Green,bend right=55] (2)
(2) edge[red,bend right=25] (1)
(2) edge[Green,bend right=55] (1)
;
\end{tikzpicture}
\end{array}
\]
This is a minimal signed digraph which satisfies the three conditions given above, and we prove that if $\Gamma(f)$ is non-separating then, under some conditions on $G(f)$, the presence of $H_2$ is unavoidable. To be precise, let us say that a signed digraph $H$ with vertex set $V$ is \EM{embedded} in $G(f)$ if there is an injection $\phi:V\to [n]$ such that, for every positive (negative) arc of $H$ from $j$ to $i$, $G(f)$ contains a positive (negative) path from $\phi(j)$ to $\phi(i)$ whose internal vertices are not in $\phi(V)$. We prove that, if $\Gamma(f)$ is non-separating, then either $H_2$ is embedded in $G(f)$, or at least three vertices must be removed from $G(f)$ to destroy all the cycles. The dynamics associated to isolated complex motifs has been previously investigated, as well as relationships between feedback vertex numbers and number of attractors (see e.g. \cite{A08,didier2012relations,MNRSS15,remy2016boolean,ARS17,mosse2020combinatorial}); however, this is the first time, to our knowledge, that a connection between a dynamical property and the embedding of such a complex pattern is identified.

\medskip
A sufficient structural condition for the trapping property is identified in \cite{NRT22}: $\Gamma(f)$ is trapping if $G(f)$ has a \EM{linear cut}, that is, if it has no arc from a vertex of out-degree at least two to a vertex of in-degree at least two and every cycle contains a vertex of in- and out-degree one. Here, with the tools used to analyse signed digraphs that are non-separating, we provide a sufficient condition for $\Gamma(f)$ to be trapping which is rather different: we show that $\Gamma(f)$ is trapping when $G(f)$ is strong and has at most one negative cycle. We also provide in the same vein new sufficient conditions for $\Gamma(f)$ to be fixing or converging.

\medskip
We divide the results in statements that concern the intersection of positive and negative cycles (\cref{sec:intersection_positive_negative}), number of positive (\cref{sec:number-positive-cycles}) and negative cycles (\cref{sec:number-negative-cycles}), and graphs with feedback number two (\cref{sec:feedback-number-two}).
A summary of our results is given in \cref{fig:results}.
We conclude with some conjectures and open questions (\cref{sec:open-problems}).

\pgfdeclarelayer{background}
\pgfdeclarelayer{foreground}
\pgfsetlayers{background,main,foreground}
\def\base{4cm}
\def\b{2cm}
\def\h{1cm}
\begin{figure}
\resizebox{\textwidth}{!}{
\begin{tikzpicture}
    \node[rectangle,draw,minimum width=\base,minimum height=0.5*\base,line width=0.5mm] (fix) at (-0.1,0) {\begin{tabular}{c}\textbf{fixing}\\$|A|=1$ $\forall A$\end{tabular}};
    \node[rectangle,draw,minimum width=\base,minimum height=0.5*\base,line width=0.5mm] (conv) at (1.2*\base,0.3*\base) {\begin{tabular}{c}\textbf{converging}\\$\exists! A$\end{tabular}};
    \node[rectangle,draw,minimum width=\base,minimum height=0.5*\base,line width=0.5mm] (trap) at (2.4*\base,0) {\begin{tabular}{c}\textbf{trapping}\\separating,\\$\langle A\rangle=[A]$ $\forall A$\end{tabular}};
    \node[rectangle,draw,minimum width=1.2*\base,minimum height=0.82*\base,line width=0.5mm] (trapsep) at (3.7*\base,0.15*\base) {\begin{tabular}{c}\textbf{trap-separating}\\\hspace{2pt}\\$\langle A\rangle\cap \langle B\rangle=\emptyset$\\$\forall A\neq B$\end{tabular}};
    \node[rectangle,draw,minimum width=2.0*\base,minimum height=0.82*\base,line width=0.5mm] (sep) at (5.5*\base,0.15*\base) {\begin{tabular}{c}\textbf{separating}\\\hspace{2pt}\\$[A]\cap [B]=\emptyset$ $\forall A\neq B$\end{tabular}};

    \node[rectangle,draw,fill=black!10,minimum width=\b,minimum height=\h,line width=0.3mm,rounded corners=0.2cm] (nnc) at (-0.1,0-0.6*\base) {no negative cycle};
    \node[rectangle,draw,fill=black!10,minimum width=\b,minimum height=\h,line width=0.3mm,rounded corners=0.2cm] (npc) at (1.2*\base,-0.6*\base) {no positive cycle};
    \node[rectangle,draw,fill=black!10,minimum width=\b,minimum height=\h,line width=0.3mm,rounded corners=0.2cm] (lc) at (2.2*\base,0-0.6*\base) {linear cut};
    \node[rectangle,draw,fill=white,minimum width=\b,minimum height=\h,line width=0.3mm,rounded corners=0.2cm] (nopathnp) at (3.6*\base,0-0.6*\base) {\begin{tabular}{c}no path from neg.\\to pos. cycle\end{tabular}};
    \node[rectangle,draw,fill=white,minimum width=\b,minimum height=\h,line width=0.3mm,rounded corners=0.2cm] (oneneg) at (0.05*\base,0-1.16*\base) {\begin{tabular}{c}at least 1 pos. cycle,\\unique neg. cycle\\ meets all cycles\end{tabular}};
    \node[rectangle,draw,fill=white,minimum width=\b,minimum height=\h,line width=0.3mm,rounded corners=0.2cm] (stroneneg) at (0*\base,0-1.76*\base) {\begin{tabular}{c}strong, at least 1 pos.\\cycle, unique neg. cycle\\meets all cycles\end{tabular}};
    \node[rectangle,draw,fill=white,minimum width=\b,minimum height=\h,line width=0.3mm,rounded corners=0.2cm] (onepos1) at (1.1*\base,0-1.16*\base) {\begin{tabular}{c}at least 1 neg. cycle,\\unique pos. cycle\\meets all cycles\end{tabular}};
    \node[rectangle,draw,fill=white,minimum width=\b,minimum height=\h,line width=0.3mm,rounded corners=0.2cm] (stronepos) at (1.2*\base,0-1.76*\base) {\begin{tabular}{c}strong, at least 1 neg.\\cycle, unique pos. cycle\\meets all cycles\end{tabular}};
    \node[rectangle,draw,fill=white,minimum width=\b,minimum height=\h,line width=0.3mm,rounded corners=0.2cm] (strmaxoneneg) at (2.4*\base,0-1.22*\base) {\begin{tabular}{c}strong and at most\\one neg. cycle\end{tabular}};
    \node[rectangle,draw,fill=white,minimum width=\b,minimum height=\h,line width=0.3mm,rounded corners=0.2cm] (onepos) at (3.6*\base,0-1.1*\base) {\begin{tabular}{c}unique pos. cycle\\meets all cycles\end{tabular}};
    \node[rectangle,draw,fill=white,minimum width=\b,minimum height=\h,line width=0.3mm,rounded corners=0.2cm] (no) at (4.8*\base,0-0.6*\base) {\begin{tabular}{c}no neg. and pos.\\cycles intersect\end{tabular}};
    \node[rectangle,draw,fill=white,minimum width=\b,minimum height=\h,line width=0.3mm,rounded corners=0.2cm] (fvntwo) at (5.9*\base,0-0.6*\base) {\begin{tabular}{c}feedback number 2 and\\no embedding of $H_2$\end{tabular}};
    \node[rectangle,draw,fill=white,minimum width=\b,minimum height=\h,line width=0.3mm,rounded corners=0.2cm] (pfvn) at (4.8*\base,0-1.1*\base) {\begin{tabular}{c}pos. feedback\\number 1\end{tabular}};
    \node[rectangle,draw,fill=white,minimum width=\b,minimum height=\h,line width=0.3mm,rounded corners=0.2cm] (nfvn) at (6.1*\base,0-1.1*\base) {\begin{tabular}{c}neg. feedback\\number 1\end{tabular}};
    \node[rectangle,draw,fill=white,minimum width=\b,minimum height=\h,line width=0.3mm,rounded corners=0.2cm] (atmostonep) at (4.8*\base,0-1.6*\base) {\begin{tabular}{c}at most one\\pos. cycle\end{tabular}};
    \node[rectangle,draw,fill=white,minimum width=\b,minimum height=\h,line width=0.3mm,rounded corners=0.2cm] (atmostonen) at (5.8*\base,0-1.6*\base) {\begin{tabular}{c}at most one\\neg. cycle\end{tabular}};

    \draw[->,ultra thick] ([shift={(0cm,-0.3cm)}]fix.east) -- ([shift={(0cm,-0.3cm)}]trap.west);
    \draw[->,ultra thick] ([shift={(0cm,0.3cm)}]conv.east) -- ([shift={(0cm,0.9cm)}]trapsep.west);
    \draw[->,ultra thick,red,dotted] ([shift={(0cm,-0.7cm)}]conv.east) -- ([shift={(0cm,0.5cm)}]trap.west) node[pos=0.5,above] {ex.\ref{ex:non-trapping}};
    \draw[->,ultra thick] (trap) -- ([shift={(0cm,-0.58cm)}]trapsep.west);
    \draw[->,ultra thick] (trapsep) -- (sep);

    \draw[->,ultra thick,gray] (nnc) -- (fix) node[midway,right] {\cite{R10}};
    \begin{pgfonlayer}{background}
        \draw[->,ultra thick,gray] (npc) -- (conv) node[pos=0.2,right,on background layer] {\cite{RC07}};
    \end{pgfonlayer}
    \draw[->,ultra thick,gray] (lc) -- ([shift={(-0.8cm,-1cm)}]trap.center) node[midway,right] {\cite{NRT22}};

    \draw[->,ultra thick] (nopathnp) -- ([shift={(-0.4cm,-1.65cm)}]trapsep.center) node[midway,right] {Thm.\ref{thm:trap-sep}};
    \draw[->,ultra thick] (no) -- ([shift={(-2.8cm,-1.65cm)}]sep.center) node[midway,right] {Thm.\ref{thm:sep}};
    \draw[->,ultra thick] (fvntwo) -- ([shift={(1.6cm,-1.65cm)}]sep.center) node[midway,right] {Thm.\ref{thm:fvs2}};
    \draw[->,ultra thick] ([shift={(0.0cm,0.6cm)}]atmostonep.center) -- ([shift={(0.0cm,-0.6cm)}]pfvn.center);
    \draw[->,ultra thick] ([shift={(1.0cm,0.6cm)}]atmostonen.center) -- ([shift={(-0.2cm,-0.6cm)}]nfvn.center);
    \begin{pgfonlayer}{background}
        \draw[->,ultra thick] ([shift={(-1.2cm,0cm)}]strmaxoneneg.north) -- ([shift={(-1.2cm,0cm)}]trap.south) node[pos=0.27,right] {Thm.\ref{thm:NEGATIVE}};
        \draw[->,ultra thick] ([shift={(-2.0cm,0.6cm)}]stroneneg.center) -- ([shift={(-1.9cm,-1.0cm)}]fix.center) node[pos=0.1,right] {Prop.\ref{pro:fixing}};
        \draw[->,ultra thick] ([shift={(+1.8cm,0cm)}]stronepos.north) -- ([shift={(+1.8cm,0cm)}]conv.south) node[pos=0.05,right] {Prop.\ref{pro:one_positive_cycle}};
        \draw[->,ultra thick] ([shift={(-0.4cm,0.6cm)}]onepos.center) -- ([shift={(-0.8cm,-1.65cm)}]trapsep.center) node[pos=0.15,right] {Prop.\ref{pro:one_positive_cycle}};
        \draw[->,ultra thick] ([shift={(-0.6cm,0.6cm)}]pfvn.center) -- ([shift={(-3.4cm,-1.65cm)}]sep.center) node[pos=0.15,right] {Thm.\ref{thm:PFN}};
        \draw[->,ultra thick] ([shift={(-0.6cm,0.6cm)}]atmostonen.center) -- ([shift={(0.6cm,-1.65cm)}]sep.center) node[pos=0.08,right] {Thm.\ref{thm:NEGATIVE}};
    \end{pgfonlayer}

    \draw[->,ultra thick,red,dotted] ([shift={(1.4cm,0cm)}]oneneg.north) -- +(0cm,0.6cm) -- +(0.55cm,0.6cm) -- +(0.55cm,2.3cm) node[pos=0.4,right] {ex.\ref{ex:sep-not-conv-fix-graph}} -- +(0cm,2.3cm) -- ([shift={(1.7cm,0cm)}]fix.south);
    \draw[->,ultra thick,red,dotted] ([shift={(-1.4cm,0cm)}]onepos1.north) -- +(0cm,0.6cm) -- +(-0.85cm,0.6cm) -- +(-0.85cm,2.3cm) -- +(0.05cm,2.3cm) -- ([shift={(-1.75cm,0cm)}]conv.south);
    \draw[->,ultra thick,red,dotted] (nopathnp) -| node[pos=0.87,right] {ex.\ref{ex:non-trapping}} ([shift={(1.6cm,-1.0cm)}]trap.center);
    \draw[->,ultra thick,red,dotted] ([shift={(0cm,0.3cm)}]onepos.west) -| node[pos=0.27,above] {ex.\ref{ex:pos-not-trapping}} ([shift={(1.4cm,-1.0cm)}]trap.center);
    \draw[->,ultra thick,red,dotted] (no) -| node[pos=0.87,right] {ex.\ref{ex:not-trap-sep}} ([shift={(2.0cm,-1.65cm)}]trapsep.center);
    \draw[->,ultra thick,red,dotted] (pfvn) -| ([shift={(2.0cm,-1.65cm)}]trapsep.center);
    \draw[->,ultra thick,red,dotted] (atmostonep) -| ([shift={(2.0cm,-1.65cm)}]trapsep.center);
    \draw[-,ultra thick,red,dotted] (atmostonen) -- (atmostonep);
    \draw[-,ultra thick,red,dotted] (fvntwo) -- (no);
    \draw[->,ultra thick,red,dotted] (nfvn) -| node[pos=0.55,right] {ex.\ref{ex:negative_feedback_1}} ([shift={(-0.85cm,-1.65cm)}]sep.center);
\end{tikzpicture}
}\caption{\label{fig:results} Summary of the main definitions and results of this work, and some known results (indicated with gray boxes and arrows). $A$ and $B$ stand for attractors of asynchronous dynamics. Counterexamples are indicated in dashed red arrows.}
\end{figure}
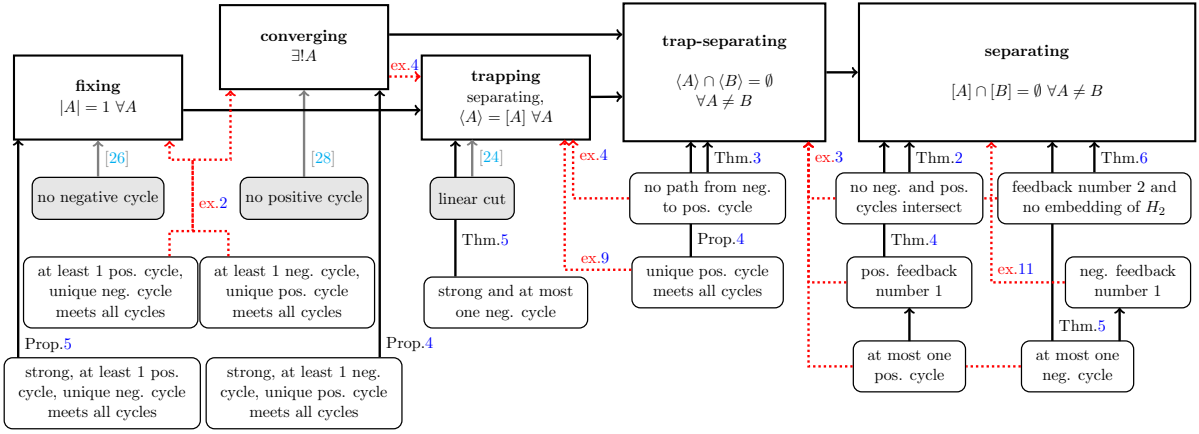

\section{Definitions and background}

\subsection{Digraphs and signed digraphs}

A \EM{digraph} is a pair $G=(V,E)$ where $V$ is a set of vertices and $E\subseteq V^2$ is a set of arcs. Given $I\subseteq V$, the subgraph of $G$ induced by $I$ is denoted $G[I]$, and $G\setminus I$ means $G[V\setminus I]$. A \EM{strongly connected component} (\EM{strong component} for short) of a digraph $G$ is an induced subgraph which is \EM{strongly connected} (\EM{strong} for short) and maximal for this property. A strong component $G[I]$ is \EM{initial} if $G$ has no arc from $V\setminus I$ to $I$, and \EM{terminal} if $G$ has no arc from $I$ to $V\setminus I$.  A digraph is \EM{trivial} if it has a unique vertex and no arc. 

\medskip
A \EM{signed digraph} $G$ is a pair $(V,E)$ where $E\subseteq V^2\times\{-1,1\}$. If $(j,i,s)\in E$ then $G$ has an arc from $j$ to $i$ of sign $s$; we also say that $j$ is an in-neighbor of $i$ of sign $s$ and that $i$ is an out-neighbor of $j$ of sign $s$. We say that $G$ is \EM{simple} if it does not have both a positive arc and a negative arc from one vertex to another, and \EM{full-positive} if all its arcs are positive. A subgraph of $G$ is a signed digraph $(V',E')$ with $V'\subseteq V$ and $E'\subseteq E$. Cycles and paths of $G$ are regarded as simple subgraphs. The \EM{sign of a cycle or a path} of $G$ is the product of the signs of its arcs. We say that $I$ is a \EM{feedback vertex set} if $G\setminus I$ has no cycle. The \EM{feedback number} of $G$ is the minimum size of a \EM{feedback vertex set} of $G$. Similarly, we say that $I$ is a \EM{positive (negative) feedback vertex set} if $G\setminus I$ has no positive (negative) cycle. The \EM{positive (negative) feedback number} of $G$ is the minimum size of a \EM{positive (negative) feedback vertex set} of $G$. We say that $G$ has a \EM{linear cut} if it has no arc from a vertex of out-degree at least two to a vertex of in-degree at least two and every cycle contains a vertex of in- and out-degree one. The underlying (unsigned) digraph of $G$ has vertex set $V$ and an arc from $j$ to $i$ if $G$ has a positive or a negative arc from $j$ to $i$. Every graph concept made on $G$ that does not involved signs are tacitly made on its underlying digraph. For instance, $G$ is strongly connected if its underlying digraph is. In the following, $G$ always denotes a signed digraph with vertex set $V$.

\subsection{Configurations}

The set of maps from $V$ to $\B$ is denoted \EM{$\B^V$} and called set of \EM{configurations} on $V$. Given such a configuration $x$ and $i\in V$, we denote \EM{$x_i$} the image of $i$ by $x$ and for $I\subseteq V$, \EM{$x_I$} is the restriction of $x$ to $I$. In examples, we always have $V=[n]$ for some $n\geq 1$ and we identify $x$ and the binary sequence $x_1x_2\dots x_n$. We denote by \EM{$e_I$} the configuration such that $(e_I)_i=1$ if $i\in I$ and $(e_I)_i=0$ otherwise. We write \EM{$e_i$} instead of $e_{\{i\}}$. Let $x,y$ be two configurations on $V$. We denote by \EM{$x+y$} the configuration $z$ on $V$ with $z_i=x_i+y_i$ for all $i\in V$, where the addition is modulo $2$, and by $\bar{x}$ the configuration $x+e_V$. We denote by \EM{$\Delta(x,y)$} the set of $i\in V$ with $x_i\neq y_i$. The \EM{Hamming distance} between $x$ and $y$ is $\EMM{d(x,y)}=|\Delta(x,y)|$. We equip $\B^V$ with the partial order \EM{$\leq $} defined by $x\leq y$ if and only if $x_i\leq y_i$ for $i\in V$. We denote by \EM{$\ZERO$} and \EM{$\ONE$} the all-zero and all-one configurations, that is, the minimal and maximal element of $\B^V$. If $x\leq y$ then \EM{$[x,y]$} is the set of configurations $z$ on $V$ such that $x\leq z\leq y$. Let $X\subseteq \B^V$. We denote by \EM{$\Delta(X)$} the set of $i\in V$ such that $x_i\neq y_i$ for some $x,y\in X$. We say that $X$ is a \EM{subspace} if $X=[x,y]$ for some configurations $x,y$ on $V$. We denote by \EM{$[X]$} the smallest subspace of $\B^V$ containing $X$. 

\subsection{Boolean networks}

A \EM{Boolean network} (BN) with component set $V$ is a map $f:\B^V\to\B^V$. We denote by $F(V)$ the set of BNs with component set $V$ and for $n\geq 1$ we write $F(n)$ instead of $F([n])$. We say that $f$ is \EM{monotone} if $x\leq y$ implies $f(x)\leq f(y)$ for all  configurations $x,y$ on $V$. We denote by $G(f)$ the \EM{signed interaction graph} of $f$: it is the signed digraph with vertex set $V$ such that, for all $i,j\in V$, there is a positive (negative) arc from $j$ to $i$ if there exist a configuration $x$ on $V$ such that $x_j=0$ and $f_i(x+e_j)-f_i(x)$ is positive (negative); we can have both a positive and a negative arc from one component to another. Given a signed digraph $G$, a BN \EM{on} $G$ is a BN with interaction graph equal to $G$. We denote by $F(G)$ the set of BNs on $G$. Let $x,y$ be two configurations on $V$ with $x\leq y$. The \EM{subnetwork} of $f$ induced by $[x,y]$ is the BN $h$ with component set $I=\Delta(x,y)$ defined by $h(z_I)=f(z)_I$ for all $z\in [x,y]$. Intuitively, $h$ is obtained from $f$ by fixing to $x_i=y_i$ each component $i\in V\setminus I$. One can easily check that $G(h)$ is a subgraph of $G(f)[I]$.

\subsection{Asynchronous graphs}

An \EM{asynchronous graph} $\Gamma$  on $\B^V$ is a digraph with vertex set $\B^V$ such that, for every arc $x\to y$, there is exactly one $i\in V$ such that $x_i\neq y_i$; this component $i$ is the \EM{direction} of the arc. We say that $x\to y$ is \EM{increasing} if $x\leq y$ and \EM{decreasing} if $y\leq x$. Let $X$ be a set of configurations on $V$. We say that an arc $x\to y$ of $\Gamma$ \EM{leaves} $X$ if $x\in X$ and $y\not\in X$. We say that $X$ is a \EM{trap set} of $\Gamma$ if no arc leaves $X$. A \EM{trap space} of $\Gamma$ is a subspace which is also a trap set. We denote by \EM{$\langle X\rangle$} the smallest trap space containing $X$ (which exists since $\B^V$ is a trap space). An \EM{attractor} of $\Gamma$ is a terminal strong component of $\Gamma$ or, equivalently, an inclusion-minimal non-empty trap set (there is at least one attractor since $\B^V$ is a trap set). Given a BN $f\in F(V)$, we denote by $\Gamma(f)$ the \EM{asynchronous graph} of $f$, that is, the asynchronous graph on $\B^V$ with an arc $x\to y$ in the direction $i$ if and only if $f_i(x)\neq x_i$.
It is easy to see that each asynchronous graph on $\B^V$ is the asynchronous graph of a unique BN with component set $V$. For a subspace $X$ of $\{0,1\}^V$, we denote by $\Gamma(f)[X]$ the subgraph of $\Gamma(f)$ induced by $X$. Clearly, this is the asynchronous dynamics of the subnetwork of $f$ induced by $X$.

\medskip
An asynchronous graph is:
\begin{itemize}
\item
\EM{fixing} if all the attractors are of size one,
\item  
\EM{converging} if there is a unique attractor,
\item
\EM{separating} if $[A]\cap [B]=\emptyset$ for all distinct attractors $A,B$,
\item
\EM{trap-separating} if $\langle A\rangle\cap \langle B\rangle=\emptyset$  for all distinct attractors $A,B$, 
\item
\EM{trapping} if it is separating and $[A]=\langle A\rangle$ for all attractor $A$.
\end{itemize}
We abusively say that a signed digraph $G$ is \EM{converging} (resp. \EM{fixing}, \EM{trapping}, \EM{trap-separating}, \EM{separating}) if the asynchronous graph of every $f\in F(G)$ is converging (resp. fixing, trapping, trap-separating, separating). One easily check that fixing $\Rightarrow$ trapping $\Rightarrow$ trap-separating $\Rightarrow$ separating. Furthermore, converging $\Rightarrow$ trap-separating, but converging $\not\Rightarrow$ trapping, see \cref{ex:non-trapping}. Here are some sufficient conditions for $G$ to be fixing, converging or trapping.

\begin{theorem}\label{thm:bib}
For every signed digraph $G$, 
\begin{itemize}
  \item if $G$ has no cycle, then $G$ is converging and fixing \cite{R95};
  \item if $G$ has no positive cycle, then $G$ is converging \cite{RC07};
  \item if $G$ has no negative cycle, then $G$ is fixing \cite{R10};
  \item if $G$ has a linear cut, then $G$ is trapping \cite{NRT22}.
\end{itemize}
\end{theorem}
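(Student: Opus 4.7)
The statement collects four results from the literature, so the plan is to outline the driving idea of each and defer the technical details to the cited papers.

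For item one (no cycle $\Rightarrow$ converging and fixing), proceed by induction on $|V|$. An acyclic $G$ has a source vertex $i$ with no in-arcs, so $f_i$ is a constant $c\in\B$. Along any trajectory of $\Gamma(f)$, direction $i$ can be used at most once, precisely when $x_i\neq c$, after which $x_i=c$ forever. Thus $\{x:x_i=c\}$ is a trap space reached from every configuration in at most one flip of $i$, and the restriction of $f$ to it has interaction graph $G\setminus\{i\}$, acyclic with one fewer vertex. The inductive hypothesis then yields a unique attractor of size one.

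For item two (no positive cycle $\Rightarrow$ converging), the hypothesis allows a conjugation $g(x)=f(x+\sigma)+\sigma$ (for a suitable $\sigma\in\B^V$) making $g$ antitone, so that $g\circ g$ is monotone. Since this conjugation is an isomorphism of asynchronous graphs it suffices to show that $\Gamma(g)$ is converging. The strategy of \cite{RC07} is to exhibit, for any two configurations, a common configuration reachable from both in $\Gamma(g)$, exploiting the monotonicity of $g\circ g$. Organising the asynchronous update schedule so that the common configuration is truly reached, rather than merely being a potential meeting point, is the main subtlety.

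For item three (no negative cycle $\Rightarrow$ fixing), the analogous sign change makes $f$ monotone, so one may assume $f$ is monotone throughout. The proof in \cite{R10} goes by induction on $|V|$: inside any attractor $A$ one identifies a component $i$ that takes a constant value on $A$, restricts to the corresponding trap space, and invokes the induction hypothesis on the smaller monotone subnetwork to conclude that $A$ is a single fixed point. The main obstacle is producing the stabilising component; the no-negative-cycle assumption (equivalently, monotonicity after sign change) is exactly what rules out attractors in which every coordinate oscillates.

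For item four (linear cut $\Rightarrow$ trapping), follow \cite{NRT22}: the linear cut provides a set of relay vertices of in- and out-degree one that separates the graph into an upstream and a downstream part, which interact only through the relays. This decomposition lets one argue that $\langle A\rangle=[A]$ for each attractor $A$, by showing that any coordinate free to oscillate in $[A]$ is in fact free to oscillate all the way across the cut, and that two distinct attractors already differ on the upstream part and so cannot share a subspace. The principal technical difficulty is the bookkeeping of which coordinates are fixed in $[A]$, in $\langle A\rangle$, and inside the upstream and downstream subnetworks simultaneously.
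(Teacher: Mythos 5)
The paper offers no proof of this theorem: it is a list of results quoted verbatim from the literature, so your decision to defer the details to the citations matches the authors' treatment. Your sketch of the first item (induction on a source vertex whose local function is constant, i.e.\ Robert's argument, cf.\ \cref{lem:R95}) is correct, and your description of the fourth item is a fair paraphrase of the linear-cut argument of \cite{NRT22}.

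The sketches of items two and three, however, rest on a false premise. You claim that ``no positive cycle'' permits a conjugation $g(x)=f(x+\sigma)+\sigma$ making $g$ antitone, and that ``no negative cycle'' permits one making $f$ monotone. Such a conjugation is a switch, and switching preserves the signs of cycles, so it can produce a full-negative (resp.\ full-positive) network only under much stronger hypotheses than the absence of positive (resp.\ negative) cycles. Concretely: a $2$-cycle consisting of one positive and one negative arc is a negative cycle and the digraph has no positive cycle, yet no switch makes all arcs negative, since a full-negative $2$-cycle is \emph{positive}; and the digraph with both a positive and a negative arc from $1$ to $2$ has no cycle at all (hence no negative cycle) but is not switch-equivalent to a full-positive digraph, because its symmetric version has a negative cycle --- this is exactly the content of \cref{pro:harary}, which requires $G^s$ (not $G$) to have no negative cycle, a condition equivalent to the one on $G$ only when $G$ is strong. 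The actual proofs in \cite{RC07} and \cite{R10} are combinatorial. Converging follows from \cref{lem:A08}: given two distinct attractors $A,B$, pick $x\in A$, $y\in B$ minimizing $d(x,y)$; minimality forces $f_i(x)=x_i$ and $f_i(y)=y_i$ for all $i\in\Delta(x,y)$ (otherwise an asynchronous step inside an attractor would decrease the distance), so $G[\Delta(x,y)]$ carries a positive cycle. Fixing follows from \cref{lem:R10}, whose proof extracts a negative cycle from a cyclic attractor via the walk argument of \cref{lem:R10b}; no monotonicity reduction is available or used. As written, your arguments for these two items would not go through.
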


For some proofs in this paper we will rely on the following results, which imply the second and third points of the previous theorem. 

\begin{lemma}[\cite{A08,RC07}]\label{lem:A08}
Let $f\in F(G)$ and let $x,y$ be two configurations on $V$ such that $f_i(x)=x_i$ and $f_i(y)=y_i$ for all $i\in\Delta(x,y)$. Then $G[\Delta(x,y)]$ has a positive cycle $C$, and if $C$ contains an arc from $j$ to $i$ then the sign of this arc is $(y_j-x_j)(y_i-x_i)$.
\end{lemma}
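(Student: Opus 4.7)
The plan is to construct a map $\phi : I \to I$, where $I = \Delta(x,y)$, so that each pair $(\phi(i), i)$ is an arc of $G[I]$ of sign $(y_{\phi(i)} - x_{\phi(i)})(y_i - x_i)$, and then to extract the desired cycle from any cycle of the functional graph of $\phi$.

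For each $i \in I$, I would fix an arbitrary sequence of configurations $x = z^0, z^1, \dots, z^{|I|} = y$ in which each $z^{t+1}$ is obtained from $z^t$ by flipping a single coordinate $j_t \in I$ (from $x_{j_t}$ to $y_{j_t}$). Since $f_i(z^0) = x_i$ and $f_i(z^{|I|}) = y_i$, there is a first step $t_i$ at which $f_i$ switches from $x_i$ to $y_i$; I set $\phi(i) := j_{t_i}$. At that step, $z^{t_i}$ is obtained from $z^{t_i - 1}$ by flipping only $\phi(i)$, and $f_i$ changes from $x_i$ to $y_i$. Comparing with the $0 \to 1$ convention used to define arc signs, one then checks that $G[I]$ must carry an arc from $\phi(i)$ to $i$ of sign $(y_{\phi(i)} - x_{\phi(i)})(y_i - x_i)$: the factor $y_{\phi(i)} - x_{\phi(i)} \in \{\pm 1\}$ records whether $\phi(i)$ was flipped from $0$ to $1$ or from $1$ to $0$, and $y_i - x_i \in \{\pm 1\}$ records the direction in which $f_i$ moved.

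Since $\phi : I \to I$ is an endofunction of a finite set, its functional graph contains a cycle $i_1, i_2 = \phi(i_1), \dots, i_k = \phi(i_{k-1})$ with $\phi(i_k) = i_1$. Reversing orientations yields a cycle $C : i_1 \to i_k \to \cdots \to i_2 \to i_1$ in $G[I]$ whose arcs have the signs prescribed by the lemma, and the product of these signs telescopes:
$$\prod_{t=1}^{k} (y_{i_t} - x_{i_t})(y_{i_{t+1}} - x_{i_{t+1}}) = \prod_{t=1}^{k} (y_{i_t} - x_{i_t})^2 = +1,$$
so $C$ is a positive cycle, as required. I expect the only delicate point to be the sign bookkeeping in the middle paragraph (matching the flip $x_j \to y_j$ with the $0 \to 1$ convention for arc signs, case by case on $x_j$ and $x_i$); the existence of the ``first transition'' step and of a cycle in the functional graph of $\phi$ are immediate.
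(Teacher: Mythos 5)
The paper does not prove this lemma---it imports it directly from \cite{A08,RC07}---so there is no in-paper argument to compare against; your proof is correct and is essentially the standard argument from those references. The construction of the endofunction $\phi$ via the first switching step along a coordinate-by-coordinate path from $x$ to $y$, the sign verification (both cases $x_{\phi(i)}=0$ and $x_{\phi(i)}=1$ check out against the paper's $0\to 1$ convention for arc signs, each yielding sign $(y_{\phi(i)}-x_{\phi(i)})(y_i-x_i)$), and the telescoping product over a cycle of $\phi$ are all sound; note only that $\phi(i)=i$ is possible, in which case the resulting positive loop is itself the required cycle, so the argument still goes through.
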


\begin{lemma}[\cite{R10}]\label{lem:R10}
Let $f\in F(G)$ and suppose that $\Gamma(f)$ has an attractor $A$ of size at least two. Then $G[\Delta(A)]$ has a negative cycle.
\end{lemma}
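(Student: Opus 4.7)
I argue the contrapositive: assume $\Gamma(f)$ has an attractor $A$ of size at least two, and produce a negative cycle in $G[I]$ where $I=\Delta(A)$. As a first step, replace $f$ by the subnetwork $h\in F(I)$ induced by $[A]$: after identifying $[A]$ with $\B^I$, $A$ is still an attractor of $\Gamma(h)$ of size at least two, now with $[A]=\B^I$, and $G(h)$ is a subgraph of $G(f)[I]$, so it suffices to produce a negative cycle in $G(h)$. I then proceed by induction on $|I|$. The base $|I|=1$ is immediate: $A=\{0,1\}$, and the arcs $0\to 1$, $1\to 0$ in $\Gamma(h)$ force $h(0)=1$, $h(1)=0$, yielding a negative self-loop in $G(h)$.

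For the inductive step ($|I|\ge 2$), suppose for contradiction that $G(h)$ has no negative cycle, and split according to whether $G(h)$ is strongly connected. If not, pick a proper initial strong component $U\subsetneq I$. Since arcs into $U$ originate only in $U$, $h_i$ for $i\in U$ depends only on $x_U$, so the restriction $h|_U\in F(U)$ is well-defined and each $\Gamma(h)$-arc in direction $U$ corresponds to a $\Gamma(h|_U)$-arc. The image $A'\subseteq\B^U$ of $A$ under the projection forgetting coordinates outside $U$ is closed under $\Gamma(h|_U)$ (because $A$ is a trap set) and strongly connected in $\Gamma(h|_U)$ (as the image of a strongly connected set), hence it is an attractor of $\Gamma(h|_U)$; it has size at least two since $\Delta(A)=I\supseteq U$ guarantees some coordinate in $U$ oscillates. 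By induction, $G(h|_U)[\Delta(A')]\subseteq G(h)$ contains a negative cycle, contradicting the hypothesis.

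If instead $G(h)$ is strongly connected, I use the classical sign-switching argument: the absence of negative cycles makes the linear system $\sigma_i+\sigma_j\equiv\epsilon_{ji}\pmod 2$ (one equation per arc $j\to i$, with $\epsilon_{ji}\in\B$ encoding its sign) consistent, and strong connectivity guarantees a solution $\sigma\in\B^I$. A direct computation then shows that the translated BN $g(y)=h(y+\sigma)+\sigma$ has interaction graph $G(g)$ with every arc positive, i.e., $g$ is monotone, and that $y\mapsto y+\sigma$ is a graph isomorphism $\Gamma(h)\to\Gamma(g)$; so $A+\sigma$ is an attractor of $\Gamma(g)$ of size at least two. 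To close, pick $x$ maximal for $\le$ in $A+\sigma$. Since $|A+\sigma|\ge 2$, $x$ is not a fixed point, so there is an outgoing arc $x\to x+e_i$ in $\Gamma(g)$; terminality of $A+\sigma$ together with maximality of $x$ force this arc to be decreasing, i.e., $x_i=1$ and $g_i(x)=0$. By strong connectivity of $A+\sigma$, a return path from $x+e_i$ to $x$ inside $A+\sigma$ must flip coordinate $i$ back from $0$ to $1$ at some state $w\in A+\sigma$ with $w\le x$, $w_i=0$ and $g_i(w)=1$. Monotonicity of $g$ gives the contradiction $1=g_i(w)\le g_i(x)=0$.

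The main obstacle is the switching step in the strongly connected case: solving the sign-potential system and checking that the translated BN is monotone. Splitting the inductive step into the strongly connected and non-strongly-connected cases is essential, because without strong connectivity the sign-potential equation may be inconsistent or leave some arcs negative (those lying on no directed cycle), preventing $g$ from being monotone. Once strong connectivity is in place, every arc lies on a cycle and the hypothesis ``no negative cycle'' yields a consistent solution, making the monotonicity argument go through.
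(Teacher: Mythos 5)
First, note that the paper does not prove this lemma: it is imported from \cite{R10} and used as a black box, so there is no in-paper proof to compare against. Judged on its own, your strategy is sound and is essentially the classical one: restrict to the subnetwork induced by $[A]$, peel off a proper initial strong component by induction when the interaction graph is not strong, and in the strong case use Harary's potential/switch argument (\cref{pro:harary}, \cref{pro:BN_switch}) to reduce to a monotone network and derive a contradiction from an extremal configuration of the attractor. All of the reductions check out: $A$ remains an attractor of the induced subnetwork with $\Delta(A)$ equal to the full component set; the projection of $A$ onto an initial strong component $U$ is a strongly connected trap set, hence an attractor, of size at least two because $U\subseteq\Delta(A)$; and strong connectivity together with the absence of negative directed cycles does make the sign-potential system consistent (this is exactly the second half of \cref{pro:harary}).

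There is, however, one genuine gap in the final step: you assert that the state $w$ on the return path (where coordinate $i$ flips back from $0$ to $1$) satisfies $w\le x$, but maximality of $x$ in $A+\sigma$ only says no element of $A+\sigma$ is strictly above $x$; it does not say every element is below $x$, and the return path may well pass through states incomparable with $x$. Without $w\le x$ the monotonicity contradiction $1=g_i(w)\le g_i(x)=0$ does not fire. The repair is short and is exactly claims (1)--(2) in the paper's \cref{lem:monotone_union}: since $x$ is maximal in the trap set $A+\sigma$, every arc of $\Gamma(g)$ leaving $x$ is decreasing, and monotonicity of $g$ then shows that $[\ZERO,x]$ is a trap space; as $A+\sigma$ is an inclusion-minimal trap set meeting $[\ZERO,x]$, we get $A+\sigma\subseteq[\ZERO,x]$, so in particular $w\le x$ and your contradiction goes through. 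With that one line added, the proof is complete.
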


By combining the two theorems we can already state a result about non-intersecting positive and negative cycles in trap-separating networks that have at least two attractors and at least one cyclic attractor.

\begin{proposition}\label{pro:disjoint-opposite}
Let $f\in F(G)$. If $\Gamma(f)$ is trap-separating but neither converging nor fixing, then $G$ has vertex-disjoint cycles of distinct sign.
\end{proposition}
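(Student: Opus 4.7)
My plan is to use the ``not fixing'' hypothesis to get a negative cycle from Lemma~\ref{lem:R10}, and the ``trap-separating plus not converging'' hypothesis to get a vertex-disjoint positive cycle from Lemma~\ref{lem:A08}, after building a suitable pair of pseudo-fixed points from the two disjoint trap spaces.

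First I will extract, from ``not fixing'', an attractor $A$ with $|A|\ge 2$; Lemma~\ref{lem:R10} then gives a negative cycle inside $G[\Delta(A)]$. Next, ``not converging'' provides a second attractor $B\neq A$, and trap-separation gives $\langle A\rangle\cap\langle B\rangle=\emptyset$. Since $A\subseteq\langle A\rangle$ implies $\Delta(A)\subseteq\Delta(\langle A\rangle)$, it suffices to produce a positive cycle inside $G[J_A]$, where $J_A:=V\setminus\Delta(\langle A\rangle)$: such a cycle is automatically vertex-disjoint from the negative cycle above.

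The heart of the proof is the construction of a pair $(x^*,y^*)$ suitable for Lemma~\ref{lem:A08}. Let $J_B:=V\setminus\Delta(\langle B\rangle)$, and let $c^A$ and $c^B$ be the constant values to which $\langle A\rangle$ and $\langle B\rangle$ fix their respective coordinates. The disjointness $\langle A\rangle\cap\langle B\rangle=\emptyset$ is equivalent to $D:=\{i\in J_A\cap J_B:c^A_i\neq c^B_i\}$ being nonempty. Using the partition of $V$ into the four pieces $J_A\cap J_B$, $J_A\cap\Delta(\langle B\rangle)$, $\Delta(\langle A\rangle)\cap J_B$, and $\Delta(\langle A\rangle)\cap\Delta(\langle B\rangle)$, I will define $x^*$ and $y^*$ to coincide everywhere outside $D$ while respecting the defining constraints of the two trap spaces: put $x^*_i=y^*_i=c^B_i$ on $\Delta(\langle A\rangle)\cap J_B$, $x^*_i=y^*_i=c^A_i$ on $J_A\cap\Delta(\langle B\rangle)$, $x^*_i=y^*_i$ equal to any common value on $\Delta(\langle A\rangle)\cap\Delta(\langle B\rangle)$, and $x^*_i=c^A_i$, $y^*_i=c^B_i$ on $J_A\cap J_B$. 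A direct check then yields $x^*\in\langle A\rangle$, $y^*\in\langle B\rangle$, and $\Delta(x^*,y^*)=D\subseteq J_A\cap J_B$.

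To conclude, the trap-space property of $\langle A\rangle$ and $\langle B\rangle$ forces $f_i(x^*)=x^*_i$ for every $i\in J_A$ and $f_i(y^*)=y^*_i$ for every $i\in J_B$, so both equalities hold throughout $\Delta(x^*,y^*)\subseteq J_A\cap J_B$. Lemma~\ref{lem:A08} then delivers a positive cycle inside $G[\Delta(x^*,y^*)]\subseteq G[J_A]$, vertex-disjoint from the negative cycle in $G[\Delta(A)]$. The main obstacle I expect is spotting the symmetric construction of the pair $(x^*,y^*)$; the key observation that unlocks it is that ``$x^*\in\langle A\rangle$'', ``$y^*\in\langle B\rangle$'', and ``$\Delta(x^*,y^*)\subseteq J_A$'' are simultaneously satisfiable precisely because the disjointness of the two trap spaces forces their defining values to conflict only on $J_A\cap J_B$.
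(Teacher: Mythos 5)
Your proposal is correct and follows essentially the same route as the paper: a negative cycle in $G[\Delta(A)]$ from \cref{lem:R10}, a positive cycle from \cref{lem:A08} applied to a pair of configurations in the two disjoint trap spaces that agree outside the conflicting fixed coordinates, and disjointness because $\Delta(A)\subseteq\Delta(\langle A\rangle)$. The only (cosmetic) difference is that the paper obtains such a pair by minimizing the Hamming distance over $\langle A\rangle\times\langle B\rangle$, whereas you construct it explicitly.
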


\begin{proof}
Suppose that $\Gamma(f)$ is trap-separating but neither converging nor fixing. Then it has two attractors $A,B$ with $|A|>1$ and disjoint trap spaces $X,Y$ with $A\subseteq X$ and $B\subseteq Y$. Consider configurations $x \in X$ and $y \in Y$ which minimize $d(x,y)$. Then $x$ and $y$ are fixed points for the subnetwork of $f$ induced by $[x,y]$. Hence $G[\Delta(x,y)]$ has a positive cycle by \cref{lem:A08}. By \cref{lem:R10}, $G[\Delta(A)]$ has a negative cycle. Since $\Delta(x,y)\cap \Delta(A)=\emptyset$ this proves the proposition.
\end{proof}

The conclusion of the proposition does not hold if we replace trap-separating with separating:
the following presents an example of asynchronous graph that is separating but not converging nor fixing, while its corresponding signed interaction graph does not have disjoint positive or negative cycles.

\begin{example}\label{ex:sep-not-conv-fix}
The BN $f\in F(5)$ defined by $f_1(x)=x_4 x_5 \lor \bar x_4 \bar x_5$, $f_2(x)=x_1 \bar x_5 \lor x_5 \bar x_1$, $f_3(x)=x_2 \bar x_5 \lor x_5 \bar x_2$, $f_4(x)=x_3 \bar x_5 \lor x_5 \bar x_3$, $f_5(x)=x_1 x_3 \bar x_2 \lor x_1 x_4 \bar x_3 \lor x_2 \bar x_1 \bar x_3 \lor x_3 \bar x_1 \bar x_4$ has two cyclic attractors $A$ and $B$, with $[A]=\{x_5=0\}$ and $[B]=\{x_5=1\}$.
In addition, $\Gamma(f)$ has an arc from $00001$ to $00000$ and an arc from $10100$ to $10101$.
So $\Gamma(f)$ is separating but not converging, not fixing and not trap-separating. $G(f)$ does not have a positive and a negative cycle that are disjoint.
\[
\begin{array}{c}
\begin{tikzpicture}
\pgfmathparse{1}
\node (00000) at (0,0){{\boldmath \textcolor{Blue}{$00000$} \unboldmath}};
\node (00100) at (1,1){$00100$};
\node (01000) at (0,2){$01000$};
\node (01100) at (1,3){$01100$};
\node (10000) at (2,0){{\boldmath \textcolor{Blue}{$10000$} \unboldmath}};
\node (10100) at (3,1){$10100$};
\node (11000) at (2,2){{\boldmath \textcolor{Blue}{$11000$} \unboldmath}};
\node (11100) at (3,3){{\boldmath \textcolor{Blue}{$11100$} \unboldmath}};
\node (00010) at (5,0){{\boldmath \textcolor{Blue}{$00010$} \unboldmath}};
\node (00110) at (6,1){{\boldmath \textcolor{Blue}{$00110$} \unboldmath}};
\node (01010) at (5,2){$01010$};
\node (01110) at (6,3){{\boldmath \textcolor{Blue}{$01110$} \unboldmath}};
\node (10010) at (7,0){$10010$};
\node (10110) at (8,1){$10110$};
\node (11010) at (7,2){$11010$};
\node (11110) at (8,3){{\boldmath \textcolor{Blue}{$11110$} \unboldmath}};
\node (00001) at (0-2.5,0-5){$00001$};
\node (00101) at (1-2.5,1-5){{\boldmath \textcolor{Plum}{$00101$} \unboldmath}};
\node (01001) at (0-2.5,2-5){{\boldmath \textcolor{Plum}{$01001$} \unboldmath}};
\node (01101) at (1-2.5,3-5){{\boldmath \textcolor{Plum}{$01101$} \unboldmath}};
\node (10001) at (2-2.5,0-5){$10001$};
\node (10101) at (3-2.5,1-5){{\boldmath \textcolor{Plum}{$10101$} \unboldmath}};
\node (11001) at (2-2.5,2-5){$11001$};
\node (11101) at (3-2.5,3-5){$11101$};
\node (00011) at (5-2.5,0-5){$00011$};
\node (00111) at (6-2.5,1-5){$00111$};
\node (01011) at (5-2.5,2-5){{\boldmath \textcolor{Plum}{$01011$} \unboldmath}};
\node (01111) at (6-2.5,3-5){$01111$};
\node (10011) at (7-2.5,0-5){{\boldmath \textcolor{Plum}{$10011$} \unboldmath}};
\node (10111) at (8-2.5,1-5){{\boldmath \textcolor{Plum}{$10111$} \unboldmath}};
\node (11011) at (7-2.5,2-5){{\boldmath \textcolor{Plum}{$11011$} \unboldmath}};
\node (11111) at (8-2.5,3-5){$11111$};
\path[thick,->,draw,black]
(00000) edge[ultra thick,Blue] (10000)
(00001) edge[Gray,bend right=15] (00011)
(00001) edge (01001)
(00001) edge (00101)
(00001) edge[Gray,bend left=15] (00000)
(00010) edge[ultra thick,Blue,bend left=15] (00000)
(00011) edge (00111)
(00011) edge (01011)
(00011) edge (10011)
(00011) edge[Gray,bend left=15] (00010)
(00100) edge (10100)
(00100) edge[Gray,bend right=15] (00101)
(00100) edge[Gray,bend right=15] (00110)
(00100) edge (00000)
(00101) edge[ultra thick,Purple] (01101)
(00110) edge[ultra thick,Blue] (00010)
(00111) edge (10111)
(00111) edge[Gray,bend left=15] (00101)
(00111) edge[Gray,bend left=15] (00110)
(00111) edge (01111)
(01000) edge[Gray,bend right=15] (01001)
(01000) edge (01100)
(01000) edge (11000)
(01000) edge (00000)
(01001) edge[ultra thick,Purple,bend left=15] (01011)
(01010) edge[Gray,bend left=15] (01000)
(01010) edge[Gray,bend right=15] (01011)
(01010) edge (01110)
(01010) edge (00010)
(01011) edge[ultra thick,Purple] (11011)
(01100) edge (00100)
(01100) edge[Gray,bend right=15] (01101)
(01100) edge[Gray,bend left=15] (01110)
(01100) edge (11100)
(01101) edge[ultra thick,Purple] (01001)
(01110) edge[ultra thick,Blue] (00110)
(01111) edge (11111)
(01111) edge[Gray,bend right=15] (01101)
(01111) edge (01011)
(01111) edge[Gray,bend left=15] (01110)
(10000) edge[ultra thick,Blue] (11000)
(10001) edge[Gray,bend left=15] (10000)
(10001) edge (10101)
(10001) edge[Gray,bend right=15] (10011)
(10001) edge (00001)
(10010) edge[Gray,bend left=15] (10000)
(10010) edge (11010)
(10010) edge[Gray,bend left=15] (10011)
(10010) edge (00010)
(10011) edge[ultra thick,Purple] (10111)
(10100) edge (10000)
(10100) edge[Gray,bend right=15] (10110)
(10100) edge (11100)
(10100) edge[Gray,bend left=15] (10101)
(10101) edge[ultra thick,Purple] (00101)
(10110) edge (11110)
(10110) edge[Gray,bend left=15] (10111)
(10110) edge (10010)
(10110) edge (00110)
(10111) edge[ultra thick,Purple,bend left=15] (10101)
(11000) edge[ultra thick,Blue] (11100)
(11001) edge[Gray,bend left=15] (11011)
(11001) edge (01001)
(11001) edge (10001)
(11001) edge[Gray,bend right=15] (11000)
(11010) edge (11110)
(11010) edge[Gray,bend right=15] (11000)
(11010) edge (01010)
(11010) edge[Gray,bend left=15] (11011)
(11011) edge[ultra thick,Purple] (10011)
(11100) edge[ultra thick,Blue,bend left=15] (11110)
(11101) edge (10101)
(11101) edge (01101)
(11101) edge (11001)
(11101) edge[Gray,bend right=15] (11100)
(11110) edge[ultra thick,Blue] (01110)
(11111) edge (11011)
(11111) edge (10111)
(11111) edge[Gray,bend right=15] (11110)
(11111) edge[Gray,bend right=15] (11101)
;
\end{tikzpicture}
\end{array}
\quad
\begin{array}{c}
\begin{tikzpicture}
\node[outer sep=1,inner sep=2,circle,draw,thick] (1) at (90:1.5){$1$};
\node[outer sep=1,inner sep=2,circle,draw,thick] (2) at ({0}:1.5){$2$};
\node[outer sep=1,inner sep=2,circle,draw,thick] (4) at ({180}:1.5){$4$};
\node[outer sep=1,inner sep=2,circle,draw,thick] (3) at (-90:1.5){$3$};
\node[outer sep=1,inner sep=2,circle,draw,thick] (5) at (0:0){$5$};
\path[->,thick]
(1) edge[Green,bend left=40] (2)
(1) edge[red,bend left=15] (2)
(2) edge[Green,bend left=40] (3)
(2) edge[red,bend left=15] (3)
(3) edge[Green,bend left=40] (4)
(3) edge[red,bend left=15] (4)
(4) edge[Green,bend left=40] (1)
(4) edge[red,bend left=15] (1)
(1) edge[Green,bend left=40] (5)
(1) edge[red,bend left=15] (5)
(2) edge[Green,bend left=40] (5)
(2) edge[red,bend left=15] (5)
(3) edge[Green,bend left=40] (5)
(3) edge[red,bend left=15] (5)
(4) edge[Green,bend left=40] (5)
(4) edge[red,bend left=15] (5)
(5) edge[Green,bend left=40] (1)
(5) edge[red,bend left=15] (1)
(5) edge[Green,bend left=40] (2)
(5) edge[red,bend left=15] (2)
(5) edge[Green,bend left=40] (3)
(5) edge[red,bend left=15] (3)
(5) edge[Green,bend left=40] (4)
(5) edge[red,bend left=15] (4)
;
\end{tikzpicture}
\end{array}
\]
\end{example}

If we are interested in graphs that are separating but not converging nor fixing and do not have disjoint positive or negative cycles, then we can identify examples in smaller dimension.

\begin{example}\label{ex:sep-not-conv-fix-graph}
Consider the following signed digraph $G$, which has exactly one negative and one positive cycle that intersect:
\[
\begin{tikzpicture}
\node[outer sep=1,inner sep=2,circle,draw,thick] (1) at ({180}:1){$1$};
\node[outer sep=1,inner sep=2,circle,draw,thick] (2) at ({0}:1){$2$};
\draw[red,->,thick] (2.{-60}) .. controls ({0-30}:2.8) and ({0+30}:2.8) .. (2.{0+60});
\draw[Green,->,thick] (2.{0-20}) .. controls ({0-20}:2.3) and ({0+20}:2.3) .. (2.{0+20});
\path[->,thick]
(1) edge[red,bend left=25] (2)
(1) edge[Green,bend right=25] (2)
;
\end{tikzpicture}
\]
It is neither converging nor fixing, but is separating, since all BNs in $F(G)$ are either converging or fixing.
Indeed, the asynchronous graphs of the four BNs in $F(G)$ are as follows:
\[
\begin{array}{c}
\begin{tikzpicture}
\pgfmathparse{1}
\node (00) at (0,0){{\boldmath \textcolor{Blue}{$00$} \unboldmath}};
\node (01) at (0,1.5){{\boldmath \textcolor{Blue}{$01$} \unboldmath}};
\node (10) at (1.5,0){$10$};
\node (11) at (1.5,1.5){$11$};
\path[thick,->,draw,black]
(00) edge[bend left=15,Blue,ultra thick] (01)
(11) edge (01)
(10) edge (00)
(01) edge[bend left=15,Blue,ultra thick] (00)
;
\end{tikzpicture}
\end{array}
\quad
\begin{array}{c}
\begin{tikzpicture}
\pgfmathparse{1}
\node (00) at (0,0){$00$};
\node (01) at (0,1.5){$01$};
\node (10) at (1.5,0){{\boldmath \textcolor{Purple}{$10$} \unboldmath}};
\node (11) at (1.5,1.5){{\boldmath \textcolor{Blue}{$11$} \unboldmath}};
\path[thick,->,draw,black]
(00) edge[bend left=15] (01)
(01) edge (11)
(00) edge (10)
(01) edge[bend left=15] (00)
;
\end{tikzpicture}
\end{array}
\quad
\begin{array}{c}
\begin{tikzpicture}
\pgfmathparse{1}
\node (00) at (0,0){$00$};
\node (01) at (0,1.5){$01$};
\node (10) at (1.5,0){{\boldmath \textcolor{Blue}{$10$} \unboldmath}};
\node (11) at (1.5,1.5){{\boldmath \textcolor{Blue}{$11$} \unboldmath}};
\path[thick,->,draw,black]
(11) edge[bend left=15,Blue,ultra thick] (10)
(01) edge (11)
(00) edge (10)
(10) edge[bend left=15,Blue,ultra thick] (11)
;
\end{tikzpicture}
\end{array}
\quad
\begin{array}{c}
\begin{tikzpicture}
\pgfmathparse{1}
\node (00) at (0,0){{\boldmath \textcolor{Purple}{$00$} \unboldmath}};
\node (01) at (0,1.5){{\boldmath \textcolor{Blue}{$01$} \unboldmath}};
\node (10) at (1.5,0){$10$};
\node (11) at (1.5,1.5){$11$};
\path[thick,->,draw,black]
(11) edge[bend left=15] (10)
(11) edge (01)
(10) edge (00)
(10) edge[bend left=15] (11)
;
\end{tikzpicture}
\end{array}
\]
\end{example}

\subsection{Switches}

An \EM{isometry} on $\B^V$ is a bijection $\pi\colon\B^V\to\B^V$ such that $d(x,y)=d(\pi(x),\pi(y))$ for all $x,y\in\B^V$. Let $\Gamma,\Gamma'$ be two asynchronous graphs on $\B^V$. We say that $\Gamma$ and $\Gamma'$ are isometric if there is an isometry $\pi$ on $\B^V$ such that $x\to y$ is an arc of $\Gamma'$ if and only if $\pi(x)\to \pi(y)$ is an arc of $\Gamma'$. One can easily check that $\Gamma$ and $\Gamma'$ are isometric, then $\Gamma$ is \EM{converging} (resp. \EM{fixing}, \EM{trapping}, \EM{trap-separating}, \EM{separating}) if and only if $\Gamma'$ is converging (resp. fixing, trapping, trap-separating, separating).

Let $I\subseteq V$ and, for all $i\in V$, let $\sigma_I(i)=1$ if $i\in I$ and $\sigma_I(i)=-1$ if otherwise. The \EM{$I$-switch} of $G$ is the signed digraph $\EMM{G^I}=(V,E^I)$ with $\EMM{E^I}=\{(j,i,\sigma_I(j)\cdot s\cdot\sigma_I(i))\mid (j,i,s)\in E\}$; note that $G^I=G^{V\setminus I}$ and $(G^I)^I=G$. We say that $G$ is \EM{switch equivalent} to $H$ if $H=G^I$ for some $I\subseteq V$. Obviously, $G$ and $G^I$ have the same underlying digraph. Note also that $C$ is a cycle in $G$ if and only if $C^I$ is a cycle in $G^I$, and $C$ and $C^I$ have the same sign. Thus if $G$ has no positive (negative) cycles then every switch of $G$ has no positive (negative) cycles: this property is invariant by switch. The \EM{symmetric version} of $G$ is the signed digraph $\EMM{G^s}=(V,E^s)$ where $\EMM{E^s}=E\cup \{(i,j,s)\mid (j,i,s)\in E\}$. A well-known result concerning switch is the following adaptation of Harary's theorem \cite{H53}.

\begin{proposition}\label{pro:harary}
A signed digraph $G$ is switch equivalent to a full-positive signed digraph if and only if $G^s$ has no negative cycle. Furthermore, if $G$ is strong, then $G^s$ has no negative cycle if and only if $G$ has no negative cycle.
\end{proposition}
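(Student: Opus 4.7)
The plan is to follow the classical Harary-style argument adapted to signed digraphs. For the first equivalence, one direction is routine: if $G=H^I$ with $H$ full-positive, then a direct computation from the definitions of the symmetric version and of the switch gives $G^s=(H^s)^I$; since $H^s$ is full-positive and switching preserves the sign of every cycle, every cycle of $G^s$ is positive.

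For the converse, I would construct the switching set $I$ by a parity argument inside $G^s$. In each connected component of $G^s$ (which is the same as a strong component, since $G^s$ is symmetric) I fix a base vertex $v_0$, and declare $v\in I$ if and only if some directed path from $v_0$ to $v$ in $G^s$ has sign $-1$. The key point is that this assignment is well-defined, i.e.\ all directed paths from $v_0$ to $v$ in $G^s$ share the same sign: otherwise, gluing one such path with the reversal of another (available in $G^s$ thanks to symmetry) produces a closed walk of sign $-1$, and decomposing this walk into simple cycles yields at least one cycle of $G^s$ of sign $-1$, contradicting the hypothesis. Once $I$ is defined, every arc $(j,i,s)$ of $G$ also lies in $G^s$ and satisfies $\mathrm{sign}(v_0\to j)\cdot s=\mathrm{sign}(v_0\to i)$, whence $s=\sigma_I(j)\sigma_I(i)$, so $G^I$ is full-positive.

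For the ``furthermore'' part, the implication from $G^s$ having no negative cycle to $G$ having no negative cycle is immediate and requires no strong connectivity, since every cycle of $G$ is a cycle of $G^s$ with the same sign. Conversely, assuming $G$ strong and without negative cycle, I would re-run the previous construction but using directed paths of $G$ itself, which exist between any pair of vertices by strong connectivity. Well-definedness of $I$ now relies on the observation that two directed paths from $v_0$ to $v$ in $G$ with opposite signs, each completed by a directed return path in $G$, produce a closed walk of $G$ of sign $-1$, which must contain a negative cycle of $G$ -- contradiction. This yields $H:=G^I$ full-positive, and the first part of the proposition then gives $G^s=(H^s)^I$, a switch of a full-positive symmetric digraph, which has no negative cycle.

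The main obstacle is the well-definedness of $I$ in both constructions, which rests on the standard but slightly delicate lemma that any closed walk of sign $-1$ in a signed digraph decomposes into simple cycles whose signs multiply to $-1$, and therefore contains at least one negative cycle. Everything else is a careful bookkeeping of signs under the operations $G\mapsto G^I$ and $G\mapsto G^s$, which commute and both preserve cycle signs.
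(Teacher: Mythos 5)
Your proposal is correct. Note that the paper itself gives no proof of this proposition: it is stated as a known adaptation of Harary's theorem with a citation to [H53], so there is no in-paper argument to compare against. Your argument is the standard one (a potential/parity function defined from a base vertex, well-defined because a negative closed walk would decompose into arc-disjoint simple cycles of which at least one must be negative), and it correctly handles both directions of the first equivalence, the commutation $(H^I)^s=(H^s)^I$, and the ``furthermore'' part via the return paths supplied by strong connectivity.

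One small point you should make explicit: to conclude that every arc $(j,i,s)$ satisfies $s=\sigma_I(j)\sigma_I(i)$, you append the arc to a path from $v_0$ to $j$, and the result is in general only a \emph{walk} from $v_0$ to $i$ (the path may already pass through $i$). So the well-definedness claim you need is that all walks, not just all paths, from $v_0$ to a given vertex have the same sign. This follows by exactly the same gluing-and-decomposition argument you already use (any two walks with opposite signs, closed up by a reverse path in $G^s$, respectively by a return path in $G$, yield a negative closed walk and hence a negative cycle), so it is a one-line strengthening rather than a gap, but as written the statement about paths does not quite suffice for the final computation.
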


There is an analogue of the switch operation for BNs. Let $f\in F(V)$ and $I\subseteq V$. The \EM{$I$-switch} of $f$ is the BN $h\in F(V)$ defined by $h(x)=f(x+e_I)+e_I$ for all configurations $x$ on $V$; note that if $h$ is the $I$-switch of $f$ then $f$ is the $I$-switch of $h$. The analogy comes from the first point of the following easy property.

\begin{proposition}\label{pro:BN_switch}
If $h$ is the $I$-switch of $f$, then
\begin{itemize}
\item $G(h)$ is the $I$-switch of $G(f)$,
\item $\Gamma(h)$ is isometric to $\Gamma(f)$, with the isometry $x\mapsto x+e_I$.
\end{itemize}
\end{proposition}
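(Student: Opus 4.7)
The plan is to unfold both definitions and verify each bullet by a direct computation, treating the two components (interaction graph and asynchronous graph) separately. The key algebraic observation is the pointwise identity
\[
h_i(x) \;=\; f_i(x+e_I) \;\oplus\; (e_I)_i,
\]
where $\oplus$ denotes addition modulo $2$, since the $I$-switch of $f$ is defined by $h(x)=f(x+e_I)+e_I$. From this one gets, for all $i,j\in V$ and all configurations $a,b$,
\[
(a\oplus (e_I)_i)-(b\oplus (e_I)_i)=(-1)^{(e_I)_i}(a-b),
\]
so setting $y=x+e_I$ yields the fundamental formula
\[
h_i(x+e_j)-h_i(x)\;=\;(-1)^{(e_I)_i}\bigl[f_i(y+e_j)-f_i(y)\bigr].
\]

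For the first point, I would take an arc $(j,i,s)$ of $G(h)$, witnessed by some $x$ with $x_j=0$, and use the formula above to produce an arc of $G(f)$. The only subtlety is that one must pass to the point where the appropriate $j$-coordinate equals $0$: if $j\notin I$ then $y_j=0$, so $f_i(y+e_j)-f_i(y)$ witnesses an arc of $G(f)$ directly; if $j\in I$ then $y_j=1$, and one must set $y'=y+e_j$, giving $f_i(y+e_j)-f_i(y)=-[f_i(y'+e_j)-f_i(y')]$. Combining the factor $(-1)^{(e_I)_i}$ from the formula with the sign $(-1)^{[j\in I]}$ from the switch of variable at $j$ gives exactly $\sigma_I(j)\,\sigma_I(i)$, so the arc of $G(h)$ has sign $\sigma_I(j)\cdot s'\cdot\sigma_I(i)$ where $s'$ is the sign of the corresponding arc of $G(f)$. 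This proves $G(h)\subseteq G(f)^I$; the reverse inclusion follows by symmetry, since the $I$-switch operation is its own inverse (so $f$ is also the $I$-switch of $h$).

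For the second point, I would first note that $\pi(x)=x+e_I$ is clearly a bijection on $\B^V$, and that $d(\pi(x),\pi(y))=|\{i:x_i+(e_I)_i\neq y_i+(e_I)_i\bmod 2\}|=d(x,y)$, so $\pi$ is an isometry. Next, $\pi(x+e_i)=\pi(x)+e_i$, so $\pi$ preserves directions of arcs. It remains to verify that $h_i(x)\neq x_i$ iff $f_i(\pi(x))\neq\pi(x)_i$: from $h_i(x)=f_i(x+e_I)\oplus(e_I)_i$ we read off
\[
h_i(x)\neq x_i\iff f_i(x+e_I)\oplus(e_I)_i\neq x_i\iff f_i(x+e_I)\neq x_i\oplus(e_I)_i=\pi(x)_i,
\]
which is exactly the arc condition for $\Gamma(f)$ at $\pi(x)$ in direction $i$.

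The only ``obstacle'' is bookkeeping: keeping track of whether $i\in I$ contributes a sign via $(-1)^{(e_I)_i}$ and whether $j\in I$ forces a change of base point. Once the fundamental formula above is written down, both points reduce to routine case analysis, consistent with the ``easy property'' label in the statement.
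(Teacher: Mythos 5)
Your proof is correct. The paper states this proposition as an ``easy property'' and gives no proof at all, so there is nothing to compare against; your direct unfolding of the definitions --- the identity $h_i(x+e_j)-h_i(x)=(-1)^{(e_I)_i}\bigl[f_i(y+e_j)-f_i(y)\bigr]$ with $y=x+e_I$, the change of base point when $j\in I$, the involution argument for the reverse inclusion, and the verification that $x\mapsto x+e_I$ is a direction-preserving isometry matching the arc conditions --- is exactly the intended verification, and the sign bookkeeping ($(-1)^{(e_I)_i+(e_I)_j}=\sigma_I(j)\sigma_I(i)$) checks out.
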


\section{Intersections between positive and negative cycles}\label{sec:intersection_positive_negative}

What can we say about non-separating and non-trap-separating signed digraphs? If $G$ is non-separating or non-trap-separating then it is both non-fixing and non-converging. Hence, by \cref{thm:bib}, $G$ has both a positive and a negative cycle. Can we say something more? In this section, we provide the following answers: if $G$ is non-separating then it has intersecting cycles with opposite signs, and if it is non-trap-separating, then it has a path from a negative cycle to a positive cycle. 

\begin{theorem}\label{thm:sep}
If $G$ has no two intersecting cycles with opposite signs, then $G$ is separating.
\end{theorem}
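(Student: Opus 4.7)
I plan to proceed by induction on $|V|$, based on the strongly-connected-component structure of $G$.

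\emph{Base case} ($G$ strong): the hypothesis forces $G$ to contain cycles of at most one sign. Otherwise, the sets $V^+$ (vertices on some positive cycle) and $V^-$ (vertices on some negative cycle) are both nonempty and disjoint; strong connectivity then produces an arc $p \to q$ with $p \in V^+, q \in V^-$, together with a simple path from $q$ back to $p$, yielding a simple cycle through $p$ and $q$ which would have to be both positive (since $p \notin V^-$) and negative (since $q \notin V^+$)---a contradiction. Hence $G$ has no positive or no negative cycle, and by Theorem~\ref{thm:bib}, $G$ is fixing or converging, in either case separating.

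\emph{Inductive step} ($G$ non-strong): pick an initial strongly connected component $H \subsetneq V$ and write $V' = V \setminus H$. Since $H$ is initial, the dynamics on $H$ are autonomous: there is a well-defined $g \in F(G[H])$ with $g(a_H) = f(a)_H$ for all $a \in \B^V$, and by the base case $g$ is fixing or converging. Let $A, B$ be distinct attractors of $\Gamma(f)$ with $[A] \cap [B] \neq \emptyset$, and set $A_H := \{a_H : a \in A\}$ and $B_H := \{b_H : b \in B\}$; standard arguments show these are attractors of $\Gamma(g)$. If $A_H \neq B_H$, then $g$ must be fixing and $A_H, B_H$ are distinct singleton fixed points, which already separate $[A]$ and $[B]$ on $H$. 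If $A_H = B_H =: \alpha$ and $\alpha$ is a singleton fixed point $\{x_H^*\}$, then $\{x_H^*\} \times \B^{V'}$ is a trap space of $\Gamma(f)$ whose induced subnetwork lies in $F(V')$ with interaction graph a subgraph of $G[V']$ (inheriting the hypothesis); induction on $|V|$ then separates the $V'$-projections of $A$ and $B$. If $\alpha$ is cyclic, we pass to the trap space $\langle\alpha\rangle \times \B^{V'}$; whenever $\langle\alpha\rangle \subsetneq \B^H$, its induced subnetwork has fewer vertices than $G$ and induction applies.

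The remaining case---$\alpha$ cyclic with $\langle\alpha\rangle = \B^H$---is the main obstacle, since no subspace reduction shrinks the graph. Because $H$ is initial, no cycle of $G$ crosses the $H$-$V'$ boundary, and since $G[H]$ has no positive cycle (which is precisely what makes $g$ converging in this sub-case), every positive cycle of $G$ lies inside $V'$. Choosing $x \in A, y \in B$ minimizing $d(x,y)$, minimality together with the trap-set property of $A$ and $B$ forces $f_i(x) = x_i$ and $f_i(y) = y_i$ for every $i \in \Delta(x,y)$, so Lemma~\ref{lem:A08} yields a positive cycle $C$ in $G[\Delta(x,y)] \subseteq G[V']$; using $[A] \cap [B] \neq \emptyset$ to conclude $\Delta(x,y) \subseteq \Delta(A) \cup \Delta(B)$, we can pick $v \in C$ with $v \in \Delta(A) \cap V'$ (swapping $A$ and $B$ if needed). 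The task reduces to exhibiting a negative cycle of $G$ through $v$. Lemma~\ref{lem:R10} applied to $A$ gives one in $G[\Delta(A)]$, but not \emph{a priori} through $v$. The delicate point is to force the negative cycle through $v$: a natural attempt is to choose $a^0, a^1 \in A$ with $a^0_v = 0, a^1_v = 1, f_v(a^0) = 1, f_v(a^1) = 0$ minimizing $d(a^0, a^1)$ under these constraints, consider the auxiliary BN $\tilde f$ obtained from $f$ by setting $\tilde f_v = 1 - f_v$ and $\tilde f_i = f_i$ for $i \neq v$, and invoke Lemma~\ref{lem:A08} on $\tilde f$. A positive cycle of $G(\tilde f)$ through $v$ corresponds to a negative cycle of $G$ through $v$, since $G(\tilde f)$ differs from $G$ only by flipping the signs of arcs into $v$---and this is the crux that still requires care to guarantee that the extracted cycle actually passes through $v$.
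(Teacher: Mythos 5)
Your base case and the reductions in the easy subcases (distinct upstream fixed points; a singleton $\alpha$; a proper trap space $\langle\alpha\rangle\subsetneq\B^H$) are sound and broadly parallel to the paper's decomposition, though you peel off an initial strong component where the paper peels off a terminal one. The genuine gap is the case you yourself flag: $A_H=B_H=\alpha$ cyclic with $\langle\alpha\rangle=\B^H$. Your plan there is to force a negative cycle of $G$ through a vertex $v$ of the positive cycle $C\subseteq G[\Delta(x,y)]$, but the auxiliary-network trick does not deliver this: minimizing $d(a^0,a^1)$ subject to $a^0_v=0$, $a^1_v=1$, $f_v(a^0)=1$, $f_v(a^1)=0$ does not make $a^0,a^1$ fixed points of $\tilde f$ on all of $\Delta(a^0,a^1)$ (flipping a coordinate $i\neq v$ of $a^0$ may destroy the constraint $f_v(\cdot)=1$), and even where \cref{lem:A08} applies it yields a positive cycle of $G(\tilde f)$ somewhere in $G(\tilde f)[\Delta(a^0,a^1)]$ with no reason to pass through $v$; a cycle avoiding $v$ has the same sign in $G$ as in $G(\tilde f)$ and gives nothing. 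Likewise, the negative cycle supplied by \cref{lem:R10} lives in $G[\Delta(A)]$ and may sit entirely inside $H$, disjoint from $C\subseteq V'$, so no contradiction arises that way either.

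What is missing is a tool for exactly this configuration, and it is the heart of the paper's proof. When the upstream attractor $\alpha$ is cyclic, the downstream dynamics is not governed by a single BN on $G[V']$ but by the union $\bigcup_{x\in\alpha}\Gamma(f^x)$ of asynchronous graphs of \emph{different} BNs, each with interaction graph a spanning subgraph of $G[V']$; the plain inductive hypothesis that $G[V']$ is separating is insufficient, as \cref{ex:not-sep,ex:not-sep-2} show (all strong components separating, yet $G$ non-separating). The paper therefore introduces the notion of \emph{robustly} separating, proves that a strong component with only negative cycles is robustly converging (\cref{lem:robustly_converging}) and one with only positive cycles is robustly trapping---via Harary's switch to a full-positive graph and the key lemma that a union of asynchronous graphs of monotone BNs is trapping (\cref{lem:monotone_union})---and then shows via the attractor product decomposition (\cref{lem:decomposition}) that robust separation of every strong component implies separation of $G$ (\cref{lem:decomposition_2}). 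Your proposal has no analogue of the monotone-union argument, and without it the remaining case does not close.
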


\begin{theorem}\label{thm:trap-sep}
If $G$ has no path from a negative cycle to a positive cycle, then $G$ is trap-separating. 
\end{theorem}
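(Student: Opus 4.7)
The plan is to partition $V$ by reachability of positive cycles and reduce to the two classical results of \cref{thm:bib}. Set
\[
W^+ := \{v\in V : G \text{ has a directed path (possibly of length $0$) from $v$ to a positive cycle}\}, \qquad W^- := V\setminus W^+.
\]
From the hypothesis I extract three structural facts. First, $G$ has no arc from $W^-$ to $W^+$: if $u\to v$ and $v\in W^+$ then $u$ reaches a positive cycle through $v$, so $u\in W^+$. Second, $G[W^+]$ has no negative cycle: any cycle in $G[W^+]$ has all its vertices in $W^+$, hence reaches a positive cycle, so a negative such cycle would contradict the hypothesis. Third, $G[W^-]$ has no positive cycle, because by definition every positive cycle lies in $W^+$. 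Predecessor-closure of $W^+$ means that for $i\in W^+$ the coordinate function $f_i$ depends only on inputs in $W^+$, so $f$ induces a well-defined BN $f_{W^+}\colon \B^{W^+}\to\B^{W^+}$ whose signed interaction graph is a subgraph of $G[W^+]$; by \cref{thm:bib}, $\Gamma(f_{W^+})$ is fixing.

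The main step is to show that every attractor $A$ of $\Gamma(f)$ is contained in a subspace $Y_{c^*} := \{x\in\B^V : x_{W^+}=c^*\}$ for some fixed point $c^*$ of $f_{W^+}$. Let $\pi(x) := x_{W^+}$. The image $\pi(A)$ is a non-empty trap set of $\Gamma(f_{W^+})$: if $c=\pi(x)$ with $x\in A$ and $\Gamma(f_{W^+})$ has an arc $c\to c+e_i$, then $f_i(x)=f_i(c)\neq c_i=x_i$, so $\Gamma(f)$ has the arc $x\to x+e_i$; terminality of $A$ forces $x+e_i\in A$ and hence $c+e_i\in\pi(A)$. Since $\Gamma(f_{W^+})$ is fixing, $\pi(A)$ contains a fixed point $c^*$ of $f_{W^+}$. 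Picking $x\in A$ with $\pi(x)=c^*$, no arc of $\Gamma(f)$ leaves $x$ in a direction $i\in W^+$ (as $f_i(x)=f_i(c^*)=c^*_i=x_i$), so $A\cap Y_{c^*}$ is closed under outgoing arcs of $\Gamma(f)$ and hence is a trap set of $\Gamma(f)$ contained in $A$; minimality of $A$ gives $A\subseteq Y_{c^*}$.

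To conclude I note that $Y_{c^*}$ is itself a trap space of $\Gamma(f)$: for $x\in Y_{c^*}$ and $i\in W^+$, $f_i(x)=f_i(c^*)=c^*_i=x_i$, so no $W^+$-direction arc can leave $Y_{c^*}$. Hence $\langle A\rangle\subseteq Y_{c^*}$ and $\langle A\rangle$ fixes every $W^+$-coordinate to $c^*$. If two distinct attractors have different $c^*$-values, their trap-space hulls are disjoint. If instead $A\neq B$ share the same $c^*$, both lie in $Y_{c^*}$ and are attractors of the restricted dynamics $\Gamma(f)[Y_{c^*}]$, i.e.\ the asynchronous graph of the subnetwork of $f$ induced by $Y_{c^*}$, whose interaction graph is a subgraph of $G[W^-]$ and therefore has no positive cycle. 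By \cref{thm:bib}, $\Gamma(f)[Y_{c^*}]$ is converging, so it has a unique attractor, forcing $A=B$, a contradiction. The only delicate points are the two lifting arguments---that $\pi(A)$ is a trap set of $\Gamma(f_{W^+})$ and that attractors of $\Gamma(f)$ contained in $Y_{c^*}$ coincide with attractors of $\Gamma(f)[Y_{c^*}]$---but both follow immediately from terminality of $A$ and the trap-space property of $Y_{c^*}$, so I do not anticipate a serious obstacle.
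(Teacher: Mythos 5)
Your proof is correct, and it takes a genuinely different and arguably cleaner route than the paper's. The paper classifies each strong component as perfectly fixing (only positive cycles) or robustly converging (only negative cycles), notes that no arc goes from a robustly converging component to a perfectly fixing one, and then inducts on the number of strong components via a product decomposition of attractors (\cref{lem:decomposition}, \cref{lem:decomposition_3}, \cref{lem:decomposition_4}, \cref{lem:decomposition_5}); the induction step for a terminal component with only negative cycles needs the notion of a \emph{robustly} converging graph, because the downstream dynamics seen by an attractor there is a union of asynchronous graphs taken over all states of the upstream attractor. Your single bipartition $V=W^+\cup W^-$ collapses all of this: both proofs exploit the same structural consequence of the hypothesis (the vertex set splits into a predecessor-closed, negative-cycle-free part and a positive-cycle-free remainder), but because $G[W^+]$ has no negative cycle the upstream dynamics is \emph{fixing}, so each attractor of $\Gamma(f)$ projects onto a single fixed point $c^*$ and is confined to the single fiber $Y_{c^*}$, which you correctly verify is a trap space; the downstream analysis then concerns one ordinary subnetwork whose interaction graph sits inside $G[W^-]$ and is therefore converging by \cref{thm:bib}, rather than a union of dynamics, so neither the robustness notion nor the induction is needed. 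The two lifting steps you flag as delicate are indeed fine: $\pi(A)$ is a trap set because arcs of $\Gamma(f_{W^+})$ lift to arcs of $\Gamma(f)$ in $W^+$-directions, and since $Y_{c^*}$ is a trap space every arc of $\Gamma(f)$ with source in $Y_{c^*}$ is an arc of $\Gamma(f)[Y_{c^*}]$, so trap sets inside $Y_{c^*}$ are the same for both graphs. What your argument does not buy is the reusable machinery (robust convergence, the decomposition lemma) that the paper also deploys for \cref{thm:sep}, where the upstream part is merely separating and the fiber argument no longer degenerates to a single subnetwork.
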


Note the condition of \cref{thm:trap-sep} is stronger than the condition of \cref{thm:sep}: if a vertex $i$ meets both a positive cycle $C^+$ and a negative cycle $C^-$, then the trivial graph with $i$ as single vertex is a path (of length zero) from $C^-$ to $C^+$ (and from $C^+$ to $C^-$). Examples below show that the condition of \cref{thm:sep} (resp. \cref{thm:trap-sep}) is not sufficient to guarantee that $G$ is trap-separating (resp. trapping).  

\begin{example}\label{ex:not-trap-sep}
Let $f\in F(4)$ be defined by $f_1(x)=\bar x_3$, $f_2(x)=\bar x_1$, $f_3(x)=\bar x_2$ and $f_4(x)=x_1x_2x_3\lor x_4x_1\lor x_4x_2\lor x_4x_3$. Then $\Gamma(f)$ is separating and not trap-separating, and $G(f)$ has exactly two cycles, of opposite signs, which are vertex disjoint hence it satisfies the condition of \cref{thm:sep}.
\[
\begin{array}{c}
\begin{tikzpicture}
\pgfmathparse{1}
\node (0000) at (0,0){$0000$};
\node (0010) at (1,1){{\boldmath \textcolor{Blue}{$0010$} \unboldmath}};
\node (0100) at (0,2){{\boldmath \textcolor{Blue}{$0100$} \unboldmath}};
\node (0110) at (1,3){{\boldmath \textcolor{Blue}{$0110$} \unboldmath}};
\node (1000) at (2,0){{\boldmath \textcolor{Blue}{$1000$} \unboldmath}};
\node (1010) at (3,1){{\boldmath \textcolor{Blue}{$1010$} \unboldmath}};
\node (1100) at (2,2){{\boldmath \textcolor{Blue}{$1100$} \unboldmath}};
\node (1110) at (3,3){$1110$};
\node (0001) at (5,0){$0001$};
\node (0011) at (6,1){{\boldmath \textcolor{Plum}{$0011$} \unboldmath}};
\node (0101) at (5,2){{\boldmath \textcolor{Plum}{$0101$} \unboldmath}};
\node (0111) at (6,3){{\boldmath \textcolor{Plum}{$0111$} \unboldmath}};
\node (1001) at (7,0){{\boldmath \textcolor{Plum}{$1001$} \unboldmath}};
\node (1011) at (8,1){{\boldmath \textcolor{Plum}{$1011$} \unboldmath}};
\node (1101) at (7,2){{\boldmath \textcolor{Plum}{$1101$} \unboldmath}};
\node (1111) at (8,3){$1111$};
\path[thick,->,draw,black]
(0100) edge[ultra thick,Blue] (1100)
(1100) edge[ultra thick,Blue] (1000)
(1000) edge[ultra thick,Blue] (1010)
(1010) edge[ultra thick,Blue] (0010)
(0010) edge[ultra thick,Blue] (0110)
(0110) edge[ultra thick,Blue] (0100)
(1110) edge (0110)
(1110) edge (1010)
(1110) edge (1100)
(1110) edge[bend left=20] (1111)
(0000) edge (1000)
(0000) edge (0100)
(0000) edge (0010)
(0101) edge[ultra thick,Plum] (1101)
(1101) edge[ultra thick,Plum] (1001)
(1001) edge[ultra thick,Plum] (1011)
(1011) edge[ultra thick,Plum] (0011)
(0011) edge[ultra thick,Plum] (0111)
(0111) edge[ultra thick,Plum] (0101)
(1111) edge (0111)
(1111) edge (1011)
(1111) edge (1101)
(0001) edge (1001)
(0001) edge (0101)
(0001) edge (0011)
(0001) edge[bend left=20] (0000)
;
\end{tikzpicture}
\end{array}
\qquad
\begin{array}{c}
\begin{tikzpicture}
\node[outer sep=1,inner sep=2,circle,draw,thick] (1) at (90:1){$1$};
\node[outer sep=1,inner sep=2,circle,draw,thick] (2) at ({90+120}:1){$2$};
\node[outer sep=1,inner sep=2,circle,draw,thick] (3) at ({90-120}:1){$3$};
\node[outer sep=1,inner sep=2,circle,draw,thick] (4) at (-90:2){$4$};
\draw[Green,->,thick] (4.-112) .. controls (-100:2.8) and (-80:2.8) .. (4.-68);
\path[->,thick]
(1) edge[red,bend right=15] (2)
(2) edge[red,bend right=15] (3)
(3) edge[red,bend right=15] (1)
(1) edge[Green] (4)
(2) edge[Green] (4)
(3) edge[Green] (4)
;
\end{tikzpicture}
\end{array}
\]
\end{example}

\begin{example}\label{ex:non-trapping}
Let $f\in F(4)$ be defined by $f_1(x)=\bar x_3$, $f_2(x)=\bar x_1$, $f_3(x)=\bar x_2$ and $f_4(x)=x_1x_2x_3$. Since $G(f)$ has no positive cycle, $\Gamma(f)$ has a unique attractor~$A$, but $[A]$ is not a trap space: $x_4=0$ for all $x\in A$ but $\Gamma(f)$ has an arc from $1010$ to $1011$. Hence $\Gamma(f)$ is converging, and thus trap-separating, but not trapping. Furthermore, since $G(f)$ has no positive cycle, it satisfies the condition of \cref{thm:trap-sep}.
\[
\begin{array}{c}
\begin{tikzpicture}
\pgfmathparse{1}
\node (0000) at (0,0){$0000$};
\node (0010) at (1,1){{\boldmath \textcolor{Blue}{$0010$} \unboldmath}};
\node (0100) at (0,2){{\boldmath \textcolor{Blue}{$0100$} \unboldmath}};
\node (0110) at (1,3){{\boldmath \textcolor{Blue}{$0110$} \unboldmath}};
\node (1000) at (2,0){{\boldmath \textcolor{Blue}{$1000$} \unboldmath}};
\node (1010) at (3,1){{\boldmath \textcolor{Blue}{$1010$} \unboldmath}};
\node (1100) at (2,2){{\boldmath \textcolor{Blue}{$1100$} \unboldmath}};
\node (1110) at (3,3){$1110$};
\node (0001) at (5,0){$0001$};
\node (0011) at (6,1){$0011$};
\node (0101) at (5,2){$0101$};
\node (0111) at (6,3){$0111$};
\node (1001) at (7,0){$1001$};
\node (1011) at (8,1){$1011$};
\node (1101) at (7,2){$1101$};
\node (1111) at (8,3){$1111$};
\path[thick,->,draw,black]
(0100) edge[ultra thick,Blue] (1100)
(1100) edge[ultra thick,Blue] (1000)
(1000) edge[ultra thick,Blue] (1010)
(1010) edge[ultra thick,Blue] (0010)
(0010) edge[ultra thick,Blue] (0110)
(0110) edge[ultra thick,Blue] (0100)
(1110) edge (0110)
(1110) edge (1010)
(1110) edge (1100)
(0000) edge (1000)
(0000) edge (0100)
(0000) edge (0010)
(0101) edge (1101)
(1101) edge (1001)
(1001) edge (1011)
(1011) edge (0011)
(0011) edge (0111)
(0111) edge (0101)
(1111) edge (0111)
(1111) edge (1011)
(1111) edge (1101)
(0001) edge (1001)
(0001) edge (0101)
(0001) edge (0011)
(0001) edge[bend left=20] (0000)
(0011) edge[bend left=20] (0010)
(0101) edge[bend right=20] (0100)
(0111) edge[bend right=20] (0110)
(1001) edge[bend left=20] (1000)
(1011) edge[bend left=20] (1010)
(1101) edge[bend right=20] (1100)
(1110) edge[bend left=20] (1111)

;
\end{tikzpicture}
\end{array}
\qquad
\begin{array}{c}
\begin{tikzpicture}
\node[outer sep=1,inner sep=2,circle,draw,thick] (1) at (90:1){$1$};
\node[outer sep=1,inner sep=2,circle,draw,thick] (2) at ({90+120}:1){$2$};
\node[outer sep=1,inner sep=2,circle,draw,thick] (3) at ({90-120}:1){$3$};
\node[outer sep=1,inner sep=2,circle,draw,thick] (4) at (-90:2){$4$};
\path[->,thick]
(1) edge[red,bend right=15] (2)
(2) edge[red,bend right=15] (3)
(3) edge[red,bend right=15] (1)
(1) edge[Green] (4)
(2) edge[Green] (4)
(3) edge[Green] (4)
;
\end{tikzpicture}
\end{array}
\]
\end{example}

\medskip
The strategy to prove \cref{thm:sep} is roughly the following. Suppose that any two intersecting cycles of $G$ have the same sign. Then, in each strong component $H$, all the cycles have the same sign and thus, by \cref{thm:bib}, $H$ is fixing or converging, and thus separating. This suggests a proof by induction on the number of strong components, the base case ($G$ itself is strong) being given by the above argument. However, if all the strong components of $G$ are separating, then $G$ is not necessarily separating, as showed by the following examples (note that \cref{ex:not-trap-sep} and \cref{ex:non-trapping} show that, if all the strong components of $G$ are trap-separating or trapping, then $G$ is not necessarily trap-separating or trapping). 

\begin{example}
\label{ex:not-sep}
Let $f\in F(3)$ be defined by $f_1(x)=\bar x_1$, $f_2(x)=\bar x_1 x_3 \lor x_2 \bar x_3$ and $f_3(x)=x_1 x_2 \lor \bar x_2 x_3$. Then $\Gamma(f)$ is non-separating and $G(f)$ has exactly two strong components, $G[\{1\}]$ and $G[\{2,3\}]$, both converging (the second trivially so, since $F(G[\{2,3\}])=\emptyset$).
\[
\begin{array}{c}
\begin{tikzpicture}
\pgfmathparse{1}
\node (000) at (0,0){{\boldmath \textcolor{Plum}{$000$} \unboldmath}};
\node (001) at (1,1){{\boldmath \textcolor{Blue}{$001$} \unboldmath}};
\node (010) at (0,2){{\boldmath \textcolor{Blue}{$010$} \unboldmath}};
\node (011) at (1,3){{\boldmath \textcolor{Blue}{$011$} \unboldmath}};
\node (100) at (2,0){{\boldmath \textcolor{Plum}{$100$} \unboldmath}};
\node (101) at (3,1){{\boldmath \textcolor{Blue}{$101$} \unboldmath}};
\node (110) at (2,2){{\boldmath \textcolor{Blue}{$110$} \unboldmath}};
\node (111) at (3,3){{\boldmath \textcolor{Blue}{$111$} \unboldmath}};
\path[ultra thick,->,draw,Blue]
(000) edge[bend left=10,Plum] (100)
(001) edge (011)
(001) edge[bend left=10] (101)
(010) edge[bend left=10] (110)
(011) edge[bend left=10] (111)
(011) edge (010)
(100) edge[bend left=10,Plum] (000)
(101) edge[bend left=10] (001)
(110) edge (111)
(110) edge[bend left=10] (010)
(111) edge (101)
(111) edge[bend left=10] (011)
;
\end{tikzpicture}
\end{array}
\qquad
\begin{array}{c}
\begin{tikzpicture}
\node[outer sep=1,inner sep=2,circle,draw,thick] (1) at (90:1){$1$};
\node[outer sep=1,inner sep=2,circle,draw,thick] (2) at ({90+120}:1){$2$};
\node[outer sep=1,inner sep=2,circle,draw,thick] (3) at ({90-120}:1){$3$};
\draw[red,->,thick] (1) to [out=120,in=60,looseness=8] (1);
\draw[Green,->,thick] (2) to [out=120,in=210,looseness=6] (2);
\draw[Green,->,thick] (3) to [out=60,in=-30,looseness=6] (3);
\path[->,thick]
(1) edge[red,bend right=15,->] (2)
(1) edge[Green,bend left=15] (3)
(2) edge[Green,bend right=13] (3)
(2) edge[red,bend right=40,->] (3)
(3) edge[Green,bend right=13] (2)
(3) edge[red,bend right=40,->] (2)
;
\end{tikzpicture}
\end{array}
\]
\end{example}

\begin{example}
\label{ex:not-sep-2}
We give a second example, where the set $F(H)$ is non-empty for all strong components $H$ of $G$.
Consider $f\in F(4)$ defined by $f_1(x)=x_1x_3 \vee \bar x_2 x_3$, $f_2(x)= x_2 x_3 \vee \bar x_1 x_3$, $f_3(x)= x_1x_2 \vee \bar x_4$ and $f_4(x)=\bar x_4$. Then $\Gamma(f)$ is non-separating as shown in the figure below. $G(f)$ has two strong components, $G[\{4\}]$, which is converging, and $G[\{1,2,3\}]$ which is separating by~\cref{thm:fvs2}, since it has feedback number two and all of its negative cycles contain all three vertices.
\[
\begin{array}{c}
\begin{tikzpicture}
\pgfmathparse{1}
\node (0000) at (0,0){{\boldmath \textcolor{Blue}{$0000$} \unboldmath}};
\node (0010) at (1,1){{\boldmath \textcolor{Blue}{$0010$} \unboldmath}};
\node (0100) at (0,2){{\boldmath \textcolor{Blue}{$0100$} \unboldmath}};
\node (0110) at (1,3){{\boldmath \textcolor{Blue}{$0110$} \unboldmath}};
\node (1000) at (2,0){{\boldmath \textcolor{Blue}{$1000$} \unboldmath}};
\node (1010) at (3,1){{\boldmath \textcolor{Blue}{$1010$} \unboldmath}};
\node (1100) at (2,2){$1100$};
\node (1110) at (3,3){{\boldmath \textcolor{Plum}{$1110$} \unboldmath}};
\node (0001) at (5,0){{\boldmath \textcolor{Blue}{$0001$} \unboldmath}};
\node (0011) at (6,1){{\boldmath \textcolor{Blue}{$0011$} \unboldmath}};
\node (0101) at (5,2){{\boldmath \textcolor{Blue}{$0101$} \unboldmath}};
\node (0111) at (6,3){{\boldmath \textcolor{Blue}{$0111$} \unboldmath}};
\node (1001) at (7,0){{\boldmath \textcolor{Blue}{$1001$} \unboldmath}};
\node (1011) at (8,1){{\boldmath \textcolor{Blue}{$1011$} \unboldmath}};
\node (1101) at (7,2){$1101$};
\node (1111) at (8,3){{\boldmath \textcolor{Plum}{$1111$} \unboldmath}};
\path[thick,->,draw,black]
(0000) edge[ultra thick,bend right=20,Blue] (0001)
(0000) edge[ultra thick,Blue] (0010)
(0001) edge[ultra thick,bend left=20,Blue] (0000)
(0010) edge[ultra thick,bend right=20,Blue] (0011)
(0010) edge[ultra thick,Blue] (0110)
(0010) edge[ultra thick,Blue] (1010)
(0011) edge[ultra thick,Blue] (0001)
(0011) edge[ultra thick,bend left=20,Blue] (0010)
(0011) edge[ultra thick,Blue] (0111)
(0011) edge[ultra thick,Blue] (1011)
(0100) edge[ultra thick,bend left=20,Blue] (0101)
(0100) edge[ultra thick,Blue] (0110)
(0100) edge[ultra thick,Blue] (0000)
(0101) edge[ultra thick,Blue] (0001)
(0101) edge[ultra thick,bend right=20,Blue] (0100)
(0110) edge[ultra thick,bend left=20,Blue] (0111)
(0111) edge[ultra thick,Blue] (0101)
(0111) edge[ultra thick,bend right=20,Blue] (0110)
(1000) edge[ultra thick,Blue] (1010)
(1000) edge[ultra thick,bend right=20,Blue] (1001)
(1000) edge[ultra thick,Blue] (0000)
(1001) edge[ultra thick,Blue] (0001)
(1001) edge[ultra thick,bend left=20,Blue] (1000)
(1010) edge[ultra thick,bend right=20,Blue] (1011)
(1011) edge[ultra thick,Blue] (1001)
(1011) edge[ultra thick,bend left=20,Blue] (1010)
(1100) edge (0100)
(1100) edge (1000)
(1100) edge (1110)
(1100) edge[bend left=20] (1101)
(1101) edge[bend right=20] (1100)
(1101) edge (0101)
(1101) edge (1001)
(1101) edge (1111)
(1110) edge[ultra thick,bend left=20,Plum] (1111)
(1111) edge[ultra thick,bend right=20,Plum] (1110)
;
\end{tikzpicture}
\end{array}
\qquad
\begin{array}{c}
\begin{tikzpicture}
\node[outer sep=1,inner sep=2,circle,draw,thick] (1) at (0,0){$1$};
\node[outer sep=1,inner sep=2,circle,draw,thick] (2) at (2,0){$2$};
\node[outer sep=1,inner sep=2,circle,draw,thick] (3) at (0,-2){$3$};
\node[outer sep=1,inner sep=2,circle,draw,thick] (4) at (2,-2){$4$};
\draw[red,->,thick] (4) to [out=60,in=-30,looseness=6] (4);
\draw[Green,->,thick] (1) to [out=120,in=210,looseness=6] (1);
\draw[Green,->,thick] (2) to [out=60,in=-30,looseness=6] (2);
\path[->,thick]
(1) edge[red,bend right=15] (2)
(1) edge[Green,bend left=15] (3)
(2) edge[red,bend right=15] (1)
(2) edge[Green,bend right=15] (3)
(3) edge[Green,bend left=15] (1)
(3) edge[Green,bend right=15] (2)
(4) edge[red] (3)
;
\end{tikzpicture}
\end{array}
\]
\end{example} 

On the other hand, if all the cycles of $H$ have the same sign, then $H$ is more than separating, it is ``robustly'' separating (a formal definition will be given below), and it turns out that if each strong component of $G$ is ``robustly'' separating, then $G$ is separating and we are done. 

\medskip
Formally, $G$ is \EM{robustly separating} if, for any non-empty set $F$ of BNs such that $G(f)$ is a spanning subgraph of $G$ for all $f\in F$, the (joint) union $\bigcup_{f\in F} \Gamma(f)$ is separating (a spanning subgraph of $G$ is a subgraph of $G$ with vertex set $V$). We define similarly the notions of \EM{robustly converging} and \EM{robustly trapping}.

\begin{example}
  The graph $G[\{1,2,3\}]$ of \cref{ex:not-sep-2} is separating but not robustly separating:
  the maps $f(x)=(\bar x_2 \lor  x_1 x_3, x_2 \bar x_1 \lor x_3 \bar x_1, x_1 \lor  x_2)$ and $g(x)=(x_1 x_3 \lor  x_3 \bar x_2, x_3 \lor x_2 \bar x_1, x_1 x_2)$ are fixing, but the union of $\Gamma(f)$ and $\Gamma(g)$ is not separating. The asynchronous graphs of $f$ and $g$ and their union are as follows:
 
  \[
  \begin{array}{c}
  \begin{tikzpicture}
  \pgfmathparse{1}
  \node (000) at (0,0){{$000$}};
  \node (001) at (1,1){{$001$}};
  \node (010) at (0,2){$010$};
  \node (011) at (1,3){{\boldmath \textcolor{Blue}{$011$} \unboldmath}};
  \node (100) at (2,0){$100$};
  \node (101) at (3,1){{\boldmath \textcolor{Plum}{$101$} \unboldmath}};
  \node (110) at (2,2){$110$};
  \node (111) at (3,3){$111$};
  \path[thick,->,draw]
  (000) edge (100)
  (001) edge (101)
  (001) edge (000)
  (001) edge (011)
  (010) edge (011)
  (100) edge (101)
  (110) edge (100)
  (110) edge (111)
  (110) edge (010)
  (111) edge (101)
  ;
  \end{tikzpicture}
  \end{array}
  \qquad
  \begin{array}{c}
  \begin{tikzpicture}
  \pgfmathparse{1}
  \node (000) at (0,0){{\boldmath \textcolor{Plum}{{$000$}} \unboldmath}};
  \node (001) at (1,1){{$001$}};
  \node (010) at (0,2){{\boldmath \textcolor{Blue}{$010$} \unboldmath}};
  \node (011) at (1,3){$011$};
  \node (100) at (2,0){$100$};
  \node (101) at (3,1){$101$};
  \node (110) at (2,2){$110$};
  \node (111) at (3,3){{\boldmath \textcolor{Emerald}{$111$} \unboldmath}};
  \path[thick,->,draw]
  (001) edge (101)
  (001) edge (000)
  (001) edge (011)
  (011) edge (010)
  (100) edge (000)
  (101) edge (100)
  (101) edge (111)
  (110) edge (100)
  (110) edge (111)
  (110) edge (010)
  ;
  \end{tikzpicture}
  \end{array}
  \qquad
  \begin{array}{c}
  \begin{tikzpicture}
  \pgfmathparse{1}
  \node (000) at (0,0){{\boldmath \textcolor{Plum}{$000$} \unboldmath}};
  \node (001) at (1,1){$001$};
  \node (010) at (0,2){{\boldmath \textcolor{Blue}{$010$} \unboldmath}};
  \node (011) at (1,3){{\boldmath \textcolor{Blue}{$011$} \unboldmath}};
  \node (100) at (2,0){{\boldmath \textcolor{Plum}{$100$} \unboldmath}};
  \node (101) at (3,1){{\boldmath \textcolor{Plum}{$101$} \unboldmath}};
  \node (110) at (2,2){$110$};
  \node (111) at (3,3){{\boldmath \textcolor{Plum}{$111$} \unboldmath}};
  \path[thick,->,draw]
  (000) edge[bend left=15,Plum,ultra thick] (100)
  (001) edge (101)
  (001) edge (000)
  (001) edge (011)
  (010) edge[bend left=15,Blue,ultra thick] (011)
  (011) edge[bend left=15,Blue,ultra thick] (010)
  (100) edge[bend left=15,Plum,ultra thick] (000)
  (100) edge[bend left=15,Plum,ultra thick] (101)
  (101) edge[bend left=15,Plum,ultra thick] (100)
  (101) edge[bend left=15,Plum,ultra thick] (111)
  (110) edge (100)
  (110) edge (111)
  (110) edge (010)
  (111) edge[bend left=15,Plum,ultra thick] (101)
  ;
  \end{tikzpicture}
  \end{array}
  \]
\end{example}

\medskip
Below, we prove that if $G$ is strong and has only negative (positive) cycles, then $G$ is robustly converging (trapping) and thus robustly separating. 

\begin{lemma}\label{lem:robustly_converging}
If all the cycles of $G$ are negative, then $G$ is robustly converging.
\end{lemma}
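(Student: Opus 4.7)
The plan is to adapt the classical argument behind the ``no positive cycle'' case of \cref{thm:bib} so that it works for a union of asynchronous graphs rather than a single one. Fix any non-empty family $F$ of BNs whose signed interaction graphs are spanning subgraphs of $G$, and set $\Gamma=\bigcup_{f\in F}\Gamma(f)$. Since every spanning subgraph of $G$ again has only negative cycles, in particular none of the $G(f)$ has a positive cycle. I would argue by contradiction, assuming $\Gamma$ has two distinct attractors $A$ and $B$.

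The first step is to pick $x\in A$ and $y\in B$ minimizing $d(x,y)$; this is well-defined and $d(x,y)\geq 1$ since $A\cap B=\emptyset$. The key structural input is that $A$ and $B$, being trap sets of $\Gamma$, are trap sets of every $\Gamma(f)$ as well, so any arc of any $\Gamma(f)$ leaving $x$ remains inside $A$, and similarly for $y$. From this I would deduce that $f_i(x)=x_i$ and $f_i(y)=y_i$ for every $f\in F$ and every $i\in\Delta(x,y)$: otherwise $x+e_i\in A$ (or $y+e_i\in B$) would contradict the minimality of $d(x,y)$, because flipping a coordinate in $\Delta(x,y)$ decreases the Hamming distance by exactly one.

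To conclude, I would invoke \cref{lem:A08} on any single $f\in F$ (the non-emptiness of $F$ is used only here). The hypotheses of the lemma are exactly what the previous step established, so $G(f)[\Delta(x,y)]$ contains a positive cycle; since $G(f)$ is a spanning subgraph of $G$, that cycle is also a positive cycle of $G$, contradicting the hypothesis that all cycles of $G$ are negative.

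I do not foresee a serious obstacle: the whole argument is a direct transfer of the single-dynamics proof, and the only thing one has to notice is that attractors of a union of asynchronous graphs are simultaneously trap sets for each member of the family, which is precisely what lets the minimum-distance argument run uniformly over $F$.
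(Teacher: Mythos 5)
Your proof is correct and rests on the same key observation as the paper's: since $\Gamma(f)\subseteq\Gamma$, the attractors $A$ and $B$ of $\Gamma$ are trap sets of each $\Gamma(f)$. The only difference is that the paper concludes immediately by noting that $\Gamma(f)$ would then have two distinct attractors, contradicting the cited result that a network whose interaction graph has no positive cycle is converging (\cref{thm:bib}), whereas you re-derive that same contradiction directly from \cref{lem:A08} via a minimal-Hamming-distance pair $(x,y)\in A\times B$; both routes are valid.
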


\begin{proof}
Suppose that all the cycles of $G$ are negative. Let $F$ be a set of BNs such that $G(f)$ is a spanning subgraph of $G$ for all $f\in F$, and let $\Gamma=\bigcup_{f\in F} \Gamma(f)$. Suppose that $\Gamma$ has two distinct attractors $A$ and $B$, and let $f\in F$. Since $\Gamma(f)$ is a subgraph of $\Gamma$, $A$ and $B$ are trap sets of $\Gamma(f)$, and thus $\Gamma(f)$ has at least two distinct attractors, one included in $A$, and the other included in $B$. But since $G(f)$ is a subgraph of $G$, it has only negative cycles. Thus $\Gamma(f)$ is converging by \cref{thm:bib}, and we obtain a contradiction. This proves that $\Gamma$ is converging. 
\end{proof}

To treat the case where all the cycles are positive, we need the following lemma. 

\begin{lemma}\label{lem:monotone_union}
Let $f^1,\dots,f^\ell$ be $\ell$ monotone BNs with component set $V$. Let
\[
\Gamma=\Gamma(f^1)\cup\cdots\cup \Gamma(f^\ell)
\]
be the joint union of the corresponding asynchronous graphs. Then $\Gamma$ is trapping.
\end{lemma}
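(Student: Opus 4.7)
The plan is to exploit the fact that, because each $f^k$ is monotone, the arcs of $\Gamma$ decompose cleanly according to two auxiliary monotone networks. First, I would set $F=\bigvee_k f^k$ and $H=\bigwedge_k f^k$ (both monotone) and observe that the increasing arcs of $\Gamma$ are exactly those of $\Gamma(F)$, while the decreasing arcs are those of $\Gamma(H)$. A direct calculation using monotonicity of $F$ and $H$ then shows that a subspace $[x,y]$ is a trap space of $\Gamma$ if and only if $H(x)\geq x$ and $F(y)\leq y$.

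The heart of the proof is the structural claim that for every attractor $A$ of $\Gamma$, the componentwise minimum $x^*=\bigwedge A$ and maximum $y^*=\bigvee A$ both belong to $A$. I would show that $A$ has a unique $\leq$-maximal element. If $y,y'\in A$ were two distinct maximal elements, they would be incomparable, and by strong connectivity there would be a path $y=c_0\to c_1\to\cdots\to c_n=y'$ in $\Gamma[A]$. I would prove by induction on $s$ that $c_s\leq y$: a decreasing arc preserves the bound trivially, and for an increasing arc $c_s\to c_s+e_j$ with $c_{s,j}=0$, if $y_j=0$ then monotonicity of the $f^k$ witnessing this arc gives $f^k_j(y)\geq f^k_j(c_s)=1$, yielding an arc $y\to y+e_j$ whose target lies in $A$ by the trap-set property and contradicts maximality of $y$; hence $y_j=1$ and $c_{s+1}\leq y$. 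Then $y'=c_n\leq y$ contradicts incomparability. The symmetric argument gives $x^*\in A$, and combined with the trap-set property this yields $H(x^*)\geq x^*$ and $F(y^*)\leq y^*$, so $[A]=[x^*,y^*]$ is a trap space and $\langle A\rangle=[A]$.

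For separating I would proceed by induction on $|V|$. Suppose distinct attractors $A$ and $B$ satisfy $[A]\cap[B]\neq\emptyset$; the intersection is itself a trap space by monotonicity of $F$ and $H$, and the standard minimality argument for attractors forces $A\subseteq[B]$ (or the symmetric situation). When $[B]$ is a proper subspace of $\B^V$, the restricted dynamics $\Gamma[[B]]$ is itself the union of asynchronous graphs of monotone networks on a strictly smaller vertex set, so by induction it is trapping; but then $A\neq B$ are two distinct attractors inside $[B]$ whose subspaces intersect in $[A]\neq\emptyset$, a contradiction. The remaining case $[B]=\B^V$ forces $\ZERO,\ONE\in B$, hence $\ZERO\notin A$, and a direct application of the monotonicity mechanism from the structural step (tracking greedy upward trajectories from $\ZERO$, which remain in the trap set $B$) precludes the existence of a second attractor disjoint from $B$. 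The main obstacle is the structural claim, since it requires the delicate induction on the path between two hypothetical maximal elements; once it is secured, the remainder of the argument becomes essentially mechanical.
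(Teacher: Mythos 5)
Your structural analysis is correct and is a clean reformulation of the corresponding part of the paper's argument: the envelopes $F=\bigvee_k f^k$ and $H=\bigwedge_k f^k$ capture exactly the increasing and decreasing arcs of $\Gamma$, the criterion ``$[x,y]$ is a trap space iff $H(x)\geq x$ and $F(y)\leq y$'' is right, and your induction along a path of $\Gamma[A]$ between two hypothetical maximal elements correctly shows that each attractor $A$ has a maximum $y^*$ and a minimum $x^*$ lying in $A$, whence $[A]=[x^*,y^*]=\langle A\rangle$. Up to this point you have established the same facts as claims (1)--(3) of the paper's proof, by a slightly different but valid route.

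The gap is in the separation step. You assert that ``the standard minimality argument for attractors forces $A\subseteq[B]$'' from $[A]\cap[B]\neq\emptyset$. Minimality of $A$ as a trap set gives $A\subseteq T$ only when $A$ itself meets the trap set $T$; knowing that the hulls $[A]$ and $[B]$ intersect does not by itself place any point of $A$ inside $[B]$, so this step is unjustified as written. (The paper bridges exactly this gap with its claim (4): from every configuration of $[A]$ there is a path of $\Gamma$ to $A$, proved by exhibiting an increasing arc inside $[A]$ from any $x\neq b$; a path from a point of $[A]\cap[B]$ to $A$ then stays in the trap space $[B]$, giving $A\cap[B]\neq\emptyset$ and only then $A\subseteq[B]$.) The final case $[B]=\B^V$ is also only gestured at. Both defects are repairable with tools you already have, and more cheaply than your induction on $|V|$: if $[A]\cap[B]\neq\emptyset$ then $x^*_A\leq y^*_B$, so $y^*_B$ lies in $[x^*_A,\ONE]$, which is a trap space because $H(x^*_A)\geq x^*_A$; hence $B\subseteq[x^*_A,\ONE]$ and $x^*_A\leq x^*_B$, and by symmetry $x^*_A=x^*_B\in A\cap B$, contradicting disjointness of attractors. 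With that one-line fix the induction on $|V|$ and the special case become unnecessary.
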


\begin{proof}
We need:
\begin{quote}
(1) {\em If $\Gamma$ has no decreasing arc starting from $a$ then $[a,\ONE]$ is a trap space, and if $\Gamma$ has no increasing arc starting from $b$ then $[\ZERO,b]$ is a trap space.}

\smallskip
Suppose that $[a,\ONE]$ is not a trap space. Then $\Gamma$ has an arc $x\to y$ leaving $[a,\ONE]$. Let $i$ be the direction of this arc. Then $x_i=a_i=1$ and $y_i=0$. Thus $f^k_i(x)=0$ for some $1\leq k\leq \ell$, and since $f^k$ is monotone and $a\leq x$, we have $f^k_i(a)\leq f^k_i(x)=0$. Since $a_i=1$, we deduce that $\Gamma(f^k)$, and thus $\Gamma$, has an arc $a\to a'$ with $a'_i=0$, which is decreasing. This proves the first assertion, and the second is similar.
\end{quote}

Let $A$ be an attractor of $\Gamma$.

\begin{quote}
(2) {\em $A$ has a unique minimal element and a unique maximal element.}

\smallskip
Consider $a,a'$ minimal elements of $A$.
Then $\Gamma$ has no decreasing arc starting from $a$ thus, by (1), $[a,\ONE]$ is a trap space, as is $[a',\ONE]$ with the same proof.
Since $a' \in A$, we have $a'\in [a,\ONE]$, and therefore $a'\geq a$, and symmetrically $a \geq a'$, which proves $a=a'$. We prove similarly that $A$ has a unique maximal element.
\end{quote}

Let $a$ and $b$ be the minimal and maximal element of $A$. Then $[A]=[a,b]$ and, by (1), $[a,\ONE]$ and $[\ZERO,b]$ are trap spaces.

\begin{quote}
(3) {\em $[A]$ is a trap space of $\Gamma$.}

\smallskip
Suppose that $x\to y$ is an arc leaving $[A]$, and let $i$ be the direction of this arc. Then $x_i=a_i=b_i\neq y_i$. If $x_i=1$ we deduce that $x\to y$ leaves $[a,\ONE]$ and if $x_i=0$ we deduce that $x\to y$ leaves $[\ZERO,b]$, and in both cases we obtain a contradiction.
\end{quote}

\begin{quote}
(4) {\em $\Gamma$ has a path from every configuration in $[A]$ to $A$.}

\smallskip
For every $x\in [A]$ we prove, by induction on $d(x,b)$, that $\Gamma$ has a path from $x$ to $b$. If $d(x,b)=0$ then there is nothing to prove. So suppose that $d(x,b)>0$. If $\Gamma$ has no increasing arc starting from $x$ then, by (1), $[\ZERO,x]$ is a trap space, which contains $a$ but not $b$. Since $a,b\in A$, $\Gamma$ has a path from $a$ to $b$, and thus $[\ZERO,x]$ is not a trap space, a contradiction. Hence $\Gamma$ has an increasing arc $x\to y$. By (3), $[A]$ is a trap space, so $y\leq b$, and since $x\leq y$ we have $d(y,b)<d(x,b)$. Thus, by induction, $\Gamma$ has a path from $y$ to $b$, and by adding $x\to y$ to this path, we obtain the desired path.
\end{quote}

Suppose that $\Gamma$ has an attractor $B\neq A$. For $X\in\{A,B\}$, let $R(X)$ be the set of configurations $x$ such that $\Gamma$ has a path from $x$ to $X$. By (3) and (4), we have $[A]=\langle A\rangle$ and $[A]\subseteq R(A)$ and, similarly, $[B]=\langle B\rangle$ and $[B]\subseteq R(B)$. Suppose, for a contradiction, that there is $x\in [A]\cap [B]$. Then $x\in  R(A)$ and since $x\in \langle B\rangle$ we have $R(A)\subseteq \langle B\rangle=[B]$. Since $A\subseteq R(A)$ we have $A\subseteq  [B]\subseteq R(B)$, and thus $\Gamma$ has a path from $A$ to $B$, which is a contradiction since $A,B$ are distinct attractors. Thus $[A]\cap [B]=\emptyset$, and thus $\Gamma$ is trapping.
\end{proof}

We deduce:

\begin{lemma}\label{lem:robustly_trapping}
If $G$ is strong and has only positive cycles, then $G$ is robustly trapping.
\end{lemma}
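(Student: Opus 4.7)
The plan is to reduce the full-positive case to the monotone case via a suitable switch, exploiting the machinery developed so far: \Cref{pro:harary} to straighten signs, \Cref{pro:BN_switch} to transfer BN-level information, and \Cref{lem:monotone_union} to conclude trapping-ness of the union.

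Concretely, suppose $G$ is strong and all its cycles are positive. Since $G$ is strong and has no negative cycle, \Cref{pro:harary} yields $I\subseteq V$ such that the switch $G^I$ is a full-positive signed digraph. Let $F$ be a non-empty family of BNs with $G(f)$ a spanning subgraph of $G$ for every $f\in F$, and for each such $f$ let $h^f$ be its $I$-switch. By \Cref{pro:BN_switch}, $G(h^f)$ is the $I$-switch of $G(f)$, and in particular it is a spanning subgraph of $G^I$; in other words, $G(h^f)$ has only positive arcs. A standard verification of the definitions shows that a BN whose signed interaction graph has no negative arc must be monotone (any non-monotone dependence of $h^f_i$ on input $j$ would create, by definition, a negative arc from $j$ to $i$ in $G(h^f)$). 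Hence every $h^f$ is monotone.

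Applying \Cref{lem:monotone_union} to the family $\{h^f : f\in F\}$ (the lemma's argument works verbatim for any family, not just a finite one), the joint union
\[
\Gamma^{\mathrm{sw}}\;=\;\bigcup_{f\in F}\Gamma(h^f)
\]
is trapping. Now the second part of \Cref{pro:BN_switch} states that for each $f$, the isometry $\pi\colon x\mapsto x+e_I$ carries $\Gamma(f)$ onto $\Gamma(h^f)$. Crucially, $\pi$ does not depend on $f$, so the same isometry carries $\bigcup_{f\in F}\Gamma(f)$ onto $\Gamma^{\mathrm{sw}}$. Since the trapping property is preserved under isometries of asynchronous graphs (as recalled in the preliminaries, because $\pi$ maps subspaces to subspaces and preserves trap sets and attractors), we conclude that $\bigcup_{f\in F}\Gamma(f)$ is trapping. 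Thus $G$ is robustly trapping.

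The only subtlety is verifying that the trapping property really is invariant under the isometry $x\mapsto x+e_I$ applied simultaneously to every member of the union: this is essentially bookkeeping, since $\pi$ is a bijection that sends subspaces to subspaces, trap sets to trap sets, and therefore attractors to attractors, and it commutes with taking the smallest enclosing subspace and the smallest enclosing trap space. I do not expect any genuine obstacle beyond this bookkeeping; the content of the proof is really that Harary's theorem lets us reduce ``only positive cycles + strong'' to ``full-positive'', after which monotonicity and \Cref{lem:monotone_union} do all the work.
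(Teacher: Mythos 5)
Your proposal is correct and follows essentially the same route as the paper: the paper's proof also invokes \cref{pro:harary} and \cref{pro:BN_switch} to reduce to the full-positive case, observes that full-positive interaction graphs force monotonicity, and concludes with \cref{lem:monotone_union}. The only difference is that you spell out the reduction (in particular, that the isometry $x\mapsto x+e_I$ is common to all members of the family, so the union transfers), which the paper compresses into ``we can suppose that $G$ is full-positive''; note also that since $V$ is finite there are only finitely many BNs on $V$, so the family is automatically finite and no extension of \cref{lem:monotone_union} is needed.
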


\begin{proof}
Suppose that $G$ is strong and has only positive cycles. By \cref{pro:harary} and \cref{pro:BN_switch} we can suppose that $G$ is full-positive. Let $F$ be a set of BNs such that $G(f)$ is a spanning subgraph of $G$ for all $f\in F$, and let $\Gamma=\bigcup_{f\in F} \Gamma(f)$. For every $f\in F$, since $G(f)$ is full-positive, $f$ is monotone and we deduce from Lemma~\ref{lem:monotone_union} that $\Gamma$ is trapping.
\end{proof}

Going back to the proof of \cref{thm:sep}, we now know that if any two intersecting cycles of $G$ have the same sign, then each strong component of $G$ is robustly separating. It remains to prove that this implies that $G$ is separating. For that, we need a decomposition technique for non-strong signed digraphs. If $G$ is not strong, then there is a partition $(I_1,I_2)$ of the vertices such that $G$ has no arc from $I_2$ to $I_1$. Given $f\in F(G)$, we then show that each attractor of $\Gamma(f)$ can be regarded as the Cartesian product of an asynchronous attractor of the ``restriction'' of $f$ on $I_1$ and a union of asynchronous attractors of BNs whose signed interaction digraph is a spanning subgraph of $G[I_2]$. (The union involved in the definition of robustly separating is actually motivated by the union involved in this decomposition.) The details follow.

\medskip
Let $f\in F(V)$, and $(I_1,I_2)$ a partition of $V$, without empty part. We identify $\B^V$ with $\B^{I_1}\times \B^{I_2}$. Thus we regard each configuration $x$ on $V$ has a pair $x=(x_{I_1},x_{I_2})$. We denote by $f^1$  the subnetwork of $f$ induced by $[(\ZERO,\ZERO),(\ONE,\ZERO)]$ and set $\Gamma^1=\Gamma(f^1)$. Hence $f^1$ is obtained by fixing to $0$ each component in $I_2$.  Next, for all configurations $x$ on $I_1$, we denote by $f^x$ be subnetwork of $f$ induced by $[(x,\ZERO),(x,\ONE)]$. Hence $f^x$ is obtained by fixing to $x_i$ each component $i$ in $I_1$. Let $A$ be an attractor of $\Gamma(f)$. We set:
\[
A^1=\{a_{I_1}\mid a\in A\},\qquad A^2=\{a_{I_2}\mid a\in A\},\qquad \Gamma^2_A=\bigcup_{x\in A^1} \Gamma(f^x).
\]

\begin{lemma}\label{lem:decomposition}
Let $f\in F(V)$. Let $(I_1,I_2)$ be a partition of $V$ without empty part. Suppose that $G(f)$ has no arc from $I_2$ to $I_1$. For every attractor $A$ of $\Gamma(f)$:
\begin{itemize}
\item $A=A^1\times A^2$,
\item $A^1$ is an attractor of $\Gamma^1$,
\item $A^2$ is an attractor of $\Gamma^2_A$.
\end{itemize}
\end{lemma}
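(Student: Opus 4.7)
The plan is to exploit the fact that since $G(f)$ has no arc from $I_2$ to $I_1$, each $f_i$ with $i \in I_1$ depends only on $x_{I_1}$. Hence the $I_1$-dynamics inside $\Gamma(f)$ is a ``lift'' of $\Gamma^1$: an arc $x \to y$ in a direction $i \in I_1$ satisfies $x_{I_1} \to y_{I_1}$ is an arc of $\Gamma^1$, independently of $x_{I_2}$, and conversely any arc $u \to v$ of $\Gamma^1$ lifts to an arc $(u, w) \to (v, w)$ of $\Gamma(f)$ for every $w \in \B^{I_2}$. Meanwhile, the $I_2$-dynamics within a fixed slice $\{x\}\times \B^{I_2}$ is exactly $\Gamma(f^x)$ by definition.

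First, I would show that $A^1$ is an attractor of $\Gamma^1$, using the ``inclusion-minimal non-empty trap set'' characterisation. For the trap property, take an arc $u \to v$ of $\Gamma^1$ from some $u \in A^1$ in direction $i \in I_1$; choose $a \in A$ with $a_{I_1} = u$; by the lifting observation $(u, a_{I_2}) \to (v, a_{I_2})$ is an arc of $\Gamma(f)$, so $v \in A^1$ since $A$ is a trap set. For strong connectivity, project any path inside $A$: arcs in directions of $I_1$ become arcs of $\Gamma^1$ while arcs in directions of $I_2$ collapse. Trap set plus strong connectivity makes $A^1$ a terminal strong component of $\Gamma^1$, hence an attractor.

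Next, I would establish $A = A^1 \times A^2$, which I expect to be the main obstacle since it is the only step that genuinely mixes information across the two blocks. The inclusion $A \subseteq A^1 \times A^2$ is immediate. For the reverse, given $u \in A^1$ and $v \in A^2$, pick $b \in A$ with $b_{I_2} = v$. Since $A^1$ is an attractor of $\Gamma^1$ and $b_{I_1} \in A^1$, there is a path $b_{I_1} = w^0 \to \cdots \to w^k = u$ in $\Gamma^1$; lifting it coordinate-wise as $(w^0, v) \to \cdots \to (w^k, v)$ yields a path of $\Gamma(f)$ starting at $b \in A$, so its endpoint $(u, v)$ lies in $A$ by the trap property.

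Finally, I would show that $A^2$ is an attractor of $\Gamma^2_A$ by the same two-part strategy. For the trap property, an arc $y \to y'$ of $\Gamma^2_A$ leaving some $y \in A^2$ comes from $\Gamma(f^x)$ for some $x \in A^1$; then $(x, y) \in A$ by the previous step, so the corresponding arc $(x, y) \to (x, y')$ of $\Gamma(f)$ stays in $A$, giving $y' \in A^2$. For strong connectivity, take $v, v' \in A^2$ witnessed by $a, a' \in A$ and pick a path $a = a^0 \to \cdots \to a^m = a'$ in $A$; each $I_2$-direction step $a^j \to a^{j+1}$ provides an arc of $\Gamma(f^{a^j_{I_1}}) \subseteq \Gamma^2_A$ between $a^j_{I_2}$ and $a^{j+1}_{I_2}$, while $I_1$-direction steps leave the $I_2$-coordinate unchanged, so concatenation produces a walk from $v$ to $v'$ in $\Gamma^2_A$. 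As before, the combination of trap set and strong connectivity forces $A^2$ to be a terminal strong component of $\Gamma^2_A$.
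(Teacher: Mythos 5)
Your proposal is correct and follows essentially the same route as the paper's proof: the key lifting observation for arcs in $I_1$-directions, the same ordering of claims (with $A=A^1\times A^2$ established before the $A^2$ statement so it can feed into the trap-set argument for $A^2$), and the same path-projection ideas. The only cosmetic difference is that you certify minimality via the ``terminal strong component'' characterisation of attractors (trap set plus strong connectivity of the induced subgraph), whereas the paper uses the equivalent ``inclusion-minimal non-empty trap set'' formulation; the underlying argument is the same.
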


\begin{proof}
Let $x$ be a configuration on $I_1$ and let $a$ be a configuration on $I_2$. Since $G(f)$ has no arc from $I_2$ to $I_1$, we have $f(x,a)_{I_1}=f(x,\ZERO)_{I_1}=f^1(x)$ and we deduce that 
\begin{quote}
(1) {\em $x\to y$ is an arc of $\Gamma^1$ if and only if $(x,a)\to (y,a)$ is an arc of $\Gamma(f)$.}
\end{quote}

\medskip
It follows that $A^1$ is an attractor of $\Gamma^1$. Indeed, let $x\in A^1$ and let $a$ be a configuration on $I_2$ such that $(x,a)\in A$. If $x\to y$ is an arc of $\Gamma^1$ then, by (1), $(x,a)\to (y,a)$ is an arc of $\Gamma(f)$ and since $(x,a)\in A$ we have $(y,a)\in A$ and thus $y\in A^1$. So $A^1$ is a trap set of $\Gamma^1$. Let $B^1$ be a strict subset of $A^1$. Let $x\in B^1$, $y\in A^1\setminus B^1$, and let $a,b$ be configurations on $I_2$ such that $(x,a),(y,b)\in A$. Since $A$ is an attractor of $\Gamma(f)$, there is a path from $(x,a)$ to $(y,b)$, and this path contains an arc $(z,c)\to (z',c)$ with $z\in B^1$ and $z'\in A^1\setminus B^1$. Then, by (1), $z\to z'$ is an arc of $\Gamma^1$ leaving $B^1$. Hence $A^1$ is an inclusion-minimal trap set of $\Gamma^1$, as desired.

\medskip
We now prove that $A=A^1\times A^2$. It is sufficient to prove that $A^1\times A^2\subseteq A$ since the other direction is clear. Let $(y,a)\in A^1\times A^2$. Since $a\in A^2$, there is $x\in A^1$ such that $(x,a)\in A$. Since $A^1$ is an attractor of $\Gamma^1$, $\Gamma^1$ has a path from $x$ to $y$ and we deduce from (1) that $\Gamma(f)$ has a path from $(x,a)$ to $(y,a)$, and thus $(y,a)\in A$.

\medskip
We finally prove that $A^2$ is an attractor of $\Gamma^2_A$. Suppose that $\Gamma^2_A$ has an arc $a\to b$ with $a\in A^2$. There is $x\in A^1$ such that $a\to b$ is an arc of $\Gamma(f^x)$, and we deduce that $(x,a)\to (x,b)$ is an arc of $\Gamma(f)$. Since $A=A^1\times A^2$, we have $(x,a)\in A$ and thus $(x,b)\in A$ and we deduce that $b\in A^2$. So $A^2$ is a trap set of $\Gamma^2_A$. Let $B^2$ be a strict subset of $A^2$. Let $a\in B^2$, $b\in A^2\setminus B^2$, and let $x,y$ be configurations on $I_1$ such that $(x,a),(y,b)\in A$. Since $A$ is an attractor of $\Gamma(f)$, there is a path from $(x,a)$ to $(y,b)$, and this path contains an arc $(z,c)\to (z,c')$ with $c\in B^2$ and $c'\in A^2\setminus B^2$. Since $z\in A^1$, $c\to c'$ is an arc of $\Gamma(f^z)$, and thus an arc of $\Gamma^2_A$ leaving $B^2$. Hence $A^2$ is an inclusion-minimal trap set of $\Gamma^2_A$, as desired.
\end{proof}

We deduce the following which, together with \cref{lem:robustly_converging} and \cref{lem:robustly_trapping}, implies \cref{thm:sep}.

\begin{lemma}\label{lem:decomposition_2}
If each strong component of $G$ is robustly separating, then $G$ is separating.
\end{lemma}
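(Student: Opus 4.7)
The plan is to establish, by induction on the number of strong components of $G$, the stronger statement that if every strong component of $G$ is robustly separating then $G$ itself is robustly separating; the lemma then follows by taking $F=\{f\}$ in the definition. The base case, $G$ strong, is immediate from the hypothesis. For the inductive step, pick any initial strong component $S$ of $G$ and set $I_1=S$, $I_2=V\setminus S$, so that $G$ has no arc from $I_2$ to $I_1$. Because no arc returns from $I_2$ to $I_1$, the strong components of $G[I_2]$ coincide with the strong components of $G$ other than $S$, so $G[I_1]$ is robustly separating by hypothesis and $G[I_2]$ is robustly separating by the inductive hypothesis.

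Now fix a non-empty $F\subseteq F(V)$ with each $G(f)$ a spanning subgraph of $G$, and set $\Gamma=\bigcup_{f\in F}\Gamma(f)$. Mimicking the constructions preceding Lemma~\ref{lem:decomposition}, I introduce
\[
\Gamma^1=\bigcup_{f\in F}\Gamma(f^1),\qquad \Gamma^2_A=\bigcup_{f\in F,\,x\in A^1}\Gamma(f^x)
\]
for each attractor $A$ of $\Gamma$, where $f^1$ and $f^x$ are defined as in the paragraph preceding Lemma~\ref{lem:decomposition}. Since each $G(f^1)$ is a spanning subgraph of $G[I_1]$ and each $G(f^x)$ is a spanning subgraph of $G[I_2]$, robust separation of $G[I_1]$ and $G[I_2]$ makes $\Gamma^1$ and every $\Gamma^2_A$ separating. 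The key technical step is to extend Lemma~\ref{lem:decomposition} to this union setting: for every attractor $A$ of $\Gamma$ one has $A=A^1\times A^2$, $A^1$ is an attractor of $\Gamma^1$, and $A^2$ is an attractor of $\Gamma^2_A$. The proof of Lemma~\ref{lem:decomposition} transfers almost verbatim, the only ingredient being that, for every $f\in F$ and $i\in I_1$, $f_i$ is independent of $x_{I_2}$, so that $I_1$-directed arcs of $\Gamma$ are exactly the lifts of arcs of $\Gamma^1$; the connectivity arguments used to establish minimality of $A^1$ and $A^2$ then go through as before, using that paths inside $A$ can be chosen inside $A$ because $A$ is a terminal strong component.

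Given the decomposition, two distinct attractors $A,B$ of $\Gamma$ split into two cases. If $A^1\neq B^1$, then $A^1$ and $B^1$ are distinct attractors of $\Gamma^1$, so $[A^1]\cap[B^1]=\emptyset$ by separation of $\Gamma^1$, and hence $[A]\cap[B]=([A^1]\cap[B^1])\times([A^2]\cap[B^2])=\emptyset$. If $A^1=B^1$, then $\Gamma^2_A=\Gamma^2_B$, and $A^2\neq B^2$ are distinct attractors of this common separating graph, so $[A^2]\cap[B^2]=\emptyset$ and again $[A]\cap[B]=\emptyset$. The main obstacle is precisely the preliminary strengthening to ``robustly separating'': without it the inductive hypothesis cannot be applied to $\Gamma^2_A$, which is genuinely a union of asynchronous graphs ranging over both $f\in F$ and $x\in A^1$ rather than the asynchronous graph of a single BN.
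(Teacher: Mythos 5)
Your proof is correct, and its skeleton (induction on the number of strong components, plus the product decomposition of attractors from \cref{lem:decomposition}) is the same as the paper's, but you set the induction up differently in a way worth noting. The paper peels off a \emph{terminal} strong component $G[I_2]$ and keeps $I_1=V\setminus I_2$: then $\Gamma^1=\Gamma(f^1)$ is the asynchronous graph of a \emph{single} BN on $G[I_1]$, so the plain induction hypothesis (``separating'') suffices there, and the only union that ever appears is $\Gamma^2_A=\bigcup_{x\in A^1}\Gamma(f^x)$ over spanning subgraphs of the single strong component $G[I_2]$, where ``robustly separating'' comes straight from the hypothesis; consequently \cref{lem:decomposition} can be invoked exactly as stated. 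You instead peel off an \emph{initial} strong component, which makes $G[I_2]$ the union of all remaining components; this forces you to (a) strengthen the induction statement to ``robustly separating'' so that it applies to the union $\Gamma^2_A=\bigcup_{f\in F,\,x\in A^1}\Gamma(f^x)$, and (b) re-prove \cref{lem:decomposition} for joint unions $\Gamma=\bigcup_{f\in F}\Gamma(f)$ rather than for a single $\Gamma(f)$. Both of these do go through --- the key property that $I_1$-directed arcs of the union are exactly the lifts of arcs of $\Gamma^1$ holds arcwise, and the minimality arguments only use that a path of $\Gamma$ inside an attractor decomposes into arcs each belonging to some $\Gamma(f)$ --- so your argument is sound, and it even yields the formally stronger conclusion that $G$ is robustly separating. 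The trade-off is that you must carry the union bookkeeping through the whole decomposition lemma, whereas the paper's choice of a terminal component confines the union to the one place where the hypothesis already pays for it; if you keep your route, you should state and prove the union version of \cref{lem:decomposition} explicitly rather than asserting it transfers verbatim.
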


\begin{proof}
Suppose that every strong component of $G$ is robustly separating. We proceed by induction on the number of strong components. If $G$ is strong then the result is obvious. Otherwise, there is a partition $(I_1,I_2)$ of the vertices of $G$ such that $G$ has no arc from $I_2$ to $I_1$ and such that $G[I_2]$ is a strong component of $G$. Let $f\in F(G)$. Since $G(f^1)=G[I_1]$, by induction, $\Gamma^1$ is separating. Furthermore, for every attractor $A$ of $\Gamma(f)$ and $x\in A$, $G(f^x)$ is a spanning subgraph of $G[I_2]$, which is robustly separating, and we deduce that $\Gamma^2_A$ is separating. Let $A,B$ be distinct attractors of $\Gamma(f)$. By \cref{lem:decomposition}, we have $A=A^1\times A^2$ and $B=B^1\times B^2$. If $A^1\neq B^1$ then, by the same lemma, $A^1,B^1$ are distinct attractors of $\Gamma^1$, which is separating, thus $[A^1]\cap [B^1]=\emptyset$ and we deduce that $[A]\cap [B]=\emptyset$. Suppose now that $A^1=B^1$. Then, by the same lemma,  $A^2,B^2$ are distinct attractors of $\Gamma^2_A=\Gamma^2_B$, which is separating. Thus $[A^2]\cap [B^2]=\emptyset$ and we deduce that $[A]\cap [B]=\emptyset$. This proves that $\Gamma(f)$ is separating.
\end{proof}

The proof of \cref{thm:trap-sep} follows the same line and is easier. Let us say that $G$ is \EM{perfectly fixing} if each subgraph of $G$ is fixing. Clearly, if all the cycles of $G$ are positive, then $G$ is perfectly fixing. Suppose now that the conditions of \cref{thm:trap-sep} are satisfied, that is, $G$ has no path from a negative cycle to a positive cycle. Then each strong component is either perfectly fixing (if all the cycles are positive) or robustly converging (if all the cycles are negative, \cref{lem:robustly_converging}), and there is no arc from a robustly converging component to a perfectly fixing component. We prove below that this is enough for $G$ to be trap-separating, and this proves \cref{thm:trap-sep}. 

\begin{lemma}\label{lem:decomposition_3}
Suppose that each strong component of $G$ is either perfectly fixing or robustly converging, and that there is no arc from a robustly converging component to a robustly fixing component. Then $G$ is trap-separating. 
\end{lemma}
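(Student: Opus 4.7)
The plan is to adapt the argument of \cref{lem:decomposition_2} to trap spaces, but instead of peeling off one strong component at a time I partition the vertex set globally according to the type of strong component. First I would observe that a strong perfectly fixing signed digraph contains no negative cycle (any such cycle would itself be a non-fixing subgraph) and, dually, that a strong robustly converging signed digraph contains no positive cycle (otherwise, a switch followed by the construction behind \cref{lem:robustly_trapping} produces a BN on a spanning subgraph with two fixed points, contradicting robust convergence). Writing $V = V_+ \sqcup V_-$ with $V_+$ the union of the perfectly fixing strong components and $V_-$ the union of the robustly converging ones, the hypothesis becomes simply that $G$ has no arc from $V_-$ to $V_+$. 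Since every cycle of $G$ is contained in a single strong component, $G[V_+]$ has no negative cycle and is fixing by \cref{thm:bib}, and $G[V_-]$ has no positive cycle and is converging by \cref{thm:bib}.

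Next I fix $f \in F(G)$ and use the setup preceding \cref{lem:decomposition}: let $f^1$ be the subnetwork on $V_+$ obtained by fixing $V_-$ to $\ZERO$, and for $x \in \B^{V_+}$ let $f^x$ be the subnetwork on $V_-$ obtained by fixing $V_+$ to $x$. The absence of arcs from $V_-$ to $V_+$ gives $G(f^1) = G[V_+]$, so $\Gamma(f^1)$ is fixing; and $G(f^x) \subseteq G[V_-]$, so $\Gamma(f^x)$ is converging. By \cref{lem:decomposition}, every attractor of $\Gamma(f)$ has the form $A = \{a^1\}\times A^2$, with $a^1$ a fixed point of $f^1$ and $A^2$ the unique attractor of $\Gamma(f^{a^1})$; hence distinct attractors $A, B$ must satisfy $a^1 \neq b^1$.

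The main technical step, and the main obstacle, is a trap-space analogue of \cref{lem:decomposition}: a subspace $T = T^1 \times T^2 \subseteq \B^{V_+}\times \B^{V_-}$ is a trap space of $\Gamma(f)$ if and only if $T^1$ is a trap space of $\Gamma(f^1)$ and $T^2$ is a trap space of $\Gamma(f^x)$ for every $x \in T^1$. Both directions are verified by a case split on the direction of an arc in $\Gamma(f)$; the key point is that the absence of arcs from $V_-$ to $V_+$ allows one to freely lift an arc of $\Gamma(f^1)$ starting at $x$ to an arc of $\Gamma(f)$ starting at $(x, z)$ for every $z \in \B^{V_-}$, and conversely to project any $V_+$-directed arc of $\Gamma(f)$ onto an arc of $\Gamma(f^1)$.

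Applying this analogue to $T = \langle A \rangle = T_A^1 \times T_A^2$: $T_A^1$ is a trap space of $\Gamma(f^1)$ containing $a^1$. On the other hand, since $a^1$ is a fixed point of $f^1$, $\{a^1\}$ is itself a trap space of $\Gamma(f^1)$; and $T_A^2$ is a trap space of $\Gamma(f^x)$ for every $x \in T_A^1 \supseteq \{a^1\}$, so the backward direction shows that $\{a^1\}\times T_A^2$ is a trap space of $\Gamma(f)$ containing $A$, forcing $T_A^1 = \{a^1\}$. The same reasoning gives $T_B^1 = \{b^1\}$, and since $a^1 \neq b^1$ I conclude $\langle A\rangle \cap \langle B\rangle \subseteq (\{a^1\}\cap\{b^1\}) \times \B^{V_-} = \emptyset$, so $\Gamma(f)$ is trap-separating.
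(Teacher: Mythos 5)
Your proof is correct, but it follows a genuinely different route from the paper's. The paper proceeds by induction on the number of strong components: it disposes of the all--perfectly-fixing case via \cref{lem:decomposition_4}, and otherwise peels off a terminal robustly converging component, invoking \cref{lem:decomposition_5} for the inductive step. You instead begin by inverting the implications that the paper only uses in the other direction: a strong perfectly fixing component has no negative cycle (a negative cycle is itself a non-fixing subgraph) and a strong robustly converging component has no positive cycle (a BN realizing just that cycle, with all other coordinates constant, is a spanning subgraph with two fixed points). This collapses the abstract hypothesis into a single global two-block partition $V=V_+\sqcup V_-$ with no arc from $V_-$ to $V_+$, where $G[V_+]$ has no negative cycle (hence $\Gamma(f^1)$ is fixing by \cref{thm:bib}) and $G[V_-]$ has no positive cycle (hence each $\Gamma(f^x)$ is converging). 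One application of \cref{lem:decomposition} then gives $A=\{a^1\}\times A^2$ with $a^1\neq b^1$ for distinct attractors, and your trap-space factorization claim --- a product subspace is a trap space of $\Gamma(f)$ iff its factors are trap spaces of $\Gamma(f^1)$ and of every $\Gamma(f^x)$ --- pins down $\langle A\rangle=\{a^1\}\times T^2_A$, which immediately separates the trap spaces. What the paper's approach buys is uniformity with \cref{lem:decomposition_2} and retention of the robustness abstraction as the induction invariant; what yours buys is the elimination of the induction, a reusable factorization statement for trap spaces that the paper never isolates (its \cref{lem:decomposition_5} only builds the cruder trap space $\langle A^1\rangle\times\B^{I_2}$), and the observation that the lemma's hypothesis is equivalent to a purely graph-theoretic one. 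Two small points to tidy up: \cref{lem:decomposition} assumes both parts of the partition are non-empty, so the cases $V_+=\emptyset$ (then $G$ is converging) and $V_-=\emptyset$ (then $G$ is fixing) need a one-line dispatch; and trivial strong components are both perfectly fixing and robustly converging, so you should fix a convention (e.g.\ place them in $V_+$) and note that the hypothesis still forbids arcs from $V_-$ to $V_+$ under that choice.
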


We need the following:

\begin{lemma}\label{lem:decomposition_4}
If each strong component of $G$ is perfectly fixing, then $G$ is perfectly fixing.
\end{lemma}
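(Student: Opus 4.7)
The plan is to prove the statement by strong induction on $|V|$. The base case $|V|\le 1$ is immediate since any $f\in F(G)$ then has fixed points only. The key observation driving the induction is the following \emph{hypothesis stability}: if each strong component of $G$ is perfectly fixing, then each strong component of every subgraph $H$ of $G$ is also perfectly fixing, because each strong component of $H$ is contained in some strong component of $G$ and hence is a subgraph of a perfectly fixing signed digraph.

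To show $G$ is perfectly fixing, we show that every subgraph $H$ of $G$ is fixing. If $|V(H)|<|V|$, the inductive hypothesis applied to $H$ (whose strong components are perfectly fixing by the observation) immediately yields that $H$ is perfectly fixing, and in particular fixing. So we are reduced to the case $V(H)=V$. If $H$ is strong, then $G$ is strong too (adding arcs cannot destroy strong connectivity), so $G$ is its own unique strong component, which is perfectly fixing by assumption, and $H$, being a subgraph of $G$, is fixing.

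Otherwise, choose a partition $(I_1,I_2)$ of $V$ with no arc of $H$ from $I_2$ to $I_1$ and with $H[I_2]$ a strong component of $H$; both $I_1$ and $I_2$ are non-empty proper subsets of $V$. Given $f\in F(H)$ and an attractor $A$ of $\Gamma(f)$, \cref{lem:decomposition} gives $A=A^1\times A^2$ with $A^1$ an attractor of $\Gamma(f^1)$ and $A^2$ an attractor of $\Gamma^2_A$. Since $|I_1|,|I_2|<|V|$ and the strong components of $H[I_1]$ and $H[I_2]$ are perfectly fixing (by the observation), the inductive hypothesis says that $H[I_1]$ and $H[I_2]$ are perfectly fixing. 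Because $G(f^1)$ is a subgraph of $H[I_1]$, it is fixing, so $A^1=\{x\}$ for some $x$; because $G(f^x)$ is a subgraph of $H[I_2]$, it is fixing, so $|A^2|=1$. Hence $|A|=1$ and $H$ is fixing.

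The main delicacy is to carry the \emph{perfectly fixing} conclusion through the induction rather than the weaker \emph{fixing} conclusion, because in the decomposition the interaction graphs $G(f^1)$ and $G(f^x)$ are only subgraphs of $H[I_1]$ and $H[I_2]$, not necessarily equal to them. The hypothesis stability observation is precisely what allows this strengthening to propagate cleanly through the induction.
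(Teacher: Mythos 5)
Your proof is correct and takes essentially the same route as the paper's: the same partition into a terminal strong component $I_2$ and its complement, the same application of \cref{lem:decomposition} to get $A=A^1\times A^2$ with $A^1$ a singleton and $A^2$ an attractor of a fixing dynamics, and the same ``hypothesis stability'' observation (which the paper invokes implicitly in its last sentence to pass from fixing to perfectly fixing); the only differences are cosmetic, namely inducting on $|V|$ rather than on the number of strong components and folding the subgraph quantification into the induction. One trivial slip: the justification of the base case $|V|=1$ (``any $f\in F(G)$ has fixed points only'') is false when $G$ is a single vertex with a negative loop, but in that case $G$ is strong, so the hypothesis itself forces the conclusion and your later ``$H$ strong'' branch already covers it.
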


\begin{proof}
Suppose that every strong component of $G$ is perfectly fixing. We proceed by induction on the number of strong components. If $G$ is strong then the result is obvious. Otherwise, there is a partition $(I_1,I_2)$ of the vertices of $G$ such that $G$ has no arc from $I_2$ to $I_1$ and such that $G[I_2]$ is a strong component of $G$. Let $f\in F(G)$. Since $G(f^1)=G[I_1]$, by induction, $\Gamma^1$ is fixing. Let $A$ be an attractor of $\Gamma(f)$. By \cref{lem:decomposition}, we have $A=A^1\times A^2$ and $A^1$ is an attractor of $\Gamma^1$. Thus $A^1=\{a\}$ for some fixed point $a$ of $f^1$. By the same lemma, $A^2$ is an attractor of $\Gamma^2_{A^1}=\Gamma(f^{a})$. Since $G(f^{a})$ is a subgraph of $G[I_2]$, it is fixing, and thus $|A^2|=1$. We deduce that $|A|=1$, and thus $\Gamma(f)$ is fixing. Consequently, $G$ is fixing. Let $G'$ be a subgraph of $G$. Then each strong component of $G'$ is perfectly fixing and the argument above shows that $G'$ is fixing. Consequently, $G$ is perfectly fixing.
\end{proof}

\begin{lemma}\label{lem:decomposition_5}
Suppose that there is $I_2\subseteq V$ such that $G[I_2]$ is a robustly converging terminal strong component of $G$ and that $G\setminus I_2$ is trap-separating. Then $G$ is trap-separating. 
\end{lemma}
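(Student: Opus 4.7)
The plan is to apply \cref{lem:decomposition} to the partition $(I_1, I_2)$ where $I_1 = V \setminus I_2$; since $G[I_2]$ is a terminal strong component, $G$ (hence $G(f)$ for any $f \in F(G)$) has no arc from $I_2$ to $I_1$, so the lemma applies. Fix $f \in F(G)$ and distinct attractors $A, B$ of $\Gamma(f)$; by the lemma, $A = A^1 \times A^2$ and $B = B^1 \times B^2$, with $A^1, B^1$ attractors of $\Gamma^1 = \Gamma(f^1)$, and $A^2$, $B^2$ attractors of $\Gamma^2_A$, $\Gamma^2_B$ respectively. Our goal is $\langle A \rangle \cap \langle B \rangle = \emptyset$.

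First I would rule out the case $A^1 = B^1$. If this held, then $\Gamma^2_A = \Gamma^2_B$ would have two distinct attractors $A^2$ and $B^2$. But each $f^x$ with $x \in A^1$ satisfies that $G(f^x)$ is a spanning subgraph of $G[I_2]$, so robust convergence of $G[I_2]$ would force $\Gamma^2_A$ to have a unique attractor, a contradiction. Hence $A^1 \neq B^1$. Now $G(f^1) = G[I_1] = G \setminus I_2$ is trap-separating by hypothesis, so $\langle A^1 \rangle \cap \langle B^1 \rangle = \emptyset$ as trap spaces of $\Gamma^1$.

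The main step is to lift this separation to $\Gamma(f)$. I would verify that the cylinder $\langle A^1 \rangle \times \B^{I_2}$ is a trap space of $\Gamma(f)$ containing $A$. Containment is clear since $A = A^1 \times A^2 \subseteq \langle A^1 \rangle \times \B^{I_2}$. To check the trap property, consider an arc $(x,a) \to (y,b)$ of $\Gamma(f)$ starting in this cylinder. If the direction lies in $I_2$, then $x = y \in \langle A^1 \rangle$ and we remain in the cylinder. If the direction lies in $I_1$, then $a = b$ and, using that $G(f)$ has no arc from $I_2$ to $I_1$ (so that the value of $f_i$ for $i \in I_1$ depends only on the $I_1$-part), one has that $x \to y$ is an arc of $\Gamma^1$, so $y \in \langle A^1 \rangle$ by the trap property of $\langle A^1 \rangle$ in $\Gamma^1$. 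This argument is essentially the content of claim (1) in the proof of \cref{lem:decomposition}, and constitutes the only delicate point of the proof.

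By minimality, $\langle A \rangle \subseteq \langle A^1 \rangle \times \B^{I_2}$, and symmetrically $\langle B \rangle \subseteq \langle B^1 \rangle \times \B^{I_2}$. Since
\[
(\langle A^1 \rangle \times \B^{I_2}) \cap (\langle B^1 \rangle \times \B^{I_2}) = (\langle A^1 \rangle \cap \langle B^1 \rangle) \times \B^{I_2} = \emptyset,
\]
we conclude $\langle A \rangle \cap \langle B \rangle = \emptyset$. Since $f \in F(G)$ was arbitrary, $G$ is trap-separating.
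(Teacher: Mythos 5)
Your proof is correct and follows essentially the same route as the paper's: decompose via \cref{lem:decomposition}, rule out $A^1=B^1$ by robust convergence of $G[I_2]$, and in the case $A^1\neq B^1$ lift the disjoint trap spaces $\langle A^1\rangle,\langle B^1\rangle$ of $\Gamma^1$ to the cylinders $\langle A^1\rangle\times\B^{I_2}$ and $\langle B^1\rangle\times\B^{I_2}$ in $\Gamma(f)$. The only difference is that you spell out the verification that these cylinders are trap spaces of $\Gamma(f)$, which the paper asserts without detail.
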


\begin{proof}
Let $f\in F(G)$ and $I_1=V\setminus I_2$. Since $G(f^1)=G[I_1]$, $\Gamma^1$ is trap-separating. Furthermore, for every attractor $A$ of $\Gamma(f)$ and $x\in A$, $G(f^x)$ is a spanning subgraph of $G[I_2]$, which is robustly converging, and we deduce that $\Gamma^2_A$ is converging. Let $A,B$ be distinct attractors of $\Gamma(f)$. By \cref{lem:decomposition}, we have $A=A^1\times A^2$ and $B=B^1\times B^2$. If $A^1\neq B^1$ then, by the same lemma, $A^1,B^1$ are distinct attractors of $\Gamma^1$, which is trap-separating, thus $\langle A^1\rangle\cap \langle B^1\rangle=\emptyset$. Consequently, $\langle A^1 \rangle \times\B^{I_2}$ and $\langle B^1\rangle\times  \B^{I_2}$ are disjoint trap spaces of $\Gamma(f)$ containing $A$ and $B$, and thus $\langle A\rangle\cap \langle B\rangle=\emptyset$. Suppose now that $A^1=B^1$. Then, by the same lemma, $A^2,B^2$ are distinct attractors of $\Gamma^2_A=\Gamma^2_B$, which is converging, a contradiction. This proves that $\Gamma(f)$ is trap-separating. 
\end{proof}

\begin{proof}[\BF{Proof of \cref{lem:decomposition_3}}]
We proceed by induction on the number of strong components. If $G$ is strong then $G$ is either (perfectly) fixing or (robustly) converging and thus $G$ is trap-separating. So suppose that $G$ is not strong. If all the strong components of $G$ are perfectly fixing, then, by \cref{lem:decomposition_4}, $G$ is fixing and thus trap-separating. So suppose that $G$ has a strong component which is robustly converging. Since there is no path from a robustly converging strong component to a perfectly fixing strong component, there is a partition $(I_1,I_2)$ of the vertices of $G$ such that $G$ has no arc from $I_2$ to $I_1$ and such that $G[I_2]$ is a robustly converging strong component of $G$. By induction hypothesis, $G[I_1]$ is trap-separating, and thus, by \cref{lem:decomposition_5}, $G$ is trap-separating.
\end{proof}

\section{Number of positive cycles}\label{sec:number-positive-cycles}

We have proved that if $G$ is non-separating, then it has a positive cycle intersecting a negative cycle. In this section, we prove the following, which says more concerning positive cycles. 

\begin{theorem}\label{thm:PFN}
If the positive feedback number of $G$ is at most one, then $G$ is separating.
\end{theorem}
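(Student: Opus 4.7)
The plan is to split on the value of the positive feedback number of $G$. If it is zero, then $G$ has no positive cycle, so $\Gamma(f)$ is converging for every $f\in F(G)$ by \cref{thm:bib}, hence separating. So assume the positive feedback number equals one, and fix $v\in V$ such that $G\setminus\{v\}$ has no positive cycle. Let $f\in F(G)$ and denote $X^b=\{x\in\B^V\mid x_v=b\}$ for $b\in\B$. The subnetwork $f^b$ of $f$ induced by $X^b$ has signed interaction graph contained in $G\setminus\{v\}$, hence no positive cycle, so by \cref{thm:bib} the induced subgraph $\Gamma(f)[X^b]\cong\Gamma(f^b)$ is converging; let $A_b\subseteq X^b$ denote its unique attractor.

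I would then analyse the attractors of $\Gamma(f)$ according to which of the two halves they meet. Since the only direction of $\Gamma(f)$ connecting $X^0$ and $X^1$ is $v$, the following three facts can be derived from minimality of attractors: (1) if an attractor $A$ of $\Gamma(f)$ satisfies $A\cap X^b\neq\emptyset$, then $A\cap X^b$ is a non-empty trap set of $\Gamma(f)[X^b]$, hence contains $A_b$; (2) if an attractor $A$ is entirely contained in $X^b$, then no arc of $\Gamma(f)$ in direction $v$ can leave $A_b\subseteq A$ (otherwise it would leave $A$), so $A_b$ is a trap set of $\Gamma(f)$ and, by minimality, $A=A_b$; (3) if an attractor $A$ meets both $X^0$ and $X^1$, then by (1) it contains $A_0\cup A_1$, strong connectivity of $A$ prevents $A_b$ from being a trap set of $\Gamma(f)$ for either $b$ (such a trap set would be unable to reach $A\cap X^{1-b}$), and any two such crossing attractors would both contain $A_0$, contradicting the pairwise disjointness of distinct attractors.

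These three observations yield exactly three possibilities: (a) both $A_0$ and $A_1$ are trap sets of $\Gamma(f)$, in which case $A_0$ and $A_1$ are the only attractors of $\Gamma(f)$ and $[A_0]\cap[A_1]\subseteq X^0\cap X^1=\emptyset$; (b) exactly one of $A_0,A_1$ is a trap set of $\Gamma(f)$, in which case it is the unique attractor and $\Gamma(f)$ is trivially separating; (c) neither is a trap set, in which case $\Gamma(f)$ has a unique crossing attractor and is again trivially separating. In every case $\Gamma(f)$ is separating.

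The main obstacle is observation (2): one must promote $A_b$ from an attractor of the restricted graph $\Gamma(f)[X^b]$ to a trap set of the full graph $\Gamma(f)$ precisely when $\Gamma(f)$ has an attractor contained in $X^b$. The rest of the proof is bookkeeping, powered by the fact that flipping $v$ is the only way to move between $X^0$ and $X^1$ in $\Gamma(f)$.
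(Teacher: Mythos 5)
Your proof is correct, but it takes a genuinely different route from the paper's. The paper first proves \cref{lem:Delta_A} (for any two distinct attractors $A,B$ and any $i\in\Delta(A)$, the graph $G\setminus i$ retains a positive cycle — shown by freezing component $i$ and applying \cref{lem:A08} to a closest pair of states in the two resulting attractors), deduces that the special vertex $i$ is constant on every attractor, and then applies \cref{lem:A08} once more to a Hamming-closest pair $(a,b)\in A\times B$ to force a positive cycle through $i$, hence $a_i\neq b_i$ and $[A]\cap[B]=\emptyset$. You instead slice $\B^V$ into the two half-cubes $X^0,X^1$ along the special vertex $v$, invoke the converging theorem of \cref{thm:bib} on each half (legitimate, since the induced subnetworks have interaction graphs inside $G\setminus v$, and the paper notes $\Gamma(f)[X^b]$ is the asynchronous graph of the induced subnetwork), and then run a trap-set classification of the attractors of the full dynamics. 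Both arguments reach the same stronger conclusion — at most two attractors, separated by the hyperplane $x_v=\mathrm{const}$ — and both ultimately rest on the same positive-cycle machinery, but yours uses the converging theorem as a black box where the paper re-deploys \cref{lem:A08} directly; the paper's detour through \cref{lem:Delta_A} pays off later since that lemma is reused in \cref{pro:one_positive_cycle}, while your half-cube bookkeeping is more self-contained and arguably more transparent about where the two attractors can live. All the steps you flag as needing care (promoting $A_b$ to a trap set of $\Gamma(f)$ when an attractor lies entirely in $X^b$, and the uniqueness of a crossing attractor via disjointness of distinct attractors) do go through as you describe.
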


\cref{ex:not-trap-sep} shows that, in the theorem, separating cannot be replaced by trap-separating. For the proof we need the following lemma.

\begin{lemma}\label{lem:Delta_A}
Let $f\in F(G)$ and $A,B$ be distinct attractors of $\Gamma(f)$. For every $i\in\Delta(A)$, $G\setminus i$ has a positive cycle. 
\end{lemma}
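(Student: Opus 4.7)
The plan is to apply \cref{lem:A08} to a carefully chosen pair of configurations $x\in A$ and $y\in B$ with $x_i=y_i$, so that the positive cycle produced lives in $G[\Delta(x,y)]\subseteq G\setminus i$.

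First I would set up the pair. Since $i\in\Delta(A)$, the attractor $A$ contains configurations with each possible value of the $i$-th coordinate, so for any $b\in B$ one can pick $a\in A$ with $a_i=b_i$; hence the set $S:=\{(x,y)\in A\times B:x_i=y_i\}$ is non-empty. I would then choose $(x,y)\in S$ minimizing $d(x,y)$. Note that $x\neq y$, so $\Delta(x,y)\neq\emptyset$, because distinct attractors of $\Gamma(f)$, being inclusion-minimal non-empty trap sets, are disjoint.

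Next I would verify the hypotheses of \cref{lem:A08}, namely that $f_j(x)=x_j$ and $f_j(y)=y_j$ for every $j\in\Delta(x,y)$. If instead $f_j(x)\neq x_j$ for some $j\in\Delta(x,y)$, then $f_j(x)=1-x_j=y_j$, so $\Gamma(f)$ has the arc $x\to x+e_j$. Because $A$ is a trap set, $x+e_j\in A$. Since $x_i=y_i$ but $x_j\neq y_j$, we must have $j\neq i$, and therefore $(x+e_j)_i=x_i=y_i$; thus $(x+e_j,y)\in S$ with $d(x+e_j,y)=d(x,y)-1$, contradicting the minimality of $(x,y)$. The same argument applied from the $B$-side gives $f_j(y)=y_j$.

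\cref{lem:A08} then yields a positive cycle in $G[\Delta(x,y)]$, and since $i\notin\Delta(x,y)$ this cycle lies in $G\setminus i$, as required. The only subtle point, and the one I expect to highlight in the writeup, is that the minimization of $d(x,y)$ must be \emph{constrained} to pairs agreeing at coordinate $i$; this is precisely where the hypothesis $i\in\Delta(A)$ enters, guaranteeing that $S\neq\emptyset$ so that the constrained minimum exists.
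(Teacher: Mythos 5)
Your proof is correct, and it invokes the same key lemma as the paper (\cref{lem:A08}) with the same target: a pair $(x,y)\in A\times B$ with $x_i=y_i$ such that $f_j(x)=x_j$ and $f_j(y)=y_j$ for all $j\in\Delta(x,y)$. The route to that pair is different, though. The paper freezes component $i$, passing to $f'$ with $f'_i(x)=x_i$ and $f'_j=f_j$ otherwise, picks attractors $A'\subseteq A$ and $B'\subseteq B$ of $\Gamma(f')$ reachable from suitable $a\in A$, $b\in B$ with $a_i=b_i$ (so coordinate $i$ is automatically constant on $A'\cup B'$), and then runs the standard \emph{unconstrained} minimization of $d$ over $A'\times B'$; the positive cycle lands in $G(f')\setminus i=G\setminus i$. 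You instead minimize $d$ directly over the constrained set $S=\{(x,y)\in A\times B: x_i=y_i\}$, and the crucial observation — that any improving flip occurs in a direction $j\in\Delta(x,y)$, hence $j\neq i$, so the flipped pair stays in $S$ — is exactly what makes the constrained minimum behave like an unconstrained one. Your version is slightly more economical (no auxiliary network, no need to exhibit sub-attractors of the frozen dynamics inside $A$ and $B$), while the paper's version reuses its standard minimization argument verbatim. Your two side remarks are both necessary and correctly justified: $S\neq\emptyset$ precisely because $i\in\Delta(A)$, and $\Delta(x,y)\neq\emptyset$ because distinct attractors, as inclusion-minimal non-empty trap sets, are disjoint.
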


\begin{proof}
Let $i\in \Delta(A)$ and $b\in B$. Then there is $a\in A$ with $a_i=b_i$. Let $f'\in F(V)$ defined as follows: for all configurations $x$ on $V$, $f'_j(x)=f_j(x)$ for $j\neq i$ and $f'_i(x)=x_i$. Then $\Gamma(f')$ is the spanning subgraph of $\Gamma(f)$ obtained by deleting all the arcs in the direction $i$. Let $A'$ and $B'$ be attractors of $\Gamma(f')$ which are reachable in $\Gamma(f')$ from $a$ and $b$, respectively. Then $A'\subseteq A$ and $B'\subseteq B$ thus $A'\cap B'=\emptyset$. Furthermore, since $a_i=b_i$ we have $x_i=a_i=b_i$ for all $x\in A'\cup B'$. Let $(\alpha,\beta)\in A'\times B'$ with $\Delta(\alpha,\beta)$ minimum. Then $f'_j(\alpha)=\alpha_j$ and $f'_j(\beta)=\beta_j$ for all $j\in\Delta(\alpha,\beta)$. By \cref{lem:A08}, the subgraph of $G(f')$ induced by $\Delta(\alpha,\beta)$ has a positive cycle $C'$, which does not contain $i$ since $i\not\in \Delta(\alpha,\beta)$. Thus $C'$ is a positive cycle of $G(f')\setminus i=G\setminus i$. 
\end{proof}

\begin{proof}[\BF{Proof of \cref{thm:PFN}}]
If the positive feedback number of $G$ is zero, then $G$ is converging. So suppose that there is a vertex $i$ such that $G\setminus i$ has no positive cycle. Let $f\in F(G)$ and let $A,B$ be distinct attractors of $\Gamma(f)$. By \cref{lem:Delta_A}, we have $i\not\in\Delta(A)\cup\Delta(B)$.   Let $(a,b)\in A\times B$ with $\Delta(a,b)$ minimum. Then $f_j(a)=a_j$ and $f_j(b)=b_j$ for all $j\in\Delta(a,b)$. Hence, by \cref{lem:A08}, $G[\Delta(a,b)]$ has a positive cycle $C$. Since $C$ contains $i$ we have $a_i\neq b_i$. Since $i\not\in\Delta(A)\cup\Delta(B)$, we have $[A]\cap [B]=\emptyset$ and thus $\Gamma(f)$ is separating. 
\end{proof}

Hence, if $G$ is non-separating, then $G$ has at least two disjoint positive cycles or at least three positive cycles. The example below show that if $G$ is non-separating, then $G$ does not necessarily have two disjoint positive cycles. 

\begin{example}
Consider $f\in F(3)$ defined by $f_1(x)=x_2 \bar x_3 \lor x_3 \bar x_1 \lor x_3 \bar x_2$, $f_2(x)=x_1 \bar x_3 \lor x_3 \bar x_1 \lor x_3 \bar x_2$ and $f_3(x)=x_1 \bar x_2 \lor x_2 \bar x_1 \lor x_2 \bar x_3$. $\Gamma(f)$ is non-separating, and $G(f)$ does not have two disjoint positive cycles.
\[
\begin{array}{c}
\begin{tikzpicture}
\pgfmathparse{1}
\node (000) at (0,0){{\boldmath \textcolor{Plum}{$000$} \unboldmath}};
\node (001) at (1,1){$001$};
\node (010) at (0,2){$010$};
\node (011) at (1,3){{\boldmath \textcolor{Blue}{$011$} \unboldmath}};
\node (100) at (2,0){$100$};
\node (101) at (3,1){{\boldmath \textcolor{Blue}{$101$} \unboldmath}};
\node (110) at (2,2){{\boldmath \textcolor{Blue}{$110$} \unboldmath}};
\node (111) at (3,3){{\boldmath \textcolor{Blue}{$111$} \unboldmath}};
\path[thick,->,draw,black]
(001) edge (000)
(001) edge (011)
(001) edge (101)
(010) edge (000)
(010) edge (011)
(010) edge (110)
(011) edge[ultra thick,Blue,bend right=10] (111)
(101) edge[ultra thick,Blue,bend right=10] (111)
(110) edge[ultra thick,Blue,bend right=10] (111)
(100) edge (000)
(100) edge (110)
(100) edge (101)
(111) edge[ultra thick,Blue,bend right=10] (011)
(111) edge[ultra thick,Blue,bend right=10] (101)
(111) edge[ultra thick,Blue,bend right=10] (110)
;
\end{tikzpicture}
\end{array}
\qquad
\begin{array}{c}
\begin{tikzpicture}
\node[outer sep=1,inner sep=2,circle,draw,thick] (1) at ({-120-90}:1){$1$};
\node[outer sep=1,inner sep=2,circle,draw,thick] (2) at ({120-90}:1){$2$};
\node[outer sep=1,inner sep=2,circle,draw,thick] (3) at (-90:1){$3$};
\draw[red,->,thick] (1.{-120-90-20}) .. controls ({-120-90-20}:2) and ({-120-90+20}:2) .. (1.{-120-90+20});
\draw[red,->,thick] (2.{120-90-20}) .. controls ({120-90-20}:2) and ({120-90+20}:2) .. (2.{120-90+20});
\draw[red,->,thick] (3.{-90-20}) .. controls ({-90-20}:2) and ({-90+20}:2) .. (3.{-90+20});
\path[->,thick]
(1) edge[red,bend left=40] (2)
(2) edge[red,bend right=15] (1)
(2) edge[Green,bend left=20] (1)
(1) edge[Green,bend left=70] (2)
(1) edge[red,bend right=15] (3)
(3) edge[red,bend left=40] (1)
(3) edge[Green,bend left=70] (1)
(1) edge[Green,bend left=20] (3)
(2) edge[red,bend left=40] (3)
(3) edge[red,bend right=15] (2)
(3) edge[Green,bend left=20] (2)
(2) edge[Green,bend left=70] (3)
;
\end{tikzpicture}
\end{array}
\]
\end{example}

With \cref{lem:Delta_A}, we can give new sufficient conditions for $G$ to be trap-separating or converging.  

\begin{proposition}\label{pro:one_positive_cycle}
Suppose that $G$ has a unique positive cycle $C$, and that every negative cycle of $G$ intersects $C$. Then $G$ is trap-separating. If, in addition, $G$ is strong and has at least one negative cycle, then $G$ is converging. 
\end{proposition}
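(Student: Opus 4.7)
My plan is to dispatch the two conclusions of the proposition in turn, leveraging the lemmas and propositions already established.

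For the trap-separating claim, I will combine \cref{lem:Delta_A} and \cref{lem:R10}. Given two distinct attractors $A, B$ of $\Gamma(f)$, \cref{lem:Delta_A} forces $G \setminus i$ to contain a positive cycle for every $i \in \Delta(A)$; since $C$ is the unique positive cycle of $G$, no vertex of $V(C)$ can lie in $\Delta(A)$, so $\Delta(A) \cap V(C) = \emptyset$, and likewise $\Delta(B) \cap V(C) = \emptyset$. If either attractor were cyclic (say $|A| \geq 2$), \cref{lem:R10} would produce a negative cycle inside $G[\Delta(A)]$, disjoint from $V(C)$ and hence contradicting the hypothesis that every negative cycle meets $C$. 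Therefore both attractors are singletons $\{a\}, \{b\}$, their trap spaces coincide with themselves, and trap-separation follows.

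For the converging claim, I observe first that the hypothesis excludes any pair of vertex-disjoint cycles of opposite signs in $G$. Combined with the trap-separating conclusion and \cref{pro:disjoint-opposite}, this forces $\Gamma(f)$ to be either converging or fixing for every $f \in F(G)$. It remains to exclude the fixing case with at least two fixed points. Assuming for contradiction that $f$ has distinct fixed points $a, b$, I switch via \cref{pro:BN_switch} so that $a = \ZERO$, and apply \cref{lem:A08} to the pair $\ZERO, b$. The resulting positive cycle in $G[\Delta(\ZERO, b)]$ must be $C$ by uniqueness, so $V(C) \subseteq W := \{i : b_i = 1\}$; moreover, each arc of $C$ in $G$ has sign $(b_j - 0)(b_i - 0) = 1$, so $C$ is composed entirely of positive arcs.

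To derive the contradiction I plan to induct on $|V|$ using the subnetwork $h$ of $f$ induced by $[\ZERO, b]$ as the reduction. On the component set $W$, the BN $h$ has $\ZERO$ and $\ONE$ as distinct fixed points, and $G(h) \subseteq G[W]$ inherits $C$ as its unique positive cycle and the property that every negative cycle meets $C$. When $G(h)$ remains strong and still carries a negative cycle, the inductive hypothesis yields that $h$ is converging, contradicting the two distinct fixed points. In the residual cases---notably the base $W = V$ (so $b = \ONE$), and the sub-cases where $G(h)$ loses strong connectivity or loses negative cycles---the argument must directly exploit a negative cycle $D$ through some vertex $i \in V(C)$ of $G$: the positive in-arc of $i$ from $C$ and the eventually-negative in-arc of some vertex of $V(D)$ along $D$ impose incompatible monotonicity requirements on $f_i$ given the constraints $f_i(\ZERO) = 0$ and $f_i(b) = b_i$, in the spirit of the small-dimensional calculations motivating the statement. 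I expect this last step---handling the cases where the subnetwork reduction does not preserve the full hypothesis---to be the principal technical obstacle, and it will likely require the decomposition machinery of \cref{lem:decomposition,lem:decomposition_3} applied to the strong components of $G(h)$ together with direct manipulation of the arc realizations in $G$.
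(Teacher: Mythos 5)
Your argument for the trap-separating half is correct and is essentially the paper's: \cref{lem:Delta_A} forces $\Delta(A)$ to be disjoint from the vertex set of the unique positive cycle $C$, and \cref{lem:R10} then rules out any attractor of size at least two, so $\Gamma(f)$ is fixing or converging for every $f\in F(G)$, hence trap-separating.

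The converging half has a genuine gap. The paper disposes of the remaining case (fixing with at least two fixed points) by invoking \cref{lem:R18a} from \cite{R18}: a strong signed digraph with a unique positive cycle and at least one negative cycle admits at most one fixed point. You instead attempt to reprove this fact from scratch, and the attempt is incomplete: you yourself flag that the base case $b=\ONE$ and the cases where the subnetwork induced by $[\ZERO,b]$ fails to remain strong or loses its negative cycles are unresolved, and these are precisely the hard cases. Moreover, the contradiction you sketch for them --- that a positive in-arc of some $i\in V(C)$ coming from $C$ and a negative in-arc arising from a negative cycle $D$ ``impose incompatible monotonicity requirements on $f_i$'' given $f_i(\ZERO)=0$ and $f_i(b)=b_i$ --- does not hold as stated: a local function such as $f_i(x)=x_k\lor x_{k'}\bar x_j$ satisfies $f_i(\ZERO)=0$ and $f_i(\ONE)=1$ while having both a positive in-arc from $k$ and a negative in-arc from $j$, so no contradiction follows from these constraints alone. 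The uniqueness of the fixed point is genuinely the content of \cref{lem:R18a} and needs its own nontrivial proof; the clean fix is to cite that lemma, as the paper does, after which your first half immediately yields the conclusion.
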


We need the following lemma.

\begin{lemma}[\cite{R18}]\label{lem:R18a}
Suppose that $G$ is strong, has a unique positive cycle, and at least one negative cycle. Then every $f\in F(G)$ has at most one fixed point. 
\end{lemma}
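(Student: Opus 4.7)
The plan is to argue by contradiction: suppose $f \in F(G)$ admits two distinct fixed points $x \neq y$, and set $V' = \Delta(x,y)$. Applying \cref{lem:A08} to the pair $(x,y)$ (both being fixed points, the hypothesis $f_i(x)=x_i$, $f_i(y)=y_i$ is satisfied for all $i\in V'$), $G[V']$ contains a positive cycle whose arcs $j \to i$ have signs $(y_j - x_j)(y_i - x_i)$. Since $G$ has a unique positive cycle $C$, this must be $C$, and in particular $V(C) \subseteq V'$.

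Next, I would normalize via a switch. Let $J = \{i \in V : x_i = 1\}$ and let $h$ be the $J$-switch of $f$. By \cref{pro:BN_switch}, $G(h) = G^J$ and $h$ has fixed points $x + e_J = \mathbf{0}$ and $y + e_J = e_{V'}$. For $i \in V'$ one has $\sigma_J(i) = -(y_i - x_i)$, hence each arc $j \to i$ of $C$ (whose endpoints lie in $V(C) \subseteq V'$) acquires new sign $\sigma_J(j)\cdot (y_j - x_j)(y_i - x_i)\cdot \sigma_J(i) = (y_j - x_j)^2 (y_i - x_i)^2 = +1$. Thus in $G^J$, the cycle $C$ is full-positive. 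Since switches preserve cycle signs, strong connectivity, uniqueness of the positive cycle and existence of a negative cycle, I may replace $(f, G)$ by $(h, G^J)$ and assume henceforth that $C$ is full-positive, $x = \mathbf{0}$, $y = e_{V'}$, $V(C) \subseteq V'$, and $G$ still has a negative cycle $D$.

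The core of the argument is then to use the strong connectivity of $G$ together with $D$ to construct a positive cycle $C'$ of $G$ distinct from $C$, contradicting the uniqueness of the positive cycle. I would split into two cases. If $V(C) \cap V(D) \neq \emptyset$, pick $c$ in the intersection: the closed walk that first goes once around $C$ from $c$ and then once around $D$ from $c$ has sign $(+)(-)=-$ and decomposes into simple cycles whose signs multiply to $-$; to avoid the trivial decomposition $\{C,D\}$, insert a diversion provided by strong connectivity of $G$ (a path in $G$ from a vertex of $D$ back to a vertex of $C$ not using arcs of $C\cup D$), and extract from the refined closed walk a simple cycle that uses at least one arc outside $C$ and has positive sign. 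If $V(C) \cap V(D) = \emptyset$, strong connectivity supplies internally vertex-disjoint paths $P$ from some $c\in V(C)$ to some $d\in V(D)$ and $Q$ from some $d'\in V(D)$ to some $c'\in V(C)$; splicing $P$, a $d$-to-$d'$ portion of $D$, $Q$, and a $c'$-to-$c$ portion of $C$ produces a closed walk whose sign can be tuned to $+1$ by choosing between the two complementary $d$-to-$d'$ portions of $D$ (they have opposite signs since $D$ is negative), and from which one extracts a simple cycle $C'\neq C$ of positive sign (distinct from $C$ because it uses arcs of $P$).

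The main obstacle I anticipate is making rigorous this splicing argument: one must simultaneously guarantee that the extracted cycle is simple, positive, and distinct from $C$, for every possible arrangement of $C$, $D$, and the auxiliary paths in $G$. The delicate point is controlling what happens when $P$ or $Q$ happens to re-enter $V(C)\cup V(D)$ before reaching its intended endpoint; I would handle this by choosing $P$ and $Q$ to be shortest such paths and, if necessary, by induction on the number of intermediate vertices of $P\cup Q$ lying outside $V(C)\cup V(D)$, reducing to a minimal configuration in which the splicing is clean.
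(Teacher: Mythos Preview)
The paper does not prove this lemma; it is quoted from \cite{R18} and used as a black box. So there is no ``paper's proof'' to compare with, and your proposal must stand on its own.

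Your reduction in steps 1--3 is fine: via \cref{lem:A08} and a switch you may assume the unique positive cycle $C$ is full-positive, the two fixed points are $\mathbf 0$ and $e_{V'}$, and $V(C)\subseteq V'$. The genuine gap is step~4. After the switch you try to manufacture a second positive cycle using \emph{only} the facts that $G$ is strong, $C$ is full-positive, and some negative cycle $D$ exists. That is a purely graph-theoretic claim, and it is false. Take $V=\{1,2\}$ with a positive loop on $1$, a negative arc $1\to 2$, and a positive arc $2\to 1$: this signed digraph is strong, its unique positive cycle $C$ (the loop on $1$) is full-positive, and the $2$-cycle $D$ is negative. There is no second positive cycle, and there are no arcs outside $C\cup D$, so your ``diversion'' cannot even get started. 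Your anticipated obstacle (making the splicing rigorous) is therefore not the real obstacle; the approach itself cannot succeed, because you have discarded the only piece of information that distinguishes the situation from a generic strong signed digraph with one positive and one negative cycle --- namely, the two fixed points of $f$.

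A workable argument has to keep exploiting $f$. One natural line: since $G$ is strong and has at least one negative cycle, some vertex $i$ of $C$ has an in-neighbour along an arc not in $C$ (otherwise $C$ would be an isolated strong component or $G$ would have only the cycle $C$). Use the fixed-point equations $f_i(\mathbf 0)=0$ and $f_i(e_{V'})=(e_{V'})_i$ together with the extra in-neighbour to force, via the sign information of \cref{lem:A08}, a second positive cycle through $i$ (now genuinely using that both configurations are fixed, not just the existence of $C$ and $D$). Alternatively, argue first that $V'=V$ using strong connectivity and the fixed-point equations along paths leaving $V'$, and then exploit the negative cycle inside $G[V']$. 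Either way, the Boolean network $f$ must do real work beyond producing the single positive cycle of step~2.
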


\begin{proof}[\BF{Proof of \cref{pro:one_positive_cycle}}]
Suppose that $G$ has a unique positive cycle $C$, and that every negative cycle of $G$ intersects $C$. Let $f\in F(G)$. We prove that $\Gamma(f)$ is fixing or converging, and thus $G$ is trap-separating. Suppose that $\Gamma(f)$ is not converging. Let $A,B$ be distinct attractors of $\Gamma(f)$. By \cref{lem:Delta_A}, $\Delta(A)$ is disjoint from the vertex set of $C$. Hence, by \cref{lem:R10}, if $|A|\geq 2$ then $G$ has a negative cycle disjoint from $C$, a contradiction. Thus $|A|=1$, that is, $\Gamma(f)$ is fixing. Suppose now that, in addition, $G$ is strong and has at least one negative cycle. If $\Gamma(f)$ is not converging, then it is fixing and thus $f$ has at least two fixed points, and this contradicts \cref{lem:R18a}. Thus $\Gamma(f)$ is always converging. 
\end{proof}

The following example demonstrates that we cannot replace trap-separating with trapping in the first part of \cref{pro:one_positive_cycle}.
\cref{ex:sep-not-conv-fix-graph} shows that we cannot drop the hypothesis of $G$ strong in the second part.

\begin{example}\label{ex:pos-not-trapping}
Consider $f\in F(3)$ defined by $f_1(x)=\bar x_1 x_2$, $f_2(x)=x_1\lor \bar x_2$ and $f_3(x)=x_1 \bar x_2$. $G(f)$ has a unique positive cycle that intersects all cycles. $\Gamma(f)$ is trap-separating but not trapping.
\[
\begin{array}{c}
\begin{tikzpicture}
\pgfmathparse{1}
\node (000) at (0,0){{\boldmath \textcolor{Blue}{$000$} \unboldmath}};
\node (001) at (1,1){$001$};
\node (010) at (0,2){{\boldmath \textcolor{Blue}{$010$} \unboldmath}};
\node (011) at (1,3){$011$};
\node (100) at (2,0){$100$};
\node (101) at (3,1){$101$};
\node (110) at (2,2){{\boldmath \textcolor{Blue}{$110$} \unboldmath}};
\node (111) at (3,3){$111$};
\path[thick,->,draw,black]
(000) edge[bend left=10,ultra thick,Blue] (010)
(010) edge[bend left=10,ultra thick,Blue] (000)
(010) edge[bend left=10,ultra thick,Blue] (110)
(110) edge[bend left=10,ultra thick,Blue] (010)
(001) edge (000)
(001) edge[bend left=10] (011)
(011) edge (010)
(011) edge[bend left=10] (111)
(011) edge[bend left=10] (001)
(100) edge (101)
(100) edge (110)
(100) edge (000)
(101) edge (001)
(101) edge (111)
(111) edge[bend left=10] (011)
(111) edge (110)
;
\end{tikzpicture}
\end{array}
\qquad
\begin{array}{c}
\begin{tikzpicture}
\useasboundingbox (-2.2,-1.3) rectangle (2.2,0.7);
\node[outer sep=1,inner sep=2,circle,draw,thick] (1) at ({180}:1){$1$};
\node[outer sep=1,inner sep=2,circle,draw,thick] (2) at ({0}:1){$2$};
\node[outer sep=1,inner sep=2,circle,draw,thick] (3) at (-90:1){$3$};
\draw[red,->,thick] (1.{180-20}) .. controls ({180-20}:2.3) and ({180+20}:2.3) .. (1.{180+20});
\draw[red,->,thick] (2.{0-20}) .. controls ({0-20}:2.3) and ({0+20}:2.3) .. (2.{0+20});
\path[->,thick]
(1) edge[Green,bend right=25] (2)
(1) edge[Green,bend right=15] (3)
(2) edge[red,bend left=15] (3)
(2) edge[Green,bend right=25] (1)
;
\end{tikzpicture}
\end{array}
\]
\end{example}

What if $G$ is strong and has only one positive cycle? By \cref{thm:PFN} we know that $G$ is separating.
The following example shows that $G$ is not necessarily trapping. It also shows that $G$ is not necessarily trapping if it is strong and has feedback vertex number equal to one.
Whether $G$ is trap-separating in these cases remains an open question.

\begin{example}\label{ex:strong-not-trapping}
Let $f\in F(4)$ be defined by $f_1(x)=\bar x_3$, $f_2(x)=\bar x_1$, $f_3(x)=\bar x_2 \bar x_4$ and $f_4(x)=x_1x_2x_3$. $G(f)$ is strong, has only one positive cycle and has feedback number one, but is not trapping.
\[
\begin{array}{c}
\begin{tikzpicture}
\pgfmathparse{1}
\node (0000) at (0,0){$0000$};
\node (0010) at (1,1){{\boldmath \textcolor{Blue}{$0010$} \unboldmath}};
\node (0100) at (0,2){{\boldmath \textcolor{Blue}{$0100$} \unboldmath}};
\node (0110) at (1,3){{\boldmath \textcolor{Blue}{$0110$} \unboldmath}};
\node (1000) at (2,0){{\boldmath \textcolor{Blue}{$1000$} \unboldmath}};
\node (1010) at (3,1){{\boldmath \textcolor{Blue}{$1010$} \unboldmath}};
\node (1100) at (2,2){{\boldmath \textcolor{Blue}{$1100$} \unboldmath}};
\node (1110) at (3,3){$1110$};
\node (0001) at (5,0){$0001$};
\node (0011) at (6,1){$0011$};
\node (0101) at (5,2){$0101$};
\node (0111) at (6,3){$0111$};
\node (1001) at (7,0){$1001$};
\node (1011) at (8,1){$1011$};
\node (1101) at (7,2){$1101$};
\node (1111) at (8,3){$1111$};
\path[thick,->,draw,black]
(0100) edge[ultra thick,Blue] (1100)
(1100) edge[ultra thick,Blue] (1000)
(1000) edge[ultra thick,Blue] (1010)
(1010) edge[ultra thick,Blue] (0010)
(0010) edge[ultra thick,Blue] (0110)
(0110) edge[ultra thick,Blue] (0100)
(1110) edge (0110)
(1110) edge (1010)
(1110) edge (1100)
(0000) edge (1000)
(0000) edge (0100)
(0000) edge (0010) 
(0101) edge (1101)
(1101) edge (1001)
(1011) edge (0011)
(1011) edge (1001)
(0011) edge (0001)
(0011) edge (0111)
(0111) edge (0101)
(1111) edge (0111)
(1111) edge (1011)
(1111) edge (1101)
(0001) edge (1001)
(0001) edge (0101)
(0001) edge[bend left=20] (0000)
(0011) edge[bend left=20] (0010)
(0101) edge[bend right=20] (0100)
(0111) edge[bend right=20] (0110)
(1001) edge[bend left=20] (1000)
(1011) edge[bend left=20] (1010)
(1101) edge[bend right=20] (1100)
(1110) edge[bend left=20] (1111)

;
\end{tikzpicture}
\end{array}
\qquad
\begin{array}{c}
\begin{tikzpicture}
\node[outer sep=1,inner sep=2,circle,draw,thick] (1) at (90:1){$1$};
\node[outer sep=1,inner sep=2,circle,draw,thick] (2) at ({90+120}:1){$2$};
\node[outer sep=1,inner sep=2,circle,draw,thick] (3) at ({90-120}:1){$3$};
\node[outer sep=1,inner sep=2,circle,draw,thick] (4) at (-90:2){$4$};
\path[->,thick]
(1) edge[red,bend right=15] (2)
(2) edge[red,bend right=15] (3)
(3) edge[red,bend right=15] (1)
(1) edge[Green] (4)
(2) edge[Green] (4)
(3) edge[Green,bend right=15] (4)
(4) edge[red,bend right=15] (3)
;
\end{tikzpicture}
\end{array}
\]
\end{example}

\section{Number of negative cycles}\label{sec:number-negative-cycles}

In this section, we say more about negative cycles in non-separating signed digraphs. There are non-separating signed digraphs $G$ with negative feedback number one, and even with negative arc-feedback number one (that is, one arc belongs to every negative cycle), as showed by the examples below. 

\begin{example}\label{ex:negative_feedback_1}
Let $f\in F(3)$ be defined by $f_1(x)=x_1+x_2$, $f_2(x)=\bar x_1x_2\lor x_3$ and $f_3(x)=x_1$. Then $\Gamma(f)$ is non-separating and $\{1\}$ is a negative feedback vertex set of $G(f)$. 
\[
\begin{array}{c}
\begin{tikzpicture}
\pgfmathparse{1}
\node (000) at (0,0){{\boldmath \textcolor{Plum}{$000$} \unboldmath}};
\node (001) at (1,1){$001$};
\node (010) at (0,2){{\boldmath \textcolor{Blue}{$010$} \unboldmath}};
\node (011) at (1,3){{\boldmath \textcolor{Blue}{$011$} \unboldmath}};
\node (100) at (2,0){{\boldmath \textcolor{Blue}{$100$} \unboldmath}};
\node (101) at (3,1){{\boldmath \textcolor{Blue}{$101$} \unboldmath}};
\node (110) at (2,2){{\boldmath \textcolor{Blue}{$110$} \unboldmath}};
\node (111) at (3,3){{\boldmath \textcolor{Blue}{$111$} \unboldmath}};
\path[thick,->,draw,black]
(001) edge (011)
(001) edge (000)
(010) edge[bend left=10,ultra thick,Blue] (110)
(011) edge[bend left=10,ultra thick,Blue] (111)
(011) edge[ultra thick,Blue] (010)
(100) edge[ultra thick,Blue] (101)
(101) edge[ultra thick,Blue] (111)
(110) edge[bend left=10,ultra thick,Blue] (010)
(110) edge[ultra thick,Blue] (100)
(110) edge[ultra thick,Blue] (111)
(111) edge[bend left=10,ultra thick,Blue] (011)
;
\end{tikzpicture}
\end{array}
\qquad
\begin{array}{c}
\begin{tikzpicture}
\useasboundingbox (-2.2,-1.3) rectangle (2.2,0.7);
\node[outer sep=1,inner sep=2,circle,draw,thick] (1) at ({180}:1){$1$};
\node[outer sep=1,inner sep=2,circle,draw,thick] (2) at ({0}:1){$2$};
\node[outer sep=1,inner sep=2,circle,draw,thick] (3) at (-90:1){$3$};
\draw[Green,->,thick] (1.{180-20}) .. controls ({180-20}:2.3) and ({180+20}:2.3) .. (1.{180+20});
\draw[red,->,thick] (1.{180-60}) .. controls ({180-30}:2.8) and ({180+30}:2.8) .. (1.{180+60});
\draw[Green,->,thick] (2.{0-20}) .. controls ({0-20}:2.3) and ({0+20}:2.3) .. (2.{0+20});
\path[->,thick]
(1) edge[red,bend right=25] (2)
(1) edge[Green,bend right=15] (3)
(3) edge[Green,bend right=15] (2)
(2) edge[red,bend right=25] (1)
(2) edge[Green,bend right=55] (1)
;
\end{tikzpicture}
\end{array}
\]
\end{example}

\begin{example}\label{ex:negative_arc_feedback_1}
Let $f\in F(4)$ be defined by $f_1(x)=x_2\bar x_3\lor \bar x_2x_3\lor x_3\bar x_4$, $f_2(x)=x_2\bar x_3\lor x_4$, $f_3(x)=x_1$ and $f_4(x)=x_3$. Then $\Gamma(f)$ is non-separating and every negative cycle contains the positive arc from $1$ to $3$. 
\[
\begin{array}{c}
\begin{tikzpicture}
\pgfmathparse{1}
\node (0000) at (0,0){{\boldmath \textcolor{Plum}{$0000$} \unboldmath}};
\node (0010) at (1,1){$0010$};
\node (0100) at (0,2){{\boldmath \textcolor{Blue}{$0100$} \unboldmath}};
\node (0110) at (1,3){$0110$};
\node (1000) at (2,0){$1000$};
\node (1010) at (3,1){{\boldmath \textcolor{Blue}{$1010$} \unboldmath}};
\node (1100) at (2,2){{\boldmath \textcolor{Blue}{$1100$} \unboldmath}};
\node (1110) at (3,3){{\boldmath \textcolor{Blue}{$1110$} \unboldmath}};
\node (0001) at (5,0){$0001$};
\node (0011) at (6,1){$0011$};
\node (0101) at (5,2){{\boldmath \textcolor{Blue}{$0101$} \unboldmath}};
\node (0111) at (6,3){{\boldmath \textcolor{Blue}{$0111$} \unboldmath}};
\node (1001) at (7,0){$1001$};
\node (1011) at (8,1){{\boldmath \textcolor{Blue}{$1011$} \unboldmath}};
\node (1101) at (7,2){{\boldmath \textcolor{Blue}{$1101$} \unboldmath}};
\node (1111) at (8,3){{\boldmath \textcolor{Blue}{$1111$} \unboldmath}};
\path[thick,->,draw,black]
(0010) edge (1010)
(0010) edge (0000)
(0010) edge[bend right=20] (0011)
(0100) edge[ultra thick,Blue] (1100)
(0110) edge (1110)
(0110) edge (0010)
(0110) edge (0100)
(0110) edge[bend left=20] (0111)
(1000) edge (0000)
(1000) edge (1010)
(1010) edge[ultra thick,bend right=20,Blue] (1011)
(1100) edge[ultra thick,Blue] (1110)
(1110) edge[ultra thick,Blue] (1010)
(1110) edge[ultra thick,bend left=20,Blue] (1111)
(0001) edge (0101)
(0001) edge[bend left=20] (0000)
(0011) edge (1011)
(0011) edge (0111)
(0011) edge (0001)
(0101) edge[ultra thick,Blue] (1101)
(0101) edge[ultra thick,bend right=20,Blue] (0100)
(0111) edge[ultra thick,Blue] (0101)
(1001) edge (0001)
(1001) edge (1101)
(1001) edge (1011)
(1001) edge[bend left=20] (1000)
(1011) edge[ultra thick,Blue] (1111)
(1101) edge[ultra thick,Blue] (1111)
(1101) edge[ultra thick,Blue,bend right=20] (1100)
(1111) edge[ultra thick,Blue] (0111)
;
\end{tikzpicture}
\end{array}
\qquad
\begin{array}{c}
\begin{tikzpicture}
\node[outer sep=1,inner sep=2,circle,draw,thick] (1) at (90:1.5){$1$};
\node[outer sep=1,inner sep=2,circle,draw,thick] (2) at ({0}:1.5){$2$};
\node[outer sep=1,inner sep=2,circle,draw,thick] (4) at ({180}:1.5){$4$};
\node[outer sep=1,inner sep=2,circle,draw,thick] (3) at (-90:1.5){$3$};
\draw[Green,->,thick] (2.{0-20}) .. controls ({0-20}:2.8) and ({0+20}:2.8) .. (2.{0+20});
\path[->,thick]
(1) edge[Green,bend left=15] (3)
(2) edge[Green,bend right=10] (1)
(2) edge[red,bend right=40] (1)
(3) edge[Green,bend left=15] (1)
(3) edge[red,bend left=50] (1)
(3) edge[red, bend right=40] (2)
(3) edge[Green,bend left=40] (4)
(4) edge[Green] (2)
(4) edge[red, bend left=40] (1)
;
\end{tikzpicture}
\end{array}
\]
\end{example}

Our main observation concerning negative cycles is then is the following. 

\begin{theorem}\label{thm:NEGATIVE}
If $G$ has at most one negative cycle, then $G$ is separating. If $G$ is strong and has at most one negative cycle, then $G$ is trapping. 
\end{theorem}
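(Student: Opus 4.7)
The plan is to split on whether $G$ has a negative cycle, handling the non-strong case via the decomposition machinery of \cref{sec:intersection_positive_negative}. If $G$ has no negative cycle, \cref{thm:bib} gives $G$ fixing, hence separating. If moreover $G$ is strong, \cref{pro:harary} says $G^s$ too has no negative cycle, so $G$ is switch-equivalent to a full-positive digraph on which every Boolean network is monotone; \cref{lem:monotone_union} (applied to a single BN) combined with switch-invariance (\cref{pro:BN_switch}) yields trapping---indeed, the same path proves $G$ robustly trapping, as in \cref{lem:robustly_trapping}.

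Suppose now $G$ has a unique negative cycle $C^-$. For the separating assertion I would apply the decomposition strategy from the proof of \cref{thm:sep}: cycles live inside single strong components, so every strong component of $G$ has at most one negative cycle. Those with only positive cycles are robustly trapping by \cref{lem:robustly_trapping} (hence robustly separating), and the (at most one) strong component carrying $C^-$ will be shown robustly trapping by the strong-case argument below. \cref{lem:decomposition_2} then assembles these into a separating $G$.

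For the strong case with $C^-$ unique, let $f \in F(G)$ and let $A$ be an attractor of $\Gamma(f)$; the singleton case is trivial, so assume $|A| \geq 2$. \cref{lem:R10} then forces $V(C^-) \subseteq \Delta(A)$. The key target is that $A$ is the unique attractor and $\Delta(A) = V$, whence $[A] = \B^V = \langle A \rangle$ and separating both follow. Uniqueness of $A$ is derived via \cref{lem:Delta_A}: a second attractor $B$ would force $G \setminus i$ to have a positive cycle for every $i \in \Delta(A) \supseteq V(C^-)$, and combining this with the uniqueness of $C^-$ and the strong connectivity of $G$ should yield a contradiction. The equality $\Delta(A) = V$ is proved by contradiction: if $j \notin \Delta(A)$ then $f_j \equiv \alpha_j$ on $A$ (the common value of $A$ at $j$), and this constancy must be propagated to all of $[A]$ by analyzing the subnetwork of $f$ induced by $[A]$, whose interaction graph sits inside $G[\Delta(A)]$ and inherits at most one negative cycle, enabling an induction on $|V|$.

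The principal obstacle is the propagation step for $\Delta(A) = V$. Extending the constancy of $f_j$ from $A$ to $[A]$ is delicate because $A$ is only a trap set of $\Gamma(f)$, not of $\Gamma(f)$ restricted to $[A]$; one must use the strong connectivity of $G$ and \cref{lem:A08} to pin every sensitivity of $f_j$ onto cycles that necessarily intersect $\Delta(A) \supseteq V(C^-)$, with the uniqueness of $C^-$ ensuring that any ``leak'' would manifest as a negative cycle disjoint from $C^-$, a contradiction. Upgrading the argument to the robustly-trapping form required by \cref{lem:decomposition_2} additionally demands running it on attractors of $\bigcup_{f \in F} \Gamma(f)$ rather than of a single $\Gamma(f)$, but the union-based tools of \cref{sec:intersection_positive_negative} should adapt.
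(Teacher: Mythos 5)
Your handling of the no-negative-cycle case is fine (fixing implies trapping), but the core of your argument for a unique negative cycle $C^-$ rests on a false claim. You assert that if $G$ is strong with a unique negative cycle and $\Gamma(f)$ has an attractor $A$ with $|A|\geq 2$, then $A$ is the unique attractor and $\Delta(A)=V$, so that $[A]=\B^V=\langle A\rangle$. This is refuted by \cref{ex:fix1} of the paper: there $G(f)$ is strong with a unique negative cycle (the negative loop on $1$), yet $\Gamma(f)$ has the cyclic attractor $A=\{00,10\}$ with $\Delta(A)=\{1\}\neq V$ together with a second attractor, the fixed point $11$. Your proposed route to uniqueness via \cref{lem:Delta_A} also collapses on this example: $G\setminus 1$ does have a positive cycle (the positive loop on $2$), so no contradiction arises. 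What is actually true, and what the paper proves, is weaker: $[A]$ is an interval $[x,y]$ with $x\leq y$ which is a trap space. This is obtained by first using \cref{lem:R18} to extract an arc lying on every negative cycle and on no positive cycle, then switching $G$ so that all negative arcs share a common terminal vertex $i$ (\cref{lem:switch}), and finally exploiting the resulting monotonicity of $f_j$ for $j\neq i$ (\cref{lem:fixed_point,lem:trap_spaces,lem:special_vertex}). None of this appears in your sketch.

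A second gap is architectural. For the separating part you route everything through \cref{lem:decomposition_2}, which requires each strong component to be \emph{robustly} separating; the component containing $C^-$ would then have to be robustly trapping, but nothing in your sketch (nor in the paper) establishes robustness for strong digraphs with one negative cycle --- \cref{lem:special_vertex} is proved for a single $f$ and its proof manipulates $f$ directly, so it does not transfer to unions $\bigcup_{f\in F}\Gamma(f)$. The paper sidesteps this by not using the generic decomposition here: it invokes \cref{lem:special_vertex_bis}, a minimal-counterexample argument in which the non-strong case is resolved by locating the special arc/vertex in one side of the partition, applying \cref{lem:robustly_trapping} to the side that has only positive cycles, and extracting a smaller counterexample otherwise. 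You would need either to prove the robust version of the strong case or to restructure the induction along these lines.
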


\cref{ex:not-trap-sep} shows that, in the theorem, separating cannot be replaced by trap-separating. For the proof we need some lemmas.

\begin{lemma}\label{lem:almost_fixed_points}
Let $f\in F(G)$. Suppose that $G\setminus i$ has no negative cycle for some vertex $i$ and that $\Gamma(f)$ has an attractor $A$ of size at least two. Then there are $x,y\in A$ with $x_i\neq y_i$ such that $f(x)=x+e_i$ and $f(y)=y+e_i$.
\end{lemma}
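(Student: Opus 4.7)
The plan is to introduce an auxiliary BN $f'\in F(V)$ which ``freezes'' component $i$: define $f'_i(x)=x_i$ for all $x\in\B^V$ and $f'_j(x)=f_j(x)$ for $j\neq i$. This is a natural move because the configurations $z$ we are looking for are exactly the non-fixed points of $f$ satisfying $f_j(z)=z_j$ for every $j\neq i$, i.e.\ fixed points of $f'$ that are not fixed by $f$.

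Since $f'_i$ does not depend on any variable, $G(f')$ has no in-arc at $i$; every cycle of $G(f')$ therefore avoids $i$ and lies inside $G\setminus i$, which by hypothesis contains no negative cycle. Hence $G(f')$ has no negative cycle either, so by \cref{thm:bib} the graph $\Gamma(f')$ is fixing. Moreover $\Gamma(f')$ is obtained from $\Gamma(f)$ by deleting every arc of direction $i$, so $A$ remains a trap set of $\Gamma(f')$; and since no arc of $\Gamma(f')$ flips the $i$-coordinate, the two ``levels'' $A_0=\{a\in A:a_i=0\}$ and $A_1=\{a\in A:a_i=1\}$ are trap sets of $\Gamma(f')$ as well.

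To guarantee that $A_0$ and $A_1$ are both non-empty, I would invoke \cref{lem:R10}: since $|A|\geq 2$, the induced subgraph $G[\Delta(A)]$ contains a negative cycle, and the assumption that $G\setminus i$ has no negative cycle forces $i\in\Delta(A)$. Each of $A_0,A_1$ therefore contains an attractor of $\Gamma(f')$, which is a singleton fixed point of $f'$. Picking fixed points $x\in A_0$ and $y\in A_1$ of $f'$ gives $f_j(x)=x_j$ and $f_j(y)=y_j$ for all $j\neq i$.

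It then remains to rule out $f(x)=x$ (and symmetrically $f(y)=y$): if $x$ were a fixed point of $f$, then $\{x\}$ would be a singleton trap set of $\Gamma(f)$ strictly contained in the attractor $A$, contradicting the inclusion-minimality of $A$. Therefore $f_i(x)\neq x_i$ and $f_i(y)\neq y_i$, whence $f(x)=x+e_i$ and $f(y)=y+e_i$, with $x_i=0\neq 1=y_i$ by construction. The crux is really identifying the right auxiliary network $f'$; once that move is made, \cref{thm:bib} and \cref{lem:R10} do all the work, so I do not anticipate a substantial technical obstacle.
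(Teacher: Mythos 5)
Your proposal is correct and follows essentially the same route as the paper: both introduce the frozen network $f'$ with $f'_i(x)=x_i$, observe that $G(f')$ has no negative cycle so $\Gamma(f')$ is fixing by \cref{thm:bib}, and conclude by noting that a fixed point of $f'$ lying in $A$ cannot be fixed by $f$. The only (harmless) difference is that you invoke \cref{lem:R10} to show $i\in\Delta(A)$ and hence that both levels $A_0,A_1$ are non-empty, whereas the paper avoids this by observing that the first fixed point $x$ of $f'$ found in $A$ satisfies $f(x)=x+e_i$, so $x+e_i\in A$ already lies at the other level and the argument can be repeated from there.
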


\begin{proof}
Let $f'\in F(V)$ be defined by $f'_i(x)=x_i$ and $f'_j(x)_j=f_j(x)$ for all $j\neq i$. Then $\Gamma(f')$ is a spanning subgraph of $\Gamma(f)$, and $G(f')\setminus i=G\setminus i$. So $G(f')$ has no negative cycle. 

\medskip
Let $a\in A$. Since $G(f')$ has no negative cycle, by \cref{thm:bib}, $\Gamma(f')$ is fixing and thus it has a path $P$ from $a$ to a fixed point $x$ of $f'$. By the definition of $f'$, we have $z_i=a_i$ for all the configuration $z$ in $P$. In particular, $x_i=a_i$, and since $a\in A$ and $P$ is a path of $\Gamma(f)$, we have $x\in A$. Since $A$ is of size at least two, $x$ is not a fixed point of $f$, thus $f(x)=x+e_i$. Let $b=x+e_i$. We have $b\in A$, and $b_i\neq a_i$, and we prove similarly, that there is $y$ with $y_i=b_i$ such that $f(y)=y+e_i$.
\end{proof}

\begin{lemma}\label{lem:fixed_point}
Suppose that all the negative arcs of $G$ have the same terminal vertex $i$. Let $f\in F(G)$. If $x_i=0$ and $x\leq f(x)$ then $[x,\ONE]$ is a trap space, and if $x_i=1$ and $x\geq f(x)$ then $[\ZERO,x]$ is a trap space. 
\end{lemma}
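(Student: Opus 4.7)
The plan is to exploit the hypothesis that $i$ is the only possible terminus of a negative arc: for every $j\neq i$, no in-neighbor of $j$ in $G$ is negative, which forces $f_j$ to be non-decreasing in each of its variables and hence monotone, so that $y\leq y'$ implies $f_j(y)\leq f_j(y')$. The component $f_i$, by contrast, is arbitrary. The two statements are symmetric, so it suffices to prove the first carefully and note that the second follows by a symmetric argument (alternatively, by applying the first to the $V$-switch of $f$, using \cref{pro:BN_switch}).

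For the first statement, I would assume $x_i=0$ and $x\leq f(x)$, and suppose for contradiction that $[x,\ONE]$ is not a trap space of $\Gamma(f)$. Then there is an arc $y\to y'$ of $\Gamma(f)$ with $y\in[x,\ONE]$ and $y'\notin[x,\ONE]$; let $k\in V$ be its direction. Since $y$ and $y'$ agree outside $k$, and $y\geq x$ whereas $y'\not\geq x$, the failure must occur in coordinate $k$: we must have $x_k=1$, $y_k=1$ and $y'_k=f_k(y)=0$, i.e.\ a decreasing arc in direction $k$.

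Now split on whether $k=i$. If $k=i$ then $x_i=x_k=1$, contradicting the assumption $x_i=0$. Otherwise $k\neq i$, so $f_k$ is monotone; combined with $x\leq y$ this yields $f_k(x)\leq f_k(y)=0$, hence $f_k(x)=0$. But $x\leq f(x)$ gives $1=x_k\leq f_k(x)=0$, again a contradiction. Therefore $[x,\ONE]$ is a trap space.

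The symmetric statement is obtained by the same case analysis with inequalities reversed: assuming $x_i=1$, $x\geq f(x)$, and an arc $y\to y'$ leaving $[\ZERO,x]$ forces $x_k=0$, $y_k=0$, $y'_k=f_k(y)=1$, and one argues $k=i$ contradicts $x_i=1$, while $k\neq i$ gives $f_k(y)\leq f_k(x)\leq x_k=0$, contradicting $f_k(y)=1$. The only real content of the lemma is the monotonicity observation, so no step is a serious obstacle; the slight care needed is just to keep straight that it is exactly the offending direction $k$ that must either coincide with $i$ or enjoy monotonicity.
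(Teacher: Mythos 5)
Your proof is correct and follows essentially the same route as the paper: both arguments rest on the observation that $f_j$ is monotone for every $j\neq i$ (since all arcs into $j$ are positive) together with the hypothesis $x\leq f(x)$, the only difference being that the paper directly verifies $x\leq f(z)$ for every $z\in[x,\ONE]$ while you argue by contradiction on a hypothetical leaving arc. Both are fine; no gap.
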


\begin{proof}
Suppose that $x_i=0$ and $x\leq f(x)$. Let $z\in [x,\ONE]$. Then $0=x_i\leq f_i(z)$ and for all $j\neq i$ we have $x_j\leq f_j(x)\leq f_j(z)$ since $f_j$ is monotone and $x\leq z$. Thus $x\leq f(z)$ and we deduce that $[x,\ONE]$ is a trap space. We prove similarly that $[\ZERO,x]$ is a trap space if $x_i=1$ and $x\geq f(x)$.
\end{proof}

\begin{lemma}\label{lem:trap_spaces}
Suppose that all the negative arcs of $G$ have the same terminal vertex $i$. Let $f\in F(G)$ and $A$ an attractor of $\Gamma(f)$ of size at least two. There are $x,y\in A$ with $x_i<y_i$ and $x\leq y$ such that $[A]=[x,y]$. Furthermore, $[A]$, $[x,\ONE]$ and $[\ZERO,y]$ are trap spaces.
\end{lemma}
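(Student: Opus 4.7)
My plan is to combine Lemmas \ref{lem:almost_fixed_points} and \ref{lem:fixed_point} to locate two specific configurations of $A$, and then use the minimality of $A$ as a trap set to conclude that these configurations span the whole subspace $[A]$.

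Since every negative arc of $G$ ends at $i$, the graph $G\setminus i$ has no negative cycle. Using $|A|\ge 2$, Lemma~\ref{lem:almost_fixed_points} provides two configurations in $A$ that differ at coordinate $i$ and are each sent by $f$ to their $i$-flip. Relabeling if necessary, call them $x$ (with $x_i=0$) and $y$ (with $y_i=1$), so that $f(x)=x+e_i\ge x$ and $f(y)=y+e_i\le y$. Lemma~\ref{lem:fixed_point} then yields that $[x,\ONE]$ and $[\ZERO,y]$ are trap spaces.

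The key observation is that, since $A$ is an inclusion-minimal non-empty trap set, the non-empty trap set $A\cap[x,\ONE]$ (which contains $x$) must equal $A$; hence $A\subseteq[x,\ONE]$, and symmetrically $A\subseteq[\ZERO,y]$. In particular $A\subseteq[x,\ONE]\cap[\ZERO,y]$, which, being non-empty, forces $x\le y$; this intersection then coincides with the subspace $[x,y]$, itself a trap space as an intersection of trap spaces.

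Finally, write $[A]=[u,v]$ with $u_j=\min_{a\in A}a_j$ and $v_j=\max_{a\in A}a_j$. The containment $A\subseteq[x,y]$ gives $x\le u$ and $v\le y$, while $x,y\in A$ gives $u\le x$ and $y\le v$, so $x=u$ and $y=v$, i.e.\ $[A]=[x,y]$. The only delicate step is the appeal to minimality of $A$ to deduce that $A$ is contained in each of the one-sided trap spaces $[x,\ONE]$ and $[\ZERO,y]$; everything else is bookkeeping about subspaces and componentwise extrema.
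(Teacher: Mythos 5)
Your proof is correct and follows essentially the same route as the paper's: apply \cref{lem:almost_fixed_points} (valid since $G\setminus i$ has no negative arc, hence no negative cycle) to get $x,y$ with $f(x)=x+e_i$, $f(y)=y+e_i$, then \cref{lem:fixed_point} to get the one-sided trap spaces, and finally the minimality of $A$ as a trap set to conclude $A\subseteq[x,\ONE]\cap[\ZERO,y]=[x,y]$ and hence $x\le y$ and $[A]=[x,y]$. The paper compresses this last step into ``it follows that''; your version just makes that appeal to minimality explicit.
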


\begin{proof}
By \cref{lem:almost_fixed_points}, there are $x,y\in A$ with $x_i<y_i$ such that $f(x)=x+e_i$ and $f(y)=y+e_i$, and thus $x\leq f(x)$ and $y\geq f(y)$. By \cref{lem:fixed_point}, $[x,\ONE]$ and $[\ZERO,y]$ are trap spaces. It follows that $x\leq y$ and that $[A]=[x,y]$ is a trap space.
\end{proof}

\begin{lemma}\label{lem:special_vertex}
If all the negative arcs of $G$ have the same terminal vertex, then $G$ is trapping.
\end{lemma}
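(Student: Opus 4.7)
The plan is to induct on $|V|$. The condition $[A] = \langle A\rangle$ is immediate from \cref{lem:trap_spaces} when $|A|\geq 2$ (which gives $[A]=[x,y]$ as the intersection of the trap spaces $[x,\ONE]$ and $[\ZERO,y]$) and trivial when $|A|=1$, so the substantive task is the separating property. I would reformulate separating as the uniqueness statement ``$A$ is the unique attractor of $\Gamma(f)$ contained in $[A]$'': given this for every attractor, any $z\in[A]\cap[B]$ lies in the trap space $[A]\cap[B]$, hence the attractor reached from $z$ must equal both $A$ and $B$, forcing $A=B$.

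For the inductive step, the easy situation is $|A|\geq 2$ with $[A]=[x,y]\neq \B^V$. Writing $I=\Delta(x,y)$, the subnetwork $h$ of $f$ induced by $[A]$ has $|I|<|V|$ vertices, $i\in I$ (since $x_i<y_i$), and $G(h)\subseteq G[I]$ inherits the hypothesis that all its negative arcs terminate at $i$; by the inductive hypothesis $G(h)$ is trapping, so in particular $\Gamma(h)$ is separating. Since $A$ spans every coordinate of $I$ (its subspace closure in $\B^I$ is all of $\B^I$), the separating property of $\Gamma(h)$ forces $A$ to be the unique attractor of $\Gamma(h)$, equivalently the unique attractor of $\Gamma(f)$ contained in the trap space $[A]$.

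The main obstacle is the case $[A]=\B^V$, where $|I|=|V|$ and induction fails to reduce the problem. Here \cref{lem:trap_spaces} forces $x=\ZERO$ and $y=\ONE$, so $\ZERO,\ONE\in A$, and neither $\ZERO$ nor $\ONE$ is a fixed point of $f$ (else $A$ would not be a minimal trap set of size $\geq 2$). Suppose $B\neq A$ is another attractor. If $|B|\geq 2$, then \cref{lem:trap_spaces} applied to $B$ yields $x_B\leq y_B\in B$ with $[x_B,\ONE]$ and $[\ZERO,y_B]$ trap spaces; since $\ONE\in A\cap[x_B,\ONE]\neq\emptyset$ and $A$ is a minimal trap set, $A\subseteq[x_B,\ONE]$, and the inclusion $\ZERO\in A\subseteq[x_B,\ONE]$ forces $x_B=\ZERO$; symmetrically $y_B=\ONE$, so $[B]=\B^V$ and $\ZERO\in A\cap B=\emptyset$, a contradiction. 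If instead $B=\{b\}$ is a singleton with $b\neq\ZERO,\ONE$, then \cref{lem:fixed_point} applied to the fixed point $b$ gives a trap space $[b,\ONE]$ when $b_i=0$ or $[\ZERO,b]$ when $b_i=1$; the same minimality argument forces $A$ into that trap space, collapsing $b$ to $\ZERO$ or $\ONE$, contradicting $b\neq\ZERO,\ONE$. This rules out any second attractor and completes the proof.
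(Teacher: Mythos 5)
Your proof is correct, but it is organized differently from the paper's. The paper argues directly, with no induction: it takes two distinct attractors with $[A]\cap[B]\neq\emptyset$, notes at least one (say $A$) has size $\geq 2$, and derives a contradiction from \cref{lem:trap_spaces} and \cref{lem:fixed_point} by comparing the extremal configurations $x^A,y^A$ of $A$ with either the fixed point $z=B$ (the trap space $[z,\ONE]$ or $[\ZERO,z]$ would disconnect $y^A$ from $x^A$ inside $A$) or with $x^B,y^B$ (the containments $B\subseteq[x^A,\ONE]$ and $A\subseteq[x^B,\ONE]$ force $x^A=x^B\in A\cap B$). You instead induct on $|V|$, restricting to the subnetwork induced by the trap space $[A]$ when $[A]\neq\B^V$ --- which is legitimate since $G(h)$ inherits the hypothesis and $A$ spans $\B^{\Delta(A)}$ --- and you only run the trap-space containment argument in the terminal case $[A]=\B^V$, where $\ZERO,\ONE\in A$ makes the containments $A\subseteq[x^B,\ONE]$ and $A\subseteq[\ZERO,y^B]$ collapse immediately to $x^B=\ZERO$, $y^B=\ONE$. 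Both proofs rest on the same two lemmas; the paper's version is shorter and self-contained, while yours trades that for a cleaner endgame at the cost of the induction and the subnetwork machinery (and, incidentally, yields the slightly sharper observation that a cyclic attractor with $[A]=\B^V$ is the only attractor).
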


\begin{proof}
Let $i$ be the terminal vertex of every negative arc of $G$. Let $f\in F(G)$ and suppose that $A,B$ are distinct attractors of $\Gamma(f)$. By \cref{lem:trap_spaces}, $[A]$ and $[B]$ are trap spaces, so it remains to prove that $[A]\cap [B]=\emptyset$. Suppose, for a contradiction, that $[A]\cap [B]\neq\emptyset$. Then at least one of $A,B$ is of size at least two, say $A$. By \cref{lem:trap_spaces}, there are $x^A,y^A\in A$ with $x^A_i<y^A_i$ such that $[x^A,\ONE]$ and $[A]=[x^A,y^A]$ are trap spaces.

\medskip
Suppose first that $B$ is of size one, that is, consists of a fixed point, say $z$. Then $z\in [x^A,y^A]$. If $z_i=0$ then, by \cref{lem:fixed_point}, $[z,\ONE]$ is a trap space. We have $y^A\in [z,\ONE]$ and since $z\neq x^A$, we have $x^A\not\in [z,\ONE]$. We deduce that there is no path in $\Gamma(f)$ from $y^A$ to $x^A$, a contradiction. If $z_i=1$ we obtain a contradiction similarly.

\medskip
Consequently, $B$ is of size at least two. Hence, by \cref{lem:trap_spaces}, there are $x^B,y^B\in B$ with $x^B_i<y^B_i$ such that $[x^B,\ONE]$ and $[B]=[x^B,y^B]$ are trap spaces. If $y^B_j<x^A_j$ for some vertex $j$, then $[A]\cap [B]=\emptyset$, a contradiction. So $x^A\leq y^B$. Hence $y^B\in [x^A,\ONE]$ and since $[x^A,\ONE]$ is a trap space, we deduce that $B\subseteq [x^A,\ONE]$. In particular, $x^A\leq x^B$. By symmetry we have $x^B\leq x^A$. Thus $x^A=x^B$, a contradiction.
\end{proof}

\begin{lemma}\label{lem:switch}
If $G$ is strong and contains an arc from $j$ to $i$, or a vertex $i$, that belongs to every negative cycle and to no positive cycle, then, up to a switch of $G$, all the negative arcs have $i$ as terminal vertex.
\end{lemma}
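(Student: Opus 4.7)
The plan is to reduce the arc case to the vertex case by subdividing the distinguished arc, and to handle the vertex case directly by defining the desired switch via signs of paths emanating from $i$.

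For the vertex case, the hypothesis gives that every cycle through $i$ is negative and every cycle avoiding $i$ is positive. Using strong connectivity, I set $\tau(i) = +1$ and, for $v \neq i$, define $\tau(v)$ to be the sign of any path in $G$ from $i$ to $v$. The main step is to verify that $\tau$ is well-defined. Given two paths $P_1, P_2$ from $i$ to $v$, pick a path $Q$ from $v$ back to $i$. The closed walk $W_k = P_k Q$ visits $i$ only at its endpoint, so its arc multiset has both in- and out-degree equal to $1$ at $i$. Decomposing $W_k$ iteratively (extracting a simple cycle at each repeated vertex) yields exactly one simple cycle through $i$, together with a family of simple cycles of $G$ avoiding $i$; by hypothesis the former is negative and the latter are all positive. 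Hence $\mathrm{sign}(W_k) = -1$ for both $k$, forcing $\mathrm{sign}(P_1) = \mathrm{sign}(P_2)$.

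Setting $I = \{v : \tau(v) = +1\}$, I then check that every arc $(k, l, s)$ of $G$ with $l \neq i$ satisfies $\tau(l) = s\tau(k)$. Let $P$ be a path from $i$ to $k$. If $l$ does not lie on $P$, then $P \cdot (k,l)$ is a path from $i$ to $l$ and the identity is immediate. Otherwise split $P = P_1 P_2$ with $P_1$ from $i$ to $l$ and $P_2$ from $l$ to $k$; the closed walk $P_2 \cdot (k,l)$ avoids $i$ and therefore decomposes entirely into positive cycles of $G \setminus i$, giving $\mathrm{sign}(P_2)\cdot s = +1$ and hence $\tau(l) = s\tau(k)$. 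Consequently $G^I$ has all its negative arcs ending at $i$.

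For the arc case, replace the distinguished arc $(j, i, s_0)$ by a pair of arcs $(j, m, s_0)$ and $(m, i, +1)$ through a new vertex $m$, producing a strong signed digraph $\tilde G$. The cycles of $\tilde G$ through $m$ correspond bijectively and sign-preservingly to the cycles of $G$ using $(j, i)$, so $m$ lies in every negative cycle of $\tilde G$ and in no positive one. Applying the vertex case to $(\tilde G, m)$ produces a switch $\tilde I$, and the restriction $I = \tilde I \cap V$ resolves the arc case: every arc of $G$ other than $(j, i)$ is an arc of $\tilde G$ not ending at $m$, hence becomes positive after switching by $I$, while $(j, i)$ itself does end at $i$ and may carry whatever sign results.

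The main obstacle is the cycle decomposition at $i$ used in the well-definedness of $\tau$; once one argues carefully that the closed walk $P_k Q$ decomposes into simple cycles with \emph{exactly} one passing through $i$ — a consequence of the in- and out-degree at $i$ each being equal to one — the rest of the proof is a routine verification.
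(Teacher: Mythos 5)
Your proof is correct. It rests on the same underlying idea as the paper's — the switch set is determined by the sign of paths emanating from $i$ — but the execution is genuinely different in two respects. First, you cleanly reduce the arc case to the vertex case by subdividing the distinguished arc; the paper instead treats both cases inside a single contradiction argument (it derives two cycles $C_1,C_2$ of opposite signs through $i$ and then invokes whichever of the two hypotheses holds to find an arc that must lie in both). Second, for the vertex case you define the vertex signs directly on $G$ and prove well-definedness by decomposing a closed walk into simple cycles, exactly one of which passes through $i$ (negative) while the rest avoid $i$ (positive); the paper instead deletes the in-arcs of $i$, applies Harary's theorem (\cref{pro:harary}) to make each strong component of the resulting graph full-positive, and then assigns signs on the condensation. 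Your route is more elementary and self-contained — it avoids both Harary's theorem and the passage to the condensation — at the cost of having to argue carefully about the multiplicity of $i$ in the closed-walk decomposition, which you do correctly (the arc multiset of $P_kQ$ has in- and out-degree one at $i$, so exactly one extracted cycle meets $i$). The verification that every arc $(k,l,s)$ with $l\neq i$ satisfies $\tau(l)=s\tau(k)$, including the case where $l$ lies on the chosen path (where $P_2\cdot(k,l)$ is a cycle avoiding $i$, hence positive), is also sound, and the restriction $I=\tilde I\cap V$ in the arc case correctly transfers the conclusion back to $G$ since the distinguished arc is the only arc whose sign is left unconstrained and it terminates at $i$.
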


\begin{proof}
Let $G'$ be obtained from $G$ by deleting all the in-coming arcs of $i$. Since $G$ is strong, $G'$ has a unique initial strong component, which has $i$ has unique vertex, and $G$ has an arc from each terminal strong component of $G'$ to $i$. Since each strong component of $G'$ only contains positive cycles, up to a switch we can assume that all the strong components of $G'$ only contain positive arcs (using \cref{pro:harary,pro:BN_switch}). Let $I_1,\dots,I_r$ be the vertex sets of the strong components of $G'$ in the topological order (so $I_1=\{i\}$). Let $H$ be the signed digraph on $\{I_1,\dots,I_r\}$ with a positive (negative) arc from $I_p$ to $I_q$ if $G$ has a positive (negative) arc from some vertex in $I_p$ to some vertex in $I_q$. Let $s_1=1$ and, for $1<p\leq r$, let $s_p=1$ if $H$ has a positive path from $I_1$ to $I_p$ and $s_p=-1$ otherwise. A consequence of (1) below is that, actually, all the paths from $I_1$ to $I_p$ have the same sign.

\begin{quote}
(1) {\em For $1\leq p<q\leq r$, the sign of an arc of $H$ from $I_p$ to $I_q$ is $s_p\cdot s_q$.}

\smallskip
Suppose that $H$ has an arc from $I_p$ to $I_q$ of sign $s\neq s_p\cdot s_q$. Since $H$ has a path from $I_q$ to $I_1$ with internal vertices in $\{I_{q+1},\dots,I_r\}$, $G$ has a path $P$ from some $k\in I_q$ to $i$ whose internal vertices are in $I_{q+1}\cup\dots \cup I_r$. Since $s\neq s_p\cdot s_q$, $H$ has a positive and a negative path from $I_1$ to $I_q$ whose vertices are in $\{I_1,\dots,I_q\}$. Since $I_q$ is strong and has only positive arcs, we deduce that $G$ has a positive path $P^+$ from $i$ to $k$ and a negative path $P^-$ from $i$ to $k$ whose internal vertices are in $I_2\cup\cdots\cup I_q$. Thus $C_1=P^+\cup P$ and $C_2=P^-\cup P$ are cycles with different signs.

So $i$ belongs to cycles of different signs. We deduce that $i$ has an in-neighbor $j$ such that the arc from $j$ to $i$ belongs to all the negative cycles and to no positive cycle. Thus this arc belongs to $C_1$ or $C_2$, and this means that it is in $P$, and thus it belongs to both $C_1$ and $C_2$, and we obtain a contradiction.
\end{quote}

\medskip
Let $J$ be the set of vertices $I_p$ of $H$ with $s_p=-1$. We deduce from (1) that an arc of $H$ is negative if and only if it leaves $J$ or enters $J$. Hence the $J$-switch of $H$ is full-positive. Let $J'$ be the union of the sets $I_p$ contained in $J$. Then the $J'$-switch of $G'$ is full-positive, so, in the $J'$-switch of $G$, all the negative arcs have $i$ as terminal vertex.
\end{proof}

As a consequence:

\begin{lemma}\label{lem:special_vertex_bis}
If $G$ has an arc or a vertex that belongs to all the negative cycles and to no positive cycle, then $G$ is separating.
\end{lemma}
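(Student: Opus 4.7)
The plan is to apply Lemma~\ref{lem:switch} and Lemma~\ref{lem:special_vertex} to handle the strong component where the special arc or vertex lives, and then piece together the rest of $G$ via the decomposition of Lemma~\ref{lem:decomposition} and an induction on the number of strong components.

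First I would dispose of a trivial case: if $G$ has no negative cycle, then $G$ is fixing and hence separating by \cref{thm:bib}. Otherwise, let $H^\star=G[V^\star]$ be the strong component of $G$ containing the special arc or vertex. Every negative cycle of $G$ lies in $H^\star$, so every other strong component contains only positive cycles and is robustly trapping by \cref{lem:robustly_trapping}. Applying \cref{lem:switch} to $H^\star$ yields $I\subseteq V^\star$ such that in $(H^\star)^I$ every negative arc has a common terminal vertex $i$; by \cref{lem:special_vertex}, $(H^\star)^I$ is trapping, and by \cref{pro:BN_switch} so is $H^\star$. The same argument applied to any subgraph $G'$ of $H^\star$ shows that $G'$ is trapping as well, because $(G')^I$ is a subgraph of $(H^\star)^I$ and thus inherits the property that all its negative arcs end at $i$.

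I would then induct on the number $k$ of strong components of $G$. The base case $k=1$ is the trapping of $G=H^\star$ already established. For the inductive step, suppose first that $G$ has a terminal strong component $V_t \neq V^\star$. Apply \cref{lem:decomposition} with $I_2 = V_t$ and $I_1 = V \setminus V_t$: by the induction hypothesis, $G[I_1]$ (which still contains the special element) is separating, while $G[V_t]$ is robustly trapping. For $f \in F(G)$ and distinct attractors $A,B$ of $\Gamma(f)$, the decomposition gives $A = A^1 \times A^2$ and $B = B^1 \times B^2$; if $A^1 \neq B^1$ then $[A^1] \cap [B^1] = \emptyset$ by separating of $G[I_1]$, and otherwise $A^2$ and $B^2$ are distinct attractors of the trapping union $\bigcup_{x \in A^1} \Gamma(f^x)$, so $[A^2] \cap [B^2] = \emptyset$. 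Either way $[A] \cap [B] = \emptyset$.

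In the remaining scenario, $V^\star$ is the unique terminal strong component. Apply \cref{lem:decomposition} with $I_2 = V^\star$: since $G[V \setminus V^\star]$ inherits no negative cycle from $G$, \cref{thm:bib} gives that $\Gamma^1$ is fixing, so every attractor $A^1$ of $\Gamma^1$ is a singleton $\{a\}$. This makes the union $\bigcup_{x \in A^1} \Gamma(f^x)$ degenerate to the single asynchronous graph $\Gamma(f^a)$; since $G(f^a)$ is a subgraph of $H^\star$, the previous paragraph shows $\Gamma(f^a)$ is trapping. Disjointness of $[A]$ and $[B]$ for distinct attractors then follows either from distinct singletons when $A^1 \neq B^1$, or from $\Gamma(f^a)$ being separating when $A^1 = B^1 = \{a\}$. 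The step I expect to be delicate is precisely sidestepping a robust-separating statement for $H^\star$, which would require upgrading \cref{lem:special_vertex} to joint unions of asynchronous graphs; this is not immediate from \cref{lem:monotone_union} because $f_i$ need not be monotone when negative arcs end at $i$. My plan avoids this difficulty by arranging that in the only scenario where a genuine union over $A^1$ could produce non-monotone behaviour, namely $I_2 = V^\star$, the attractors $A^1$ are forced to be singletons and the union collapses to a single map.
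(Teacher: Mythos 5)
Your proof is essentially correct and rests on the same pillars as the paper's: \cref{lem:switch} and \cref{lem:special_vertex} for the strong component carrying the special arc or vertex, and \cref{lem:decomposition} together with \cref{lem:robustly_trapping} to peel off the remaining components. Where you differ is in how the non-strong case is closed. The paper argues by a smallest counterexample with respect to the number of vertices and first proves an auxiliary claim (its step (1)): in a minimal counterexample the two offending attractors differ in every coordinate. That claim is what kills the case where the special vertex sits in the terminal strong component and $A^1=B^1$. You avoid it entirely by noting that in that case $G(f^1)$ has no negative cycle, so $A^1$ is a singleton $\{a\}$ and the union $\Gamma^2_A$ collapses to the single graph $\Gamma(f^a)$, whose interaction graph is a subgraph of $H^\star$ and therefore, after the switch provided by \cref{lem:switch}, has all its negative arcs ending at a common vertex; \cref{lem:special_vertex} then applies directly, with no need for a ``robust'' upgrade. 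Your diagnosis of why such an upgrade would be problematic (non-monotonicity at the distinguished vertex blocks the joint-union argument of the monotone case) is exactly right, and this collapse is a genuine, slightly cleaner alternative to the paper's step (1).

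One piece of bookkeeping should be tightened. You induct on the number of strong components, but what the decomposition actually requires is that $\Gamma^1=\Gamma(f^1)$ be separating, and $G(f^1)$ is only a \emph{subgraph} of $G[I_1]$: fixing the components of $I_2$ can destroy arcs inside $I_1$, so $G(f^1)$ may have \emph{more} strong components than $G[I_1]$ (and possibly more than $G$), and your induction hypothesis does not formally reach it. The hypothesis of the lemma is inherited by subgraphs in the relevant sense --- either the subgraph has no negative cycle, in which case it is fixing by \cref{thm:bib}, or the special element still meets all of its negative cycles and none of its positive ones --- so the fix is simply to induct on the number of vertices, as the paper does: $G(f^1)$ always has strictly fewer vertices than $G$. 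With that adjustment (and the corresponding formulation of the inductive statement so that it also covers graphs with no negative cycle), your argument closes.
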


\begin{proof}
Suppose that $G$ is a smallest counter example with respect to the number of vertices. There is $f\in F(G)$ and attractors $A,B$ of $\Gamma(f)$ with $[A]\cap[B]\neq\emptyset$.

\begin{quote}
(1) {\em For every $j\in V$ there is $a\in A$ and $b\in B$ with $a_j\neq b_j$.}

\smallskip
Suppose that there is $j\in V$ and $c\in\B$ such that $a_j=b_j=c$ for all $a\in A$ and $b\in B$. Let $h$ be the BN  with component set $I=V\setminus j$ defined by $h(x_I)=f(x)_I$ for all $x$ with $x_j=c$. Then $G(h)$ is a subgraph of $G\setminus j$. Let $A'=\{a_I\mid a\in A\}$ and $B'=\{b_I\mid b\in B\}$. Then $A',B'$ are distinct attractors of $\Gamma(h)$ with $[A']\cap [B']\neq\emptyset$. Hence $G(h)$ is non-separating. So, by \cref{thm:bib}, $G(h)$ has a negative cycle $C$. Hence $C$ contains an arc or a vertex that belongs to all the negative cycles of $G$ and to no positive cycle of $G$. Since $G(h)$ is a subgraph of $G\setminus j$, this arc or this vertex belongs to all the negative cycles of $G(h)$ and to no positive cycle of $G(h)$. We deduce that $G(h)$ is a smaller counter example, a contradiction.
\end{quote}

\begin{quote}
(2) {\em $G$ is strong.}

\smallskip
If not there is a partition $(I_1,I_2)$ of the vertices such that $G$ has no arc from $I_2$ to $I_1$ and $G[I_2]$ is strong. We then use the notations of \cref{lem:decomposition}. Let $i$ be a vertex of $G$ meeting every negative cycle, which exists by hypothesis.

If $i\in I_1$, then $G[I_1]$ is separating, since otherwise it is a smaller counter example. Thus $\Gamma^1$ is separating. We then deduce from \cref{lem:decomposition} that if $A_1\neq B_1$ then $[A_1]\cap [B_1]=\emptyset$ and thus $[A]\cap [B]=\emptyset$, a contradiction. So suppose that $A_1=B_1$, which implies $\Gamma^2_A=\Gamma^2_B$ and $A_2\neq B_2$. Since $i$ is in $I_1$, $G[I_2]$ has only positive cycles, so by \cref{lem:robustly_trapping} it is robustly trapping, and thus $\Gamma^2_A$ is separating. We then deduce from \cref{lem:decomposition} that $[A_2]\cap [B_2]=\emptyset$ and thus $[A]\cap [B]=\emptyset$, a contradiction.

If $i\not\in I_1$ then $G[I_1]$ has only positive cycles. So $G[I_1]$ is fixing, and we deduce from \cref{lem:decomposition} that there are $x^A,x^B$, fixed points of $f^1$, such that $A=\{x^A\}\times A^2$ and $B=\{x^B\}\times B^2$. If $x^A\neq x^B$ then $[A]\cap [B]=\emptyset$, a contradiction. Thus $x^A=x^B$ so $a_{I_1}=b_{I_1}$ for all $a\in A$ and $b\in B$, which contradicts~(1). This proves (2).
\end{quote}

From (2) and \cref{lem:switch}, in some switch $G'$ of $G$, all the negative arcs have the same terminal vertex. By \cref{lem:special_vertex}, $G'$ is separating and we deduce that $G$ is separating by \cref{pro:BN_switch}, a contradiction.
\end{proof}

\begin{lemma}[\cite{R18}]\label{lem:R18}
If $G$ has a unique negative cycle, then some arc belongs to no positive~cycle.
\end{lemma}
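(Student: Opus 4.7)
The plan is to argue by contradiction: assume every arc of $G$ lies in some positive directed cycle, and then construct a second negative directed cycle, contradicting the uniqueness of $C^-$.

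First I reduce to the strongly connected case. If $G$ is not strong, any arc between two distinct strong components lies in no directed cycle at all, hence in no positive cycle, and we are done. Assume now $G$ is strong. Using the switch operation of Proposition~\ref{pro:harary}, which preserves the sign of every directed cycle, I switch $G$ so that $C^-$ carries exactly one negative arc $e$ and all other arcs of $C^-$ are positive; the location of $e$ along $C^-$ can be chosen freely. I would then further choose $e=(v_0,v_1)$ so that $G$ has no arc from $v_1$ to $v_0$. Such a choice exists when $|C^-|\geq 3$: otherwise every arc $e_i=(v_{i-1},v_i)$ of $C^-$ admits a reverse $\bar e_i$, and by uniqueness the $2$-cycle $e_i\bar e_i$ is positive, forcing $s(\bar e_i)=s(e_i)$; then the cycle $\bar e_k\bar e_{k-1}\cdots\bar e_1$ is a directed cycle distinct from $C^-$ with sign $\prod_i s(\bar e_i)=\prod_i s(e_i)=-1$, contradicting uniqueness. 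The case $|C^-|=2$ is treated directly.

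I then claim $e$ is in no positive cycle. Suppose for contradiction that $e$ lies in a positive cycle $C^+=\{e\}\cup P$; then $P$ is a directed path from $v_1$ to $v_0$ in $G-e$ of sign $s(e)$, while $Q=C^-\setminus\{e\}$ is another such path of sign $-s(e)$. Concatenating $P$ with the reverse of $Q$ yields a closed walk $W$ in the symmetric closure $(G-e)^s$ whose sign is $s(e)\cdot(-s(e))=-1$. Since in-degree equals out-degree at every vertex of $W$, it decomposes into edge-disjoint simple directed cycles of $(G-e)^s$ with signs multiplying to $-1$, so $(G-e)^s$ contains a negative simple directed cycle~$D$. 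If $D$ already uses only arcs of $G-e$ (no reverse arcs), it is a negative directed cycle of $G-e$, hence a second negative directed cycle of $G$, contradicting uniqueness.

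The remaining step, which is the main obstacle of the proof, is to deal with the case where $D$ uses one or several ``reverse'' arcs of $(G-e)^s$. The idea is to exploit the strong connectedness of $G$ and the fact that $e$ itself has no reverse arc to replace each reversed traversal in $D$ by a genuine forward directed path of $G$, while controlling the overall sign so that at least one of the resulting directed cycles in $G$ remains negative and distinct from $C^-$. Making this rearrangement rigorous—tracking signs, preserving simplicity, and distinguishing the result from $C^-$—is where the real work lies, and where the hypotheses (strong connectedness of $G$, absence of a reverse for $e$, and uniqueness of $C^-$) must all be used together.
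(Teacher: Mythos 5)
The paper does not prove this lemma; it imports it from \cite{R18}, so there is no internal proof to compare against. Judged on its own, your argument has a fatal problem, and it is not merely the unfinished step you flag at the end: the intermediate claim you are driving at --- that an arc $e$ of $C^-$ chosen to have no reverse arc and to carry, after switching, the unique negative sign of $C^-$ lies on no positive cycle --- is false. Take $V=\{v_0,v_1,v_2,w\}$ with arcs $(v_0,v_1)$ negative, $(v_1,v_2)$ positive, $(v_2,v_0)$ positive, $(v_1,w)$ negative, $(w,v_0)$ positive. This signed digraph is strong and has exactly two cycles, the negative cycle $C^-=v_0v_1v_2v_0$ and the positive cycle $C^+=v_0v_1wv_0$; the arc $e=(v_0,v_1)$ is the unique negative arc of $C^-$ and has no reverse arc, yet $e$ lies on $C^+$. (The lemma holds here via $(v_1,v_2)$ or $(v_2,v_0)$, which lie on no cycle other than $C^-$.) Your symmetric-closure step breaks exactly where you say the ``real work lies'': here $P=v_1wv_0$ and $Q=v_1v_2v_0$ are internally disjoint, $G-e$ is acyclic, so the negative cycle $D$ of $(G-e)^s$ is forced to use reverse arcs, and replacing those reversals by forward paths of $G$ (which must pass through $e$) yields a closed walk whose cycle decomposition is exactly $C^+$ together with $C^-$; the negative cycle you extract is $C^-$ itself and no contradiction arises. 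More generally, the only input to your construction is the existence of two paths from $v_1$ to $v_0$ of opposite signs, which is perfectly compatible with $G-e$ having no directed cycle at all, so no completion along these lines can work without exploiting the hypothesis on arcs other than $e$ --- which is precisely the missing idea.

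Two smaller points. The reduction to the strong case silently assumes that a non-strong $G$ has an arc joining two distinct strong components; if it does not, you should instead pass to the strong component containing $C^-$ (every cycle through an arc of that component stays inside it). And the cases $|C^-|\le 2$ are only asserted to be ``treated directly'': the loop case is indeed immediate, and the $2$-cycle case does admit a short direct argument (both completion paths then avoid $C^-$ entirely and concatenate to a genuine negative closed walk of $G$ disjoint from $C^-$), but neither is written, and in any case fixing them would not repair the main argument, as the three-vertex example above shows.
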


\begin{proof}[\BF{Proof of \cref{thm:NEGATIVE}}]
Suppose that $G$ has a unique negative cycle $C$. By \cref{lem:R18}, $C$ has an arc that belongs to no positive cycle. Since this arc obviously belongs to all the negative cycles of $G$, we deduce from \cref{lem:special_vertex_bis} that $G$ is separating. Suppose, in addition, that $G$ is strong. By \cref{lem:switch}, in some switch $G'$ of $G$, all the negative arcs have the same terminal vertex. Hence, by \cref{lem:special_vertex}, $G'$ is trapping and, by \cref{pro:BN_switch}, $G$ is trapping.
\end{proof}

Using the previous tools, we provide a new sufficient condition for $G$ to be fixing.

\begin{proposition}\label{pro:fixing}
If $G$ is strong, has a unique negative cycle $C$, at least one positive cycle, and if no cycle of $G$ is disjoint from $C$, then $G$ is fixing.
\end{proposition}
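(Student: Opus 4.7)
The plan is to argue by contradiction: suppose some $f\in F(G)$ has an attractor $A$ with $|A|\ge 2$ and derive a contradiction from the hypotheses. By Theorem~\ref{thm:NEGATIVE} (applied since $G$ is strong with a unique negative cycle), $G$ is trapping; hence $[A]=\langle A\rangle$ is a trap space and $A$ is the unique attractor reachable in $[A]$. By Lemma~\ref{lem:R18} some arc of $C$ lies on no positive cycle of $G$; using Lemma~\ref{lem:switch} and Proposition~\ref{pro:BN_switch} I may switch $G$ and $f$ so that every negative arc of $G$ ends at a single vertex $i\in V(C)$. Under this normalization every $f_k$ with $k\ne i$ is monotone, because every arc not ending at $i$ is then positive.

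Next I would apply Lemma~\ref{lem:R10} to the subnetwork $h$ of $f$ induced by $[A]$ to obtain $V(C)\subseteq\Delta(A)$ and to identify $C$ as the unique negative cycle of $G(h)$. Lemmas~\ref{lem:almost_fixed_points} and~\ref{lem:trap_spaces} then furnish configurations $x,y\in A$ with $x\le y$, $x_i=0$, $y_i=1$, $[A]=[x,y]$, and $f_k(x)=x_k$, $f_k(y)=y_k$ for every $k\ne i$. Thus $x$ and $y$ are ``near fixed points'' of $f$, failing to be true fixed points only through coordinate $i$; their existence, together with the monotonicity of the other coordinates, is what drives the rest of the analysis.

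The remainder of the argument proceeds by induction on $|V|$. If the strong component $K$ of $G(h)$ containing $C$ is a proper subgraph of $G$, I decompose $A$ using Lemmas~\ref{lem:decomposition} and~\ref{lem:decomposition_3} as a product whose nontrivial cyclic factor lives in $K$. The key verification is that $K$ still satisfies all of the proposition's hypotheses: strongness is automatic, uniqueness of the negative cycle follows from $G(h)\subseteq G$, the ``every cycle intersects $C$'' condition is inherited, and the existence of a positive cycle in $K$ is obtained by using the strong connectedness of $G$ to close the positive cycle $P$ of $G$ through $V(C)\subseteq V(K)$, exploiting that all paths inside $G\setminus i$ are positive. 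Induction then contradicts the existence of the cyclic factor.

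The real difficulty lies in the base case where $G(h)=G$ is strong and $\Delta(A)=V$, so that $x=\ZERO$, $y=\ONE$, $f(\ZERO)=e_i$ and $f(\ONE)=\ONE-e_i$. Here I plan to define a modified $f'$ with $f'_k=f_k$ for $k\ne i$ and with $f'_i$ chosen so that $\ZERO,\ONE$ become fixed points of $f'$, while carefully controlling the set of in-arcs of $i$ created in $G(f')$. Lemma~\ref{lem:A08} applied to $\ZERO,\ONE$ then yields a positive cycle in $G(f')$. The main technical obstacle — and the crux of the proof — is to arrange $f'_i$ so that either the resulting positive cycle lies in $G\setminus i$ (forcing $V(C)\setminus\{i\}\ne\emptyset$ and, when $V(C)=\{i\}$, immediately contradicting the hypothesis that every cycle of $G$ intersects $C$), or else corresponds to a positive cycle of $G$ necessarily traversing the specific arc of $C$ supplied by Lemma~\ref{lem:R18}, contradicting that arc's defining property of lying on no positive cycle. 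Handling the dichotomy $|V(C)|=1$ versus $|V(C)|\ge 2$ uniformly will be the delicate endgame.
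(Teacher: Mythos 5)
Your setup matches the paper's: reduce via \cref{lem:R18}, \cref{lem:switch} and \cref{pro:BN_switch} to the situation where all negative arcs of $G$ end at a single vertex $i$, and then use \cref{lem:almost_fixed_points}/\cref{lem:trap_spaces} to produce $x,y\in A$ with $x\le y$, $x_i<y_i$, $[A]=[x,y]$, which are fixed except in coordinate $i$. But the proof has a genuine gap exactly where you say the ``crux'' lies, and the paper closes it with an ingredient you never introduce: the vertex $i$ must be chosen as the head of an arc of $C$ that lies on no positive cycle \emph{and has in-degree at least two}. This is precisely where the hypothesis ``at least one positive cycle'' is used (a positive cycle meets $C$, so some vertex of $C$ has in-degree $\ge 2$, and one can take the arc of \cref{lem:R18} to end there). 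With that choice, the hard case $x=\ZERO$, $y=\ONE$ is dispatched directly: every $k\ne i$ has only positive in-arcs and $f_k$ is non-constant, so $f_k(\ZERO)=0$ and $f_k(\ONE)=1$, forcing $f_i(\ZERO)=1$, $f_i(\ONE)=0$; monotonicity in all coordinates except $j$ then gives $f_i(z)=\bar z_j$ for \emph{every} $z$, so $i$ has in-degree one --- contradiction. Your alternative plan for this case (redefine $f'_i$ so that $\ZERO,\ONE$ become fixed points and invoke \cref{lem:A08}) does not obviously work: with $f'_i(z)=z_i$ the positive cycle produced may be the artificial loop at $i$; with other choices it may use artificial in-arcs of $i$ absent from $G$; and even a genuine positive cycle of $G\setminus i$ is not a contradiction when $|V(C)|\ge 2$, since it may still meet $C$ elsewhere. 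You explicitly leave this dichotomy unresolved, so the base case is not proved.

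The remaining case ($x\ne\ZERO$ or $y\ne\ONE$) is where the paper uses ``no cycle of $G$ is disjoint from $C$'': the nonempty set $I=\{k\mid x_k=1\}$ (or $\{k\mid y_k=0\}$) is closed under taking an in-neighbor witnessing $f_k(x)=1$, hence contains a cycle, which is positive because $i\notin I$ and disjoint from $C$ because $x_k<y_k$ on $V(C)$ --- an immediate contradiction, with no induction needed. Your substitute, an induction on strong components of $G(h)$ for $h$ the subnetwork induced by $[A]$, also has an unverified step: you must check that the component $K$ of $G(h)$ containing $C$ still has a positive cycle, but $K$ is a component of $G(h)\subseteq G[\Delta(A)]$, and a positive cycle of $G$ through $V(C)$ need not survive in $G(h)$; ``closing'' it inside $K$ using the strong connectedness of $G$ is not available, since $K$ is determined by $G(h)$, not by $G$. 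In short: the overall architecture is recognizably the paper's, but the two contradictions that actually finish the argument --- in-degree of $i$ at least two killing the case $[A]=\B^V$, and a positive cycle inside $\{k\mid x_k=1\}$ disjoint from $C$ killing the other case --- are both missing.
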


We need the following lemma.

\begin{lemma}\label{lem:one_negative}
Suppose that $G$ has a unique negative arc, say from $j$ to $i$, where $i$ is of in-degree at least two, and suppose that every positive cycle intersects every negative cycle. Then $G$ is~fixing.
\end{lemma}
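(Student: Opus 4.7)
The plan is to argue by contradiction: assume some $f\in F(G)$ admits a cyclic attractor $A$ of $\Gamma(f)$, and derive a positive cycle of $G$ disjoint from some negative cycle of $G$, contradicting the hypothesis.

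First, since the unique negative arc $(j,i)$ has terminal vertex $i$, Lemma \ref{lem:special_vertex} implies that $G$ is trapping, and Lemma \ref{lem:trap_spaces} (together with its use of Lemma \ref{lem:almost_fixed_points} inside its proof) provides configurations $x,y\in A$ with $x\le y$, $x_i=0$, $y_i=1$, $[A]=[x,y]$, and $f(x)=x+e_i$, $f(y)=y+e_i$. Set $I=\Delta(x,y)$. The subnetwork $\tilde g$ of $f$ induced by $[x,y]$ has a cyclic attractor with $\Delta$ equal to $I$, so Lemma \ref{lem:R10} forces $G(\tilde g)\subseteq G[I]$ to contain a negative cycle; this cycle must use the only negative arc $(j,i)$, so $i,j\in I$, and the cycle is of the form $(j,i)\cup P$ for a positive path $P\colon i=v_0\to v_1\to\cdots\to v_m=j$ in $G[I]$. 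The vertex set of this negative cycle is exactly $V(P)$, so it suffices to exhibit a positive cycle of $G$ disjoint from $V(P)$.

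To build such a candidate cycle, I would use the in-degree hypothesis: pick a positive in-neighbour $k\ne j$ of $i$ in $G$ and define a modified BN $f^*\in F(V)$ by $f^*_{v_t}(z)=z_{v_{t-1}}$ for $t=1,\dots,m$, $f^*_i(z)=z_k$, and $f^*_l=f_l$ for $l\notin V(P)$. Each arc introduced by this construction (the arcs $(v_{t-1},v_t)$ of $P$ and the arc $(k,i)$) is already a positive arc of $G$, so $G(f^*)\subseteq G$, and no new self-loop is created on any vertex of $V(P)$. The in-degree hypothesis is what rescues the construction: without a positive in-neighbour $k\ne j$ of $i$, the only way to keep $x$ and $y$ fixed at component $i$ would be to set $f^*_i(z)=z_i$, creating a positive self-loop at $i$ that could be the cycle returned by Lemma \ref{lem:A08}. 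In the main case $k\in I$ one has $x_k=0$ and $y_k=1$, so $x$ and $y$ become fixed points of $f^*$; Lemma \ref{lem:A08} then delivers a positive cycle $C$ in $G(f^*)[I]$, which is therefore a positive cycle of $G$.

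The final step is to show that $C$ can be taken with $V(C)\cap V(P)=\emptyset$, which would place $C$ in $G[I\setminus V(P)]$ and give the sought contradiction. The in-arc structure of $G(f^*)$ on $V(P)$ is very restrictive---the unique in-arc of each $v_t$ with $t\ge 1$ is $(v_{t-1},v_t)$, and the unique in-arc of $i$ is $(k,i)$---so any positive cycle of $G(f^*)$ meeting $V(P)$ has one of only a few explicit shapes (a prefix of $P$ closed by $(k,i)$ when $k\in V(P)$, or a large loop traversing all of $P$ and returning from $j$ to $k$ through $I\setminus V(P)$ when $k\notin V(P)$). I expect the main obstacle to be ruling out these residual shapes, probably by replacing $(x,y)$ with a better pair of fixed points of $f^*$ minimising $\Delta$ in the spirit of Lemma \ref{lem:Delta_A}, and by handling the subcase $k\notin I$ via an inductive reduction eliminating $i$ (sensible because within $[x,y]$ the function $f_i$ then depends only on $z_j$).
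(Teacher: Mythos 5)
Your setup coincides with the paper's: both take a cyclic attractor $A$, invoke \cref{lem:trap_spaces} to obtain $x\le y$ in $A$ with $x_i<y_i$ and with $[x,\ONE]$, $[\ZERO,y]$ and $[A]=[x,y]$ trap spaces, and use \cref{lem:R10} to place a negative cycle $C=(j,i)\cup P$ inside $G[\Delta(x,y)]$. The gap is in the second half, and it is genuine rather than merely technical: the positive cycle you extract via $f^*$ and \cref{lem:A08} lives in $G(f^*)[\Delta(x,y)]$, i.e.\ in exactly the region where $C$ lives, so disjointness from $C$ is not automatic and your construction cannot force it. Concretely, if the chosen positive in-neighbour $k$ of $i$ equals some $v_s\in V(P)$, then $i\to v_1\to\cdots\to v_s\to i$ is a bona fide positive cycle of $G(f^*)$ that \cref{lem:A08} is entitled to return, and it meets $C$, so no contradiction with the hypothesis follows; similarly, when $k\in\Delta(x,y)\setminus V(P)$ the returned cycle may traverse all of $P$ and close up through vertices outside $V(P)$, again meeting $C$ at $i$. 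Minimising $\Delta$ over pairs of fixed points of $f^*$ gives no reason for the cycle to avoid $V(P)$, and the subcase $k\notin\Delta(x,y)$ (where $x$ or $y$ fails to be fixed by $f^*$) is only sketched; the parenthetical claim that $f_i$ restricted to $[x,y]$ then depends only on $z_j$ does not follow from the non-membership of one particular $k$.

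The paper's proof closes the argument by looking \emph{outside} $\Delta(x,y)$ instead. If $x\neq\ZERO$, the set $\{k\mid x_k=1\}$ is nonempty and disjoint from $\Delta(x,y)\supseteq V(C)$; since $[x,\ONE]$ is a trap space, every $k$ in this set satisfies $f_k(x)=1$, and since $k\neq i$ it has only positive in-arcs, hence an in-neighbour $\ell$ with $x_\ell=1$. This produces a positive cycle that is disjoint from $C$ by construction, contradicting the hypothesis; the case $y\neq\ONE$ is symmetric. The remaining case $x=\ZERO$ and $y=\ONE$ --- where there is no ``outside'' --- is where the in-degree hypothesis is actually spent: one shows $f_i(z)=z_j+1$ for every configuration $z$, forcing $j$ to be the unique in-neighbour of $i$, a contradiction. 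To salvage your route you would need a mechanism that steers the positive cycle of \cref{lem:A08} away from $V(P)$, and no such mechanism is present in the proposal.
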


\begin{proof}
Let $f\in F(G)$ and suppose, for a contradiction, that $\Gamma(f)$ has an attractor $A$ of size at least two. By \cref{lem:trap_spaces}, there are $x,y\in A$ with $x_i<y_i$ and $x\leq y$ such that $[x,\ONE]$, $[\ZERO,y]$ and $[A]=[x,y]$ are trap spaces. By \cref{lem:R10}, $G$ has a negative cycle $C$ such that $x_k<y_k$ for every vertex $k$ in $C$.

\medskip
Suppose that $x=\ZERO$ and $y=\ONE$. Then, for every vertex $k\neq i$ in $G$, $k$ has only positive in-neighbors. Furthermore, since $x,y\in A$, $\Gamma(f)$ has a path from $\ZERO$ to $\ONE$ and thus $f_k$ is not a constant. So we have $f_k(x)=0=x_k$ and $f_k(y)=1=y_k$. We deduce that $f_i(x)=1$ and $f_i(y)=0$ (otherwise $x$ or $y$ would be fixed points, which is not possible since $x,y\in A$). Let $z$ be any configuration on $V$. If $z_j=0$ then we have $x_j=z_j$ and $x\leq z$, and since all the in-neighbors $k\neq j$ of $i$ are positive, we deduce that $1=f_i(x)\leq f_i(z)$, thus $f_i(z)=1$. Similarly, if $z_j=1$ then we have $z_j=y_j$ and $z\leq y$, and since all the in-neighbors $k\neq j$ of $i$ are positive, we deduce that $f_i(z)\leq f_i(y)=0$, thus $f_i(z)=0$. Consequently, $f_i(z)=z_j+1$ for all configurations $z$. But this means that $j$ is the unique in-neighbor of $i$, a contradiction. Consequently, $x\neq\ZERO$ or $y\neq\ONE$.

\medskip
Suppose first that $x\neq\ZERO$, that is,  $I=\{k\mid x_k=1\}$ is non-empty. Let $k\in I$. Since $[x,\ONE]$ is a trap space, we have $x\leq f(x)$ thus $f_k(x)=1$. Since $k\neq i$ (because $x_i=0$), all the in-coming arcs of $k$ are positive, so it has an in-neighbor $\ell$ with $x_\ell=1$. Hence $\ell\in I$. Consequently, $G[I]$ has a cycle, which is positive since $i\not\in I$. Since $x_k=0$ for all vertices $k$ in $C$, this positive cycle is disjoint from $C$, a contradiction. If $y\neq\ONE$ we prove with similar arguments that $\{k\mid y_k=0\}$ induces a positive cycle disjoint from $C$.
\end{proof}

\begin{proof}[\BF{Proof of \cref{pro:fixing}}]
By \cref{lem:R18}, the unique negative cycle $C$ has an arc that belongs to no positive cycle. Since some positive cycle intersects $C$, some vertex in $C$ has in-degree at least two, and we deduce that $C$ has an arc $a$, say from $j$ to $i$, which belongs to no positive cycle and such that $i$ is of in-degree at least two. Hence, since $G$ is strong, by \cref{lem:switch}, in some switch $G'$ of $G$, all the negative arcs have $i$ as terminal vertex. Let $a'\neq a$ be an arc with terminal vertex $i$, and let $C'$ be a cycle of $G'$ containing $a'$, which exists since $G$ is strong. In $G'$, all the arcs of $C'$ distinct from $a'$ are positive, and if $a'$ is negative, then $C'$ is a negative cycle distinct from $C$, a contradiction. Thus $a$ is the unique negative arc of $G'$. Since $i$ is of in-degree at least two, by \cref{lem:one_negative}, $G'$ is fixing, and so is $G$.
\end{proof}

\cref{ex:sep-not-conv-fix-graph} shows that we cannot drop the hypothesis of $G$ strong in \cref{pro:fixing}. If $G$ is strong, has a unique negative cycle $C$, at least one positive cycle, but some positive cycle is disjoint from $C$ then $G$ is not necessarily fixing, as showed by \cref{ex:fix1} below. Furthermore, if $G$ is strong, has two negative cycles and every positive cycle intersect every negative cycle, then $G$ is not necessarily fixing, as showed by \cref{ex:fix2} below.

\begin{example}\label{ex:fix1}
Let $f\in F(2)$ be defined by $f_1(x)=\bar x_1\lor x_2$ and $f_2(x)=x_1 x_2$. Then $\Gamma(f)$ is not fixing since $\{00,10\}$ is an attractor, while $G(f)$ is strong, has a unique negative cycle and two positive cycles.
\[
\begin{array}{c}
\begin{tikzpicture}
\pgfmathparse{1}
\node (00) at (0,0){{\boldmath \textcolor{Blue}{$00$} \unboldmath}};
\node (01) at (0,1.5){$01$};
\node (10) at (1.5,0){{\boldmath \textcolor{Blue}{$10$} \unboldmath}};
\node (11) at (1.5,1.5){{\boldmath \textcolor{Plum}{$11$}\unboldmath}};
\path[thick,->,draw,black]
(00) edge[ultra thick,bend left=15,Blue] (10)
(01) edge (11)
(01) edge (00)
(10) edge[ultra thick,bend left=15,Blue] (00)
;
\end{tikzpicture}
\end{array}
\qquad
\begin{array}{c}
\begin{tikzpicture}
\node[outer sep=1,inner sep=2,circle,draw,thick] (1) at ({180}:1){$1$};
\node[outer sep=1,inner sep=2,circle,draw,thick] (2) at ({0}:1){$2$};
\draw[red,->,thick] (1.{180-20}) .. controls ({180-20}:2.3) and ({180+20}:2.3) .. (1.{180+20});
\draw[Green,->,thick] (2.{0-20}) .. controls ({0-20}:2.3) and ({0+20}:2.3) .. (2.{0+20});
\path[->,thick]
(1) edge[Green,bend right=25] (2)
(2) edge[Green,bend right=25] (1)
;
\end{tikzpicture}
\end{array}
\]
\end{example}

\begin{example}\label{ex:fix2}
Let $f\in F(2)$ be defined by $f_1(x)=\bar x_1\lor  x_2$ and $f_2(x)=x_1 \bar x_2$. Then $\Gamma(f)$ is not fixing since $\{00,10,01\}$ is an attractor, while $G(f)$ has a unique positive cycle, which intersects the two negative cycles.
\[
\begin{array}{c}
\begin{tikzpicture}
\pgfmathparse{1}
\node (00) at (0,0){{\boldmath \textcolor{Blue}{$00$} \unboldmath}};
\node (01) at (0,1.5){$01$};
\node (10) at (1.5,0){{\boldmath \textcolor{Blue}{$10$} \unboldmath}};
\node (11) at (1.5,1.5){{\boldmath \textcolor{Blue}{$11$} \unboldmath}};
\path[thick,->,draw,black]
(00) edge[ultra thick,bend left=15,Blue] (10)
(10) edge[ultra thick,bend left=15,Blue] (00)
(10) edge[ultra thick,bend left=15,Blue] (11)
(11) edge[ultra thick,bend left=15,Blue] (10)
(01) edge (11)
(01) edge (00)
;
\end{tikzpicture}
\end{array}
\qquad
\begin{array}{c}
\begin{tikzpicture}
\node[outer sep=1,inner sep=2,circle,draw,thick] (1) at ({180}:1){$1$};
\node[outer sep=1,inner sep=2,circle,draw,thick] (2) at ({0}:1){$2$};
\draw[red,->,thick] (1.{180-20}) .. controls ({180-20}:2.3) and ({180+20}:2.3) .. (1.{180+20});
\draw[red,->,thick] (2.{0-20}) .. controls ({0-20}:2.3) and ({0+20}:2.3) .. (2.{0+20});
\path[->,thick]
(1) edge[Green,bend right=25] (2)
(2) edge[Green,bend right=25] (1)
;
\end{tikzpicture}
\end{array}
\]
\end{example}

\section{Non-separating signed digraphs with feedback number two}\label{sec:feedback-number-two}

We can say more on non-separating signed digraphs $G$ when the feedback number of $G$ is exactly two. Let $K^\pm_n$ be the signed digraph with vertex set $[n]$ and with both a positive and a negative arc from $i$ to $j$ for any $i,j\in [n]$. It is an easy exercise to prove that $K^\pm_2$ is the unique non-separating signed digraph on two vertices, and that $F(K^\pm_2)$ is {\em exactly} the set of BNs in $F(2)$ with a non-separating asynchronous graph. An example follows.

\begin{example}\label{ex:K2}
Let $f\in F(K^\pm_2)$ be defined by $f_1(x)=x_1+x_2$ and $f_2(x)=x_1+x_2$. $\Gamma(f)$ has two attractors: $A=\{00\}$ and $B=\{01,10,11\}$. Since $[B]=\B^2$, $\Gamma(f)$ is non-separating.
\[
\begin{array}{c}
\begin{tikzpicture}
\pgfmathparse{1}
\node (00) at (0,0){{\boldmath \textcolor{Plum}{$00$} \unboldmath}};
\node (01) at (0,1.5){{\boldmath \textcolor{Blue}{$01$} \unboldmath}};
\node (10) at (1.5,0){{\boldmath \textcolor{Blue}{$10$} \unboldmath}};
\node (11) at (1.5,1.5){{\boldmath \textcolor{Blue}{$11$} \unboldmath}};
\path[thick,->,draw,black]
(01) edge[ultra thick,bend left=15,Blue] (11)
(10) edge[ultra thick,bend left=15,Blue] (11)
(11) edge[ultra thick,bend left=15,Blue] (10)
(11) edge[ultra thick,bend left=15,Blue] (01)
;
\end{tikzpicture}
\end{array}
\qquad
\begin{array}{c}
\begin{tikzpicture}
\useasboundingbox (-2.2,-0.7) rectangle (2.2,0.7);
\node[outer sep=1,inner sep=2,circle,draw,thick] (1) at ({180}:1){$1$};
\node[outer sep=1,inner sep=2,circle,draw,thick] (2) at ({0}:1){$2$};
\draw[Green,->,thick] (1.{180-20}) .. controls ({180-20}:2.3) and ({180+20}:2.3) .. (1.{180+20});
\draw[red,->,thick] (1.{180-60}) .. controls ({180-30}:2.8) and ({180+30}:2.8) .. (1.{180+60});
\draw[Green,->,thick] (2.{0-20}) .. controls ({0-20}:2.3) and ({0+20}:2.3) .. (2.{0+20});
\draw[red,->,thick] (2.{0-60}) .. controls ({0-30}:2.8) and ({0+30}:2.8) .. (2.{0+60});
\path[->,thick]
(1) edge[red,bend right=25] (2)
(1) edge[Green,bend right=55] (2)
(2) edge[red,bend right=25] (1)
(2) edge[Green,bend right=55] (1)
;
\end{tikzpicture}
\\
K^\pm_2
\end{array}
\]
\end{example}

Let $H_2$ be obtained from $K^\pm_2$ by deleting the negative loop on vertex $2$.
\[
\begin{array}{c}
\begin{tikzpicture}
\useasboundingbox (-2.2,-0.7) rectangle (2.2,0.7);
\node[outer sep=1,inner sep=2,circle,draw,thick] (1) at ({180}:1){$1$};
\node[outer sep=1,inner sep=2,circle,draw,thick] (2) at ({0}:1){$2$};
\draw[Green,->,thick] (1.{180-20}) .. controls ({180-20}:2.3) and ({180+20}:2.3) .. (1.{180+20});
\draw[red,->,thick] (1.{180-60}) .. controls ({180-30}:2.8) and ({180+30}:2.8) .. (1.{180+60});
\draw[Green,->,thick] (2.{0-20}) .. controls ({0-20}:2.3) and ({0+20}:2.3) .. (2.{0+20});
\path[->,thick]
(1) edge[red,bend right=25] (2)
(1) edge[Green,bend right=55] (2)
(2) edge[red,bend right=25] (1)
(2) edge[Green,bend right=55] (1)
;
\end{tikzpicture}
\\
H_2
\end{array}
\]

We prove below that if $G$ is non-separating and has feedback number two, then $G$ contains in some way $H_2$. To make this precise we need some definitions. Let $H$ be a signed digraphs with vertex set $U$. We say that $H$ is \EM{embedded} in $G$ if there is an injection $\phi:U\to V$ such that, for every positive (negative) arc of $H$ from $j$ to $i$, $G$ contains a positive (negative) path from $\phi(j)$ to $\phi(i)$ whose internal vertices are not in $\phi(U)$.

\begin{theorem}\label{thm:fvs2}
If $G$ is non-separating and has feedback number $2$, then $H_2$ is embedded in $G$.
\end{theorem}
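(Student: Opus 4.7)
The plan is to fix a feedback vertex set $\{u,v\}$ of $G$ of size two and show that an embedding $\phi\colon\{1,2\}\to V$ of $H_2$ into $G$ can be realized with image $\{u,v\}$. Concretely, one must exhibit seven sign-constrained paths or cycles in $G$: positive cycles through each of $u,v$ internally avoiding the other (the two positive loops of $H_2$), a negative cycle through $\phi(1)$ internally avoiding $\phi(2)$ (the negative loop on vertex $1$), and positive as well as negative paths in both directions between $u$ and $v$ internally avoiding $\{u,v\}$ (the four connecting arcs of $H_2$).

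First, I would gather the structural consequences of $G$ being non-separating that have already been proved in this paper. By \cref{thm:PFN} the positive feedback number of $G$ is at least two, so $G\setminus v$ contains a positive cycle $C_u^+$ (necessarily through $u$, since $\{u,v\}$ is a feedback vertex set) and $G\setminus u$ contains a positive cycle $C_v^+$ through $v$; these immediately provide the two positive loops of the prospective embedding. By \cref{thm:NEGATIVE} $G$ contains at least two negative cycles, by \cref{thm:sep} some positive cycle meets some negative cycle, and by \cref{lem:special_vertex_bis} no vertex or arc of $G$ lies in every negative cycle while avoiding every positive cycle.

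Next, I would show that $u$ and $v$ lie in a common strong component. If not, every strong component of $G$ contains at most one of $u,v$ and hence has feedback number at most one; by a case analysis — the component is trivial, or has only positive cycles (hence is robustly trapping by \cref{lem:robustly_trapping}), or has a single vertex meeting every negative cycle, in which case a robust version of the reasoning behind \cref{lem:special_vertex_bis} together with \cref{lem:robustly_converging} yields robust separability — every strong component of $G$ would be robustly separating, and \cref{lem:decomposition_2} would force $G$ itself to be separating, a contradiction. Consequently, $G$ contains simple paths in both directions between $u$ and $v$ internally avoiding $\{u,v\}$; the remaining task is to control the signs.

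The main step is then to produce the negative loop at $\phi(1)$ together with the four connecting paths of prescribed signs. Each negative cycle of $G$ is of one of three types: (i) avoids $v$, hence provides a negative cycle through $u$ internally avoiding $v$ (candidate negative loop at $u$); (ii) avoids $u$, symmetric; or (iii) contains both $u$ and $v$, and therefore decomposes as an internally $\{u,v\}$-avoiding path from $u$ to $v$ concatenated with one from $v$ to $u$, whose signs multiply to $-1$. If some negative cycle is of type (i) or (ii), set $\phi(1)$ to the corresponding endpoint, which resolves the negative loop; the four connecting paths are then obtained by combining the decomposed pieces of type (iii) cycles with the positive cycles $C_u^+, C_v^+$ and the positive-negative intersecting pair provided by \cref{thm:sep}, splicing at shared vertices to obtain the missing signs. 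The delicate subcase — and the main obstacle — is when all negative cycles are of type (iii) and moreover share a common sign split (say every such cycle is a negative $u$-to-$v$ path followed by a positive $v$-to-$u$ path); here I plan to exploit the existence of at least two distinct type (iii) negative cycles together with \cref{lem:special_vertex_bis} applied to a suitably chosen arc of $C_u^+\cup C_v^+$, forcing intersections that produce the still-missing positive $u$-to-$v$ path, negative $v$-to-$u$ path and negative loop. Once the seven paths are assembled, $\phi$ realizes the embedding of $H_2$.
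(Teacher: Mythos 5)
Your proposal reduces the theorem to a purely structural implication: that any signed digraph with feedback number $2$ satisfying the necessary conditions you collect from \cref{thm:PFN}, \cref{thm:NEGATIVE}, \cref{thm:sep} and \cref{lem:special_vertex_bis} must have $H_2$ embedded. That implication is false, so the strategy cannot be completed as described. Consider the signed digraph on $\{1,1',1'',2,2',2''\}$ with full-positive cycles $1\to 1'\to 1$ and $2\to 2'\to 2$, negative cycles $1\to 1''\to 1$ and $2\to 2''\to 2$ (say $1''\to 1$ and $2''\to 2$ negative, the other arcs positive), and positive arcs $1\to 2$ and $2\to 1$. It has feedback number $2$ (feedback vertex set $\{1,2\}$), positive feedback number $2$, two negative cycles, positive cycles meeting negative cycles, and no vertex or arc lying in every negative cycle and in no positive cycle; yet one checks that for every admissible choice of $\phi$ some required negative connecting path is missing, so $H_2$ is not embedded. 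Every ingredient you propose to splice is present in this graph, and no contradiction is available: the graph is simply separating, but that cannot be deduced from the listed conditions. In your own case analysis this is exactly the situation where all negative cycles are of type (i) or (ii) and there are no type (iii) cycles, so the four sign-prescribed connecting paths need not exist; the ``splicing'' and the ``delicate subcase'' are precisely where a genuinely new argument is needed and none is supplied.

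For this reason the paper's proof is dynamical rather than structural: it fixes $f\in F(G)$ with overlapping attractors, partitions $\B^V$ into the four sets $X^{c_1c_2}$ determined by the values at the two feedback vertices, uses Robert's theorem (\cref{lem:R95}) to get a canonical configuration $x^{c_1c_2}$ reached by geodesics inside each part, shows that up to switch $B\subseteq X^{00}$ while $A$ meets the other three parts, and then extracts each signed path of $H_2$ from specific trajectories of $\Gamma(f)$ via \cref{lem:A08} and \cref{lem:R10b}. The hardest item, the negative path from $i_1$ to $i_2$, requires a further argument (a switch making the union of $i_1$--$i_2$ paths full-positive, followed by a monotonicity argument along a path of $A$ from $X^{01}$ to $X^{10}$) that has no counterpart in your outline. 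A secondary issue: your step placing $u,v$ in a common strong component invokes a ``robust version'' of \cref{lem:special_vertex_bis} that the paper does not establish (\cref{lem:decomposition_2} requires robust separation of the components, whereas \cref{lem:special_vertex_bis} only yields separation), but this is moot next to the main gap.
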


\begin{remark}
The proof also shows that if $G$ has feedback number $2$ and $\Gamma(f)$ is non-separating for some $f\in F(G)$, then $\Gamma(f)$ has exactly two attractors, say $A$ and $B$, with $|A|=1$ and $|B|\geq 3$.
\end{remark}

Note that if $G$ is non-separating and has feedback number $2$, then $K^\pm_2$ is not necessarily embedded in $G$, as illustrated by \cref{ex:negative_feedback_1}.

\begin{example}
The interaction graph $G$ of \cref{ex:negative_arc_feedback_1} is non-separating and has feedback number~$2$. $H_2$ is indeed embedded in $G$, with $\phi(1)=1$ and $\phi(2)=2$, because: $1~\textcolor{Green}{\to}~3~\textcolor{Green}{\to}~1$ and $1~\textcolor{Green}{\to}~3~\textcolor{red}{\to}~1$ are positive and negative cycles containing $1$ but not $2$; $1~\textcolor{Green}{\to}~3~\textcolor{red}{\to}~2$ and $1~\textcolor{Green}{\to}~3~\textcolor{Green}{\to}~4~\textcolor{Green}{\to}~2$ are positive and negative paths from $1$ to $2$; $2~\textcolor{Green}{\to}~1$ and $2~\textcolor{red}{\to}~1$ are positive and negative paths from $2$ to $1$; and $2~\textcolor{Green}{\to}~2$ is a positive cycle containing $2$ but not $1$.
\end{example}

We will use several times a lemma whose statement needs some definitions. Let $P$ be a path of an asynchronous graph $\Gamma$ of length $\ell\geq 1$, with configurations  $x^0,\dots,x^\ell$ in order. For $0\leq k<\ell$, let $i_k$ be the direction of the arc $x^k\to x^{k+1}$. We call $i_0,\dots,i_{\ell-1}$ the \EM{direction sequence} of $P$. We say that $P$ is a \EM{geodesic} if its direction sequence has no repetition. We say that $P$ is \EM{increasing} if all its arcs are increasing. A \EM{walk} $W$ in a signed digraph $G$ is a sequence of arcs $a^0,\dots,a^\ell$ such that for $0\leq k<\ell$, the terminal vertex of $a^k$ is the initial vertex of $a^{k+1}$. The sign of $W$ is the product of the sign of its arcs. For $0\leq k<\ell$, let $i_k$ be the initial vertex of $a^k$, and let $i_\ell$ be terminal vertex of $a^\ell$. Then $i_0,\dots,i_\ell$ is the \EM{vertex sequence} of $W$, and we say that $W$ is a walk from $i_0$ to $i_\ell$. A sequence $s'$ is a \EM{subsequence} of $s$ if we can obtain $s'$ by removing some elements of~$s$ (so for instance $ii$ is a subsequence of $iji$).

\begin{lemma}[\cite{R10}]\label{lem:R10b}
Let $f\in F(G)$, and let $P$ be a path of $\Gamma(f)$ of length $\ell\geq 2$, with configurations  $x^1x^2\dots x^\ell x^{\ell+1}$ in order. Let $i$ be the direction of the arc from $x^\ell$ to $x^{\ell+1}$, and suppose that $f_i(x^k)=x^k_i$ for $1\leq k<\ell$. There is a component $j$ with $f_j(x^1)\neq x^1_j$ such that $G$ has a walk $W$ from $j$ to $i$ of sign $(f_j(x^1)-x^1_j)(f_i(x^\ell)-x^\ell_i)$ such that the vertex sequence of $W$ is a subsequence of the direction sequence of $P$. Furthermore, if $P$ is increasing then $W$ is a full-positive path (its vertex sequence has no repetition, and all its arcs are positive) and $j\in\Delta(x^1,x^{\ell+1})$.
\end{lemma}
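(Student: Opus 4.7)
My plan is to prove the lemma by a backward-traceback construction, building the walk $W$ by following how the discrepancy $\delta^v_k:=f_v(x^k)-x^k_v$ propagates backwards through the path $P$.

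Warm-up ($\ell=2$): the hypothesis $f_i(x^1)=x^1_i$ forces $i_1\neq i$, so $x^2_i=x^1_i$ and the change $f_i(x^2)\neq x^2_i$ is caused entirely by the flip of $i_1$. Hence $G$ has an arc from $j:=i_1$ to $i$ whose sign, unwinding the definition, equals $(f_j(x^1)-x^1_j)(f_i(x^2)-x^2_i)$, so the single-arc walk satisfies every requirement. For general $\ell$, I set $v_0:=i$ and $k_0:=\ell-1$ (so $\delta^{v_0}_{k_0}=0$ and $\delta^{v_0}_{k_0+1}\neq 0$) and iterate as follows. Given $v_s,k_s$ with $\delta^{v_s}_{k_s}=0$ and $\delta^{v_s}_{k_s+1}\neq 0$, define $v_{s+1}:=i_{k_s}$; the jump of $\delta^{v_s}$ at step $k_s$, together with the observation $i_{k_s}\neq v_s$ (else $\delta^{v_s}_{k_s}\neq 0$, contradiction), forces flipping $v_{s+1}$ at step $k_s$ to change $f_{v_s}$, so $G$ contains an arc $v_{s+1}\to v_s$ of sign $\delta^{v_{s+1}}_{k_s}\cdot\delta^{v_s}_{k_s+1}$. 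Then let $k_{s+1}$ be the largest index in $[1,k_s)$ with $\delta^{v_{s+1}}_{k_{s+1}}=0$, or $k_{s+1}:=0$ if no such index exists. Since the $k_s$ are strictly decreasing, the procedure terminates at some $r\geq 1$ with $k_r=0$, and $j:=v_r$ satisfies $\delta^j_1\neq 0$. The candidate walk is $W=v_r\to v_{r-1}\to\cdots\to v_0$, whose vertex sequence matches the increasing indices $k_{r-1}<k_{r-2}<\cdots<k_0<\ell$ inside the direction sequence of $P$.

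The main obstacle is making the sign come out exactly right. Telescoping the product of arc signs gives $\delta^i_\ell\cdot\delta^{v_r}_{k_{r-1}}\cdot\prod_{s=1}^{r-1}\delta^{v_s}_{k_{s-1}}\cdot\delta^{v_s}_{k_s+1}$, which equals the required $\delta^i_\ell\cdot\delta^j_1$ exactly when $\delta^{v_s}$ keeps constant sign on each interval where it remains nonzero. A short case analysis shows that a sign flip of $\delta^{v_s}$ within such an interval can occur only at a step $k'$ with $i_{k'}=v_s$, and that this situation forces a negative self-loop on $v_s$ in $G$. Each such self-loop is inserted into $W$ between the corresponding $v_s$ transitions: this cancels the stray $-1$ factor in the telescoped product, and the self-loop can be assigned to the direction-sequence position $k'$, preserving the subsequence property of the vertex sequence.

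For the ``furthermore'' statement, when $P$ is increasing every component is flipped at most once along $P$ (so the $i_k$ are pairwise distinct) and every $\delta$-value of a flipped direction equals $+1$. In particular no sign flips of any $\delta^{v_s}$ occur, no self-loops are added, the $v_s$ are pairwise distinct, and all arcs of $W$ are positive, making $W$ a full-positive path. Finally $j=v_r=i_{k_{r-1}}$ is one of the directions flipped along $P$, so $j\in\Delta(x^1,x^{\ell+1})$.
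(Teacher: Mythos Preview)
The paper does not provide its own proof of this lemma: it is quoted from \cite{R10} and used as a black box. So there is nothing in the present paper to compare your argument against directly.

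That said, your backward-traceback argument is correct and is essentially the standard proof of this statement. The key points are all in order: the observation that $i_{k_s}\neq v_s$ (because $\delta^{v_s}_{k_s}=0$), the identification of the arc $v_{s+1}\to v_s$ with sign $\delta^{v_{s+1}}_{k_s}\cdot\delta^{v_s}_{k_s+1}$, and the telescoping of the sign product. Your treatment of the potential sign flips of $\delta^{v_s}$ on the nonzero interval is also right: since $\delta^{v_s}\in\{-1,0,1\}$ and a single flip of a component $i_m\neq v_s$ can change $f_{v_s}$ by at most $1$, a jump from $\pm 1$ to $\mp 1$ forces $i_m=v_s$ and hence a negative self-loop on $v_s$, which you correctly splice into the walk at position $m$ (note that several such flips may occur on one interval, and you insert one self-loop per flip; the parity works out). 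The increasing case is clean because the direction sequence is then repetition-free, ruling out all self-loop insertions and forcing all $\delta$'s appearing in the product to equal $+1$.

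One small point of presentation: in the termination case $k_{s+1}=0$ you should say explicitly that $\delta^{v_{s+1}}_m\neq 0$ for all $m\in[1,k_s]$ (not just $[1,k_s)$), since $\delta^{v_{s+1}}_{k_s}\neq 0$ is what you already established; this is what guarantees $\delta^{j}_1\neq 0$ and allows the same self-loop analysis on the final interval $[1,k_{r-1}]$.
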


The following is a well-known result of Robert.

\begin{lemma}[\cite{R95}]\label{lem:R95}
Suppose that $G$ is acyclic and let $f\in F(G)$. Then $f$ has a unique fixed point and $\Gamma(f)$ has a geodesic from any configuration on $V$ to this fixed point.
\end{lemma}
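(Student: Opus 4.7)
The plan is to use the acyclicity of $G$ to linearize computations along a topological ordering of $V$. Fix such an ordering $v_1,\dots,v_n$ (it exists since $G$ has no cycle), meaning that every arc of $G$ goes from some $v_a$ to some $v_b$ with $a<b$. Since $f \in F(G)$, each coordinate function $f_i$ depends only on the in-neighbors of $i$ in $G$, i.e. on components that appear strictly earlier in the ordering.

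For existence and uniqueness of the fixed point, I would define $x^\ast \in \B^V$ inductively along the topological order: $x^\ast_{v_1} = f_{v_1}(y)$ for any $y$ (well-defined because $v_1$ has no in-neighbors, so $f_{v_1}$ is constant), and once $x^\ast_{v_1},\dots,x^\ast_{v_{k-1}}$ have been determined, set $x^\ast_{v_k} = f_{v_k}(x^\ast)$, which only uses the already-fixed coordinates. By construction $f(x^\ast)=x^\ast$, and the same inductive argument forces any fixed point to coincide with $x^\ast$ coordinate by coordinate, giving uniqueness.

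For the geodesic, given any $x \in \B^V$, let $I = \Delta(x,x^\ast)$ and write $I = \{v_{k_1},\dots,v_{k_r}\}$ with $k_1 < \dots < k_r$, i.e., list the differing components in topological order. Define $x^0 = x$ and $x^p = x^{p-1} + e_{v_{k_p}}$ for $1 \le p \le r$, so that $x^r = x^\ast$. I would then verify the key claim: for every $1 \le p \le r$, $f_{v_{k_p}}(x^{p-1}) \ne x^{p-1}_{v_{k_p}}$, so that $x^{p-1} \to x^p$ is an arc of $\Gamma(f)$. To see this, note that $f_{v_{k_p}}$ depends only on in-neighbors of $v_{k_p}$, all of which precede $v_{k_p}$ in the ordering. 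For any such in-neighbor $j$: if $j \in I$ then $j = v_{k_q}$ with $q < p$ and so $x^{p-1}_j = x^\ast_j$ (already flipped); if $j \notin I$ then $x^{p-1}_j = x_j = x^\ast_j$ (never flipped, and agrees with $x^\ast$ from the start). Hence $x^{p-1}$ and $x^\ast$ agree on all in-neighbors of $v_{k_p}$, so $f_{v_{k_p}}(x^{p-1}) = f_{v_{k_p}}(x^\ast) = x^\ast_{v_{k_p}} \ne x^{p-1}_{v_{k_p}}$ as required.

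The direction sequence of the resulting path is $v_{k_1},\dots,v_{k_r}$, which has no repetition, so the path is a geodesic from $x$ to $x^\ast$. There is no real obstacle here; the only point requiring care is making the simultaneous induction work, namely that processing $I$ in topological order guarantees all predecessors have already reached their target value before a vertex is updated. Everything else is bookkeeping along the topological order.
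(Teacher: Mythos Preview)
Your proof is correct and is the standard argument for Robert's theorem: topological ordering gives both the inductive construction of the unique fixed point and the validity of flipping the disagreeing coordinates in topological order. The paper does not supply its own proof of this lemma; it is quoted from \cite{R95} as a known result, so there is nothing to compare against.
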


\begin{proof}[\BF{Proof of \cref{thm:fvs2}}]
Let $\{i_1,i_2\}$ be a feedback vertex set of $G$. Let $f\in F(G)$, and let $A,B$ be distinct attractors of $\Gamma=\Gamma(f)$ with $[A]\cap [B]\neq\emptyset$. We will prove that $H_2$ is embedded in $G$. For $c_1,c_2\in\B$, let $X^{c_1c_2}$ be the set of configurations on $V$ with $x_{i_1}=c_1$ and $x_{i_2}=c_2$.

\begin{quote}
(1) {\em For every $c_1,c_2\in \B$, there is $x^{c_1c_2}\in X^{c_1c_2}$ with $f_j(x^{c_1c_2})=x^{c_1c_2}_j$ for all $j\neq i_1,i_2$ such that $\Gamma$ has a geodesic from any configuration in $X^{c_1c_2}$ to $x^{c_1c_2}$.}

\smallskip
Let $f'\in F(V)$ be defined by $f'_{i_1}(x)=c_1$, $f'_{i_2}(x)=c_2$ and $f'_j(x)=f_j(x)$ for all $j\neq i_1,i_2$. Then $G(f')$ is obtained from $G$ by removing all the arcs with $i_1$ or $i_2$ as terminal vertex, and thus it is acyclic. By \cref{lem:R95}, $f'$ has a unique fixed point, say $x^{c_1c_2}$, and $\Gamma(f')$ has a geodesic from any configuration to $x^{c_1c_2}$. Since $x^{c_1c_2}\in X^{c_1c_2}$, and since $\Gamma(f')[X^{c_1c_2}]=\Gamma[X^{c_1c_2}]$, we deduce that $x^{c_1c_2}$ has the desired properties.
\end{quote}

We deduce from (1) that $A$ and $B$ cannot intersect the same set $X^{c_1c_2}$ (since otherwise $x^{c_1c_2}\in A\cap B$). Suppose that $A$ and $B$ intersect at most two of the four sets $X^{00},X^{01},X^{10},X^{11}$, and that $A$ intersects $X^{c_1c_2}$. If $A$ intersects also $X^{\bar c_1\bar c_2}$ then it must intersects $X^{\bar c_1 c_2}$ or $X^{c_1 \bar c_2}$, and we obtain a contradiction. We deduce that either $A\subseteq X^{c_1c_2}\cup X^{\bar c_1c_2}$ and $B\subseteq X^{\bar c_1\bar c_2}\cup X^{c_1\bar c_2}$ or $A\subseteq X^{c_1c_2}\cup X^{c_1\bar c_2}$ and $B\subseteq X^{\bar c_1\bar c_2}\cup X^{\bar c_1c_2}$. In both cases we have $[A]\cap [B]=\emptyset$. Hence we can suppose that one of $A,B$  intersects three of the sets, and the other one. Up to a switch of $f$, we can suppose that:
\[
x^{00}=\ZERO, \quad B\subseteq X^{00}\quad\textrm{and}\quad
A\subseteq X^{01}\cup X^{10}\cup X^{11}.
\]
From (1) we have $\ZERO\in B$ and $x^{01},x^{10},x^{11}\in A$.

\begin{quote}
(2) {\em $f(\ZERO)=\ZERO$, $f(x^{01})=x^{01}+e_{i_1}$ and $f(x^{10})=x^{10}+e_{i_2}$.}

\smallskip
If $f_{i_1}(\ZERO)=1$ then $\Gamma$ has an arc from $\ZERO$ to $e_{i_1}$ and we deduce that $e_{i_1}\in B\cap X^{10}$, a contradiction. Thus $f_{i_1}(\ZERO)=0$ and we prove similarly that $f_{i_2}(\ZERO)=0$. We deduce from (1) that $f(\ZERO)=\ZERO$. If $f_{i_2}(x^{01})=0$, then $\Gamma$ has an arc from $x^{01}$ to $x^{01}+e_{i_2}$ and we deduce that $x^{01}+e_{i_2}\in A\cap X^{00}$, a contradiction. Thus $f_{i_2}(x^{01})=1$, and since $x^{01}$ is not a fixed point, we deduce from (1) that $f(x^{01})=x^{01}+e_{i_1}$. We prove similarly that $f(x^{10})=x^{10}+e_{i_2}$.
\end{quote}

\begin{quote}
(3) {\em $G\setminus i_2$ has a full-positive cycle containing $i_1$, and $G\setminus i_1$ has a full-positive cycle containing $i_2$.}

\smallskip
Let $f'\in F(V)$ be defined by $f'_{i_2}(x)=0$ and $f'_j(x)=f_j(x)$ for all $j\neq i_2$. Then $G(f')\setminus i_2=G\setminus i_2$. By (2), we have $f'(x^{10})=x^{10}$. Since $f'(\ZERO)=\ZERO$ and $x^{10}_{i_2}=0$, we deduce from \cref{lem:A08} that $G(f')\setminus i_2=G\setminus i_2$ has a full-positive cycle~$C$. Since $i_1$ is a feedback vertex set of $G\setminus i_2$, we deduce that $C$ contains $i_1$. We prove similarly that $G\setminus i_1$ has a full-positive cycle containing $i_2$.
\end{quote}

\begin{quote}
(4) {\em $G$ has a full-positive path from $i_1$ to $i_2$, and from $i_2$ to $i_1$.}

\smallskip
We prove that $G$ has a full-positive path from $i_1$ to $i_2$; for the other path the argument is similar. If $f_{i_2}(e_{i_1})=1$ then, since $f_{i_2}(\ZERO)=0$, $G$ has a positive arc from $i_1$ to $i_2$ and we are done. Suppose $f_{i_2}(e_{i_1})=0$. Let $P$ be a geodesic of $\Gamma$ from $e_{i_1}$ to $x^{10}$ which exists by (1), and which is increasing since $e_{i_1}\leq x^{10}$. Let $x$ be the first vertex of $P$ with $f_{i_2}(x)=1$, which exists by (2), and which is not the first vertex of $P$, by hypothesis. Let $P'$ be obtained by adding the arc from $x$ to $x+e_{i_2}$ to the subpath of $P$ from $e_{i_1}$ to $x$. Then $P'$ is increasing and, by \cref{lem:R10b}, there is $j\neq i_1,i_2$ such that $f_j(e_{i_1})\neq (e_{i_1})_j$ and such that $G$ has a full-positive path from $j$ to $i_2$ (which does not contain $i_1$). So $f_j(e_{i_1})=1$ and, since $f_j(\ZERO)=0$, $G$ has a positive arc from $i_1$ to $j$, and thus $G$ has a full-positive path from $i_1$ to $i_2$.
\end{quote}

\begin{quote}
(5) {\em If $f_{i_1}(x^{11})=0$ then $G$ has a negative cycle containing $i_1$ but not $i_2$, and if $f_{i_2}(x^{11})=0$ then $G$ has a negative cycle containing $i_2$ but not $i_1$.}

\smallskip
Suppose that $f_{i_1}(x^{11})=0$. By (2) we have $f(x^{01})=x^{01}+e_{i_1}$, and since $x^{01}\in A$, we have $x^{01}+e_{i_1}\in A\cap X^{11}$. By~(1), $\Gamma$ has a geodesic path from $x^{01}+e_{i_1}$ to $x^{11}$. Hence it has a shortest geodesic path $P$ from $x^{01}+e_{i_1}$ to a state $x\in X^{11}$ such that $f_{i_1}(x)=0$. Let $P'$ be obtained from $P$ by adding the arc from $x^{01}$ to $x^{01}+e_{i_1}$ and from $x$ to $x+e_{i_1}$. By \cref{lem:R10b}, $G\setminus i_2$ has a negative walk from $i_1$ to itself. Hence $G\setminus i_2$ has a negative cycle and since $i_1$ is a feedback vertex set of $G\setminus i_2$, this negative cycle contains $i_1$. We prove similarly the second assertion.
\end{quote}

\begin{quote}
(6) {\em If $f_{i_1}(x^{11})=0$, then $G$ has a negative path from $i_2$ to $i_1$, and if $f_{i_2}(x^{11})=0$, then $G$ has a negative path from $i_1$ to $i_2$.}

\smallskip
Suppose that $f_{i_1}(x^{11})=0$. By (2) we have $f(x^{10})=x^{10}+e_{i_2}$, and since $x^{10}\in A$, we have $x^{10}+e_{i_2}\in A\cap X^{11}$. By~(1), $\Gamma$ has a geodesic path from $x^{10}+e_{i_2}$ to $x^{11}$. Hence it has a shortest geodesic path $P$ from $x^{10}+e_{i_2}$ to a state $x\in X^{11}$ such that $f_{i_1}(x)=0$. Let $P'$ be obtained from $P$ by adding the arc from $x^{10}$ to $x^{10}+e_{i_2}$ and from $x$ to $x+e_{i_1}$. By \cref{lem:R10b}, $G$ has a negative walk $W$ from $i_2$ to $i_1$ such that, denoting $j_1,\dots,j_\ell$ the vertex sequence of $W$ (thus $j_1=i_2$ and $j_\ell=i_1$), we have $i_1,i_2\not\in\{j_2,\dots,j_{\ell-1}\}$. Since $G\setminus\{i_1,i_2\}$ is acyclic, the vertex sequence has no repetition. Hence $W$ corresponds to a path. We prove similarly the second assertion.
\end{quote}

Since $x^{11}$ is not a fixed point, by (1) we have $f_{i_1}(x^{11})=0$ or $f_{i_2}(x^{11})=0$. If $f_{i_1}(x^{11})=0$ and $f_{i_2}(x^{11})=0$ then, by (3)-(6), $K^\pm_2$ is embedded in $G$ and so is $H_2$. So suppose, without loss, that $f_{i_2}(x^{11})=1$, and thus $f_{i_1}(x^{11})=0$ (since otherwise, by (1), $x^{11}$ is a fixed point, a contradiction). By (3)-(6), it only remains to prove that $G$ has a negative path from $i_1$ to $i_2$.

\medskip
Suppose, for a contradiction, that all the paths from $i_1$ to $i_2$ are positive. Let $I$ be the vertices that belong to a path from $i_1$ to $i_2$; by (4) there is at least one path from $i_1$ to $i_2$ thus $I$ is not empty and $i_1,i_2\in I$. Let $H$ be obtained from $G[I]$ by removing all the incoming arcs of $i_1$ and all the out-going arcs of $i_2$.

\begin{quote}
(7) {\em There is $L\subseteq I\setminus \{i_1,i_2\}$ such that the $L$-switch of $H$ is full-positive.}

\smallskip
Let $H'$ be obtained from $H$ by adding a positive arc from $i_2$ to $i_1$. Let $j_1,j_2$ be two vertices in $H'$. Then $H$ has a path from $i_1$ to $j_2$ and a path from $j_1$ to $i_2$. Since $H'$ has an arc from $i_2$ to $i_1$, it has a path from $j_1$ to $j_2$. So $H'$ is strongly connected. Suppose that $H'$ has a negative cycle $C$. Since $G\setminus \{i_1,i_2\}$ is acyclic, $H$ is acyclic, and thus $C$ contains the positive arc from $i_2$ to $i_1$. Hence the path of $C$ from $i_1$ to $i_2$ is negative, and since it is in $G$ we obtain a contradiction. Hence $H'$ has only positive cycles. Since $H'$ is strong, by \cref{pro:harary}, there is $L\subseteq I$ such that the $L$-switch of $H'$ is full-positive, and the $(I\setminus L)$-switch of $H'$ is also full-positive. If $i_1,i_2\not\in L$ then we are done. Otherwise, since the arc from $i_2$ to $i_1$ is positive in $H'$, we have $i_1,i_2\in L$, and we are done with $(I\setminus L)$ instead of $L$.
\end{quote}

Hence, up two a $L$-switch of $G$ and $f$ with $i_1,i_2\not\in L$, we can suppose that $H$ is full-positive, and since $i_1,i_2\not\in L$, $B$ still intersects $X^{00}$ and $A$ still intersects $X^{01},X^{10},X^{11}$ (but $x^{00}$ is no longer necessarily equal to $\ZERO$). Let $R$ be the set of vertices reachable from $i_1$ in $G$; so $I\subseteq R$. Let $J=R\setminus I$ and $K=V\setminus R$. Note that $G$ has no arc from $J$ to $I\setminus i_1$ (if there is an arc from $j \in J$ to $I\setminus i_1$, then $j$ is in a path from $i_1$ to $i_2$ so it belongs to $I$, a contradiction). Let   $\preceq$ be the partial order on $\B^V$ defined by $x\preceq y$ if and only if $x_{I\setminus i_1}\leq y_{I\setminus i_1}$ and $x_K=y_K$.

\begin{quote}
(8) {\em If $x\preceq y$ and $x_{i_1}\leq y_{i_1}$ then $f(x)\preceq f(y)$.}

\smallskip
Suppose that $x\preceq y$ and $x_{i_1}\leq y_{i_1}$. Since $\Delta(x,y)\subseteq R$ and $G$ has no arc from $R$ to $K$ we have $f(x)_K=f(y)_K$. Let $z$ be the configuration on $V$ defined by $z_I=y_I$ and $z_{V\setminus I}=x_{V\setminus I}$. We have $\Delta(x,z)\subseteq I$ and $x\leq z$. Given any $i\in I\setminus i_1$, since every arc of $G$ from a vertex in $I$ to $i$ is positive, we deduce that $f_i(x)\leq f_i(z)$. Since $\Delta(z,y)\subseteq J$ and $G$ has no arc from $J$ to $i$, we have $f_i(z)=f_i(y)$ and thus $f_i(x)\leq f_i(y)$.
\end{quote}

Since $\Gamma[A]$ has a path from $X^{01}$ to $X^{10}$, it has a path $P$ from $X^{01}$ to $X^{10}$ whose internal configurations are all in $X^{11}$. Let $y^0,\dots,y^{\ell+1}$ be the configurations of $P$ in order, and let $j_0,\dots,j_\ell$ be the direction sequence of $P$. Since $y^k\in X^{11}$ for $1\leq k\leq\ell$, we have $y^0\in X^{01}$ and $y^1\in X^{11}$ thus $j_0=i_1$. Similarly, since $y^\ell\in X^{11}$ and $y^{\ell+1}\in X^{10}$, we have $j_\ell=i_2$.

\medskip
Let $x^0=y^0$ and, for $1\leq k\leq\ell+1$, let $x^k=x^{k-1}+e_{j_k}$ if $\Gamma$ has an arc from $x^{k-1}$ in the direction $j_k$, and $x^k=x^{k-1}$ otherwise. Hence $\Gamma$ has a path from $x^0$ to $x^\ell$ whose direction sequence is a subsequence of $j_1,\dots,j_\ell$.

\medskip
We have $x^0_{i_1}=y^0_{i_1}=0$ and $y^1_{i_1}=1$. Since $i_1\not\in\{j_1,\dots,j_\ell\}$, we deduce that $x^k_{i_1}<y^{k+1}_{i_1}$ for all $0\leq k\leq \ell$. Hence we have $x^k\preceq y^{k+1}$ for all $0\leq k\leq \ell$. Indeed, for $k=0$ we have $x^0=y^0\preceq y^0+e_{i_1}=y^1$, and if $x^{k-1}\preceq y^k$ for some $1\leq k<\ell$, then, since $x^{k-1}_{i_1} <y^k_{i_1}$, we have $f(x^{k-1})\preceq f(y^k)$ by (8) and we deduce that $x^k\preceq y^{k+1}$. In particular $x^\ell\preceq y^{\ell+1}$. Since $y^{\ell+1}\in X^{10}$, we have $x^\ell_{i_2}=0$ and thus $x^\ell\in X^{00}$ because $x^\ell_{i_1}=0$. Since $\Gamma$ has a path from $x^0$ to $x^\ell$ and $x^0\in A$, we have $x^\ell\in A\cap X^{00}$, a contradiction. This proves that $G$ has a negative path from $i_1$ to $i_2$.
\end{proof}

If $G$ has feedback number at least $3$, then $H_2$ is not necessarily embedded in $G$ as illustrated by the following example.

\begin{example}\label{ex:fn3}
Let $f\in F(3)$ be defined by $f_1(x)=\bar x_3x_1\lor \bar x_3x_2$, $f_2(x)=\bar x_1x_2\lor \bar x_1x_3$ and $f_3(x)=\bar x_2x_3\lor \bar x_2x_1$. Then $\Gamma(f)$ is non-separating, and $H_2$ is not embedded in $G(f)$ since all the paths from $1$ to $3$ are positive, all the paths from $3$ to $2$ are positive and all the paths from $2$ to $1$ are positive.
\[
\begin{array}{c}
\begin{tikzpicture}
\pgfmathparse{1}
\node (000) at (0,0){{\boldmath \textcolor{Plum}{$000$} \unboldmath}};
\node (001) at (1,1){{\boldmath \textcolor{Blue}{$001$} \unboldmath}};
\node (010) at (0,2){{\boldmath \textcolor{Blue}{$010$} \unboldmath}};
\node (011) at (1,3){{\boldmath \textcolor{Blue}{$011$} \unboldmath}};
\node (100) at (2,0){{\boldmath \textcolor{Blue}{$100$} \unboldmath}};
\node (101) at (3,1){{\boldmath \textcolor{Blue}{$101$} \unboldmath}};
\node (110) at (2,2){{\boldmath \textcolor{Blue}{$110$} \unboldmath}};
\node (111) at (3,3){$111$};
\path[thick,->,draw,black]
(010) edge[ultra thick,Blue] (110)
(110) edge[ultra thick,Blue] (100)
(100) edge[ultra thick,Blue] (101)
(101) edge[ultra thick,Blue] (001)
(001) edge[ultra thick,Blue] (011)
(011) edge[ultra thick,Blue] (010)
(111) edge (011)
(1110) edge (101)
(1110) edge (110)
;
\end{tikzpicture}
\end{array}
\qquad
\begin{array}{c}
\begin{tikzpicture}
\node[outer sep=1,inner sep=2,circle,draw,thick] (1) at ({-120-90}:1){$1$};
\node[outer sep=1,inner sep=2,circle,draw,thick] (2) at ({120-90}:1){$2$};
\node[outer sep=1,inner sep=2,circle,draw,thick] (3) at (-90:1){$3$};
\draw[Green,->,thick] (1.{-120-90-20}) .. controls ({-120-90-20}:2) and ({-120-90+20}:2) .. (1.{-120-90+20});
\draw[Green,->,thick] (2.{120-90-20}) .. controls ({120-90-20}:2) and ({120-90+20}:2) .. (2.{120-90+20});
\draw[Green,->,thick] (3.{-90-20}) .. controls ({-90-20}:2) and ({-90+20}:2) .. (3.{-90+20});
\path[->,thick]
(1) edge[red,bend left=15] (2)
(2) edge[red,bend left=15] (3)
(3) edge[red,bend left=15] (1)
(1) edge[Green,bend left=15] (3)
(3) edge[Green,bend left=15] (2)
(2) edge[Green,bend left=15] (1)
;
\end{tikzpicture}
\end{array}
\]
\end{example}

Note that the signed digraph of the example has positive feedback number equal to three.
\cref{ex:not-sep} shows that when $G$ is non-separating and has positive feedback number equal to two then $H_2$ is not necessarily embedded in $G$.
The following example shows that this is not necessarily the case even when adding the requirement that $G$ is strongly connected.

\begin{example}
\label{ex:strong-h2-not-embedded}
Consider $f\in F(4)$ defined by $f_1(x)=x_3 \vee x_1 \bar x_2$, $f_2(x)= x_4 \vee x_2 \bar x_1$, $f_3(x)= x_2 \bar x_3$ and $f_4(x)= x_1$. Then $\Gamma(f)$ is non-separating as shown in the figure below. $G(f)$ is strongly connected, has feedback number three and positive feedback number two. $H_2$ is not embedded in $G(f)$: vertices 1 and 2 are the only vertices that belong to disjoint positive cycles, and all negative cycles that contain one of them contain both.
\[
\begin{array}{c}
\begin{tikzpicture}
\pgfmathparse{1}
\node (0000) at (0,0){{\boldmath \textcolor{Plum}{$0000$} \unboldmath}};
\node (0010) at (1,1){$0010$};
\node (0100) at (0,2){{\boldmath \textcolor{Blue}{$0100$} \unboldmath}};
\node (0110) at (1,3){{\boldmath \textcolor{Blue}{$0110$} \unboldmath}};
\node (1000) at (2,0){{\boldmath \textcolor{Blue}{$1000$} \unboldmath}};
\node (1010) at (3,1){{\boldmath \textcolor{Blue}{$1010$} \unboldmath}};
\node (1100) at (2,2){{\boldmath \textcolor{Blue}{$1100$} \unboldmath}};
\node (1110) at (3,3){{\boldmath \textcolor{Blue}{$1110$} \unboldmath}};
\node (0001) at (5,0){$0001$};
\node (0011) at (6,1){$0011$};
\node (0101) at (5,2){{\boldmath \textcolor{Blue}{$0101$} \unboldmath}};
\node (0111) at (6,3){{\boldmath \textcolor{Blue}{$0111$} \unboldmath}};
\node (1001) at (7,0){{\boldmath \textcolor{Blue}{$1001$} \unboldmath}};
\node (1011) at (8,1){{\boldmath \textcolor{Blue}{$1011$} \unboldmath}};
\node (1101) at (7,2){{\boldmath \textcolor{Blue}{$1101$} \unboldmath}};
\node (1111) at (8,3){{\boldmath \textcolor{Blue}{$1111$} \unboldmath}};
\path[thick,->,draw,black]
(0001) edge (0101)
(0001) edge[bend left=20] (0000)
(0010) edge (0000)
(0010) edge (1010)
(0011) edge (0001)
(0011) edge[bend left=20] (0010)
(0011) edge (0111)
(0011) edge (1011)
(0100) edge[Blue,ultra thick,bend right=10] (0110)
(0101) edge[Blue,ultra thick,bend right=20] (0100)
(0101) edge[Blue,ultra thick,bend right=10] (0111)
(0110) edge[Blue,ultra thick,bend right=10] (0100)
(0110) edge[Blue,ultra thick] (1110)
(0111) edge[Blue,ultra thick,bend right=10] (0101)
(0111) edge[Blue,ultra thick,bend right=20] (0110)
(0111) edge[Blue,ultra thick] (1111)
(1000) edge[Blue,ultra thick,bend right=20] (1001)
(1001) edge[Blue,ultra thick] (1101)
(1010) edge[Blue,ultra thick] (1000)
(1010) edge[Blue,ultra thick,bend right=20] (1011)
(1011) edge[Blue,ultra thick] (1001)
(1011) edge[Blue,ultra thick] (1111)
(1100) edge[Blue,ultra thick] (0100)
(1100) edge[Blue,ultra thick] (1000)
(1100) edge[Blue,ultra thick,bend right=10] (1110)
(1100) edge[Blue,ultra thick,bend left=20] (1101)
(1101) edge[Blue,ultra thick] (0101)
(1101) edge[Blue,ultra thick,bend right=10] (1111)
(1110) edge[Blue,ultra thick,bend right=10] (1100)
(1110) edge[Blue,ultra thick,bend left=20] (1111)
(1110) edge[Blue,ultra thick] (1010)
(1111) edge[Blue,ultra thick,bend right=10] (1101)
;
\end{tikzpicture}
\end{array}
\qquad
\begin{array}{c}
\begin{tikzpicture}
\node[outer sep=1,inner sep=2,circle,draw,thick] (1) at (0,0){$1$};
\node[outer sep=1,inner sep=2,circle,draw,thick] (2) at (2,0){$2$};
\node[outer sep=1,inner sep=2,circle,draw,thick] (3) at (0,-2){$3$};
\node[outer sep=1,inner sep=2,circle,draw,thick] (4) at (2,-2){$4$};
\draw[red,->,thick] (3) to [out=120,in=210,looseness=6] (3);
\draw[Green,->,thick] (1) to [out=120,in=210,looseness=6] (1);
\draw[Green,->,thick] (2) to [out=60,in=-30,looseness=6] (2);
\path[->,thick]
(1) edge[red,bend right=15] (2)
(1) edge[Green] (4)
(2) edge[red,bend right=15] (1)
(2) edge[Green] (3)
(3) edge[Green] (1)
(4) edge[Green] (2)
;
\end{tikzpicture}
\end{array}
\]
\end{example}

\section{Open problems}\label{sec:open-problems}

In this work we introduced notions of ``separation'' for asynchronous attractors of Boolean networks.
The mildest notion of separation requires that the minimal subspaces containing different attractors do not intersect.
If the minimal subspaces are also closed with respect to the dynamics, we speak of trap-separation.

\medskip
Previous results examined the cases of interaction graphs with no cycles, no positive cycles, or no negative cycles. In all cases, the attractors have strong separation properties, in particular they can always be separated using trap spaces.
Here we showed that the existence of at most one positive cycle (\cref{thm:PFN}) or at most one negative cycle (\cref{thm:NEGATIVE}) is sufficient for separation but not for trap-separation (\cref{ex:not-trap-sep}).
Separation is also guaranteed if there is no intersection between positive or negative cycles (\cref{thm:sep}), and trap-separation if there are no paths from negative to positive cycles (\cref{thm:trap-sep}).

\medskip
If we add the requirement that the graph is strongly connected, then trap-separation holds if there is at most one negative cycle (\cref{thm:NEGATIVE}). The theorem actually shows that a stronger property holds: the minimal subspaces containing attractors are trap spaces. Informally, we can say that the attractors are well approximated by minimal trap spaces. Whether the existence of at most one positive cycle in a strongly connected graph also implies conditions stronger than separation remains an open question. \cref{ex:strong-not-trapping} shows that under these conditions the attractors are not necessarily well approximated by minimal trap spaces.

\medskip
We also looked at how the existence of several attractors affects the existence of separate cycles of opposite signs.
We saw that if a network has multiple attractors, and at least one of them is not a fixed point, if it is trap-separating then its interaction graph must have disjoint positive and negative cycles (\cref{pro:disjoint-opposite}), but this is not necessarily true if the dynamics is only separating (\cref{ex:sep-not-conv-fix}).
For non-separating dynamics, all the examples we provided have disjoint positive and negative cycles, and at least one positive cycle with all vertices belonging to a negative cycle. We formulate the following conjecture.

\begin{conjecture}
Suppose that $G$ is non-separating. Then
\begin{itemize}
\item $G$ has a negative and a positive cycle that are disjoint, and 
\item $G$ has a positive cycle $C$ such that every vertex of $C$ belongs to a negative cycle.
\end{itemize}
\end{conjecture}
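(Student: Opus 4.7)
Assume $G$ is non-separating, so there exist $f \in F(G)$ and distinct attractors $A, B$ of $\Gamma(f)$ with $[A] \cap [B] \neq \emptyset$. If both were singletons, $[A] \cap [B] \neq \emptyset$ would force $A = B$, contradicting distinctness; hence at least one, say $A$, has $|A| \geq 2$. For every $(a, b) \in A \times B$ one has $\Delta(a, b) \subseteq \Delta(A) \cup \Delta(B)$: if $i \in \Delta(a,b) \setminus (\Delta(A) \cup \Delta(B))$, then $[A]$ fixes $x_i$ at $a_i$ and $[B]$ at $b_i \neq a_i$, making the intersection empty. Pick $(a, b) \in A \times B$ minimizing $d(a, b)$; minimality combined with the trap-set property of $A$ and $B$ forces $f_i(a) = a_i$ and $f_i(b) = b_i$ for all $i \in \Delta(a, b)$, since a flip at $a$ or $b$ would produce a strictly closer pair in $A \times B$. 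By \cref{lem:A08}, $G[\Delta(a,b)]$ contains a positive cycle $C^+$.

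\emph{First bullet.} \cref{lem:R10} yields a negative cycle $C^-$ in $G[\Delta(A)]$ (and, if $|B| \geq 2$, also one in $G[\Delta(B)]$). To force $C^+$ and $C^-$ to be vertex-disjoint, I would refine the choice of $(a,b)$: among minimum-distance pairs, pick one further minimizing $|\Delta(a,b) \cap \Delta(A)|$. The goal is to show this quantity vanishes. If $i \in \Delta(a,b) \cap \Delta(A)$, pick $\alpha \in A$ closest to $a$ with $\alpha_i = b_i$; from $f_i(a) = a_i$ we get $a + e_i \notin A$, hence $d(a, \alpha) \geq 2$. Tracking a path inside $A$ in $\Gamma(f)$ from $a$ to $\alpha$ (which must flip $i$ at some point) should allow replacing $a$ by an $a' \in A$ yielding either a strictly closer pair in $A \times B$ (contradicting primary minimality) or an equally close pair whose difference set meets $\Delta(A)$ in strictly fewer coordinates (contradicting secondary minimality). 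Once $\Delta(a,b) \cap \Delta(A) = \emptyset$, $C^+ \subseteq G[\Delta(a,b)]$ and $C^- \subseteq G[\Delta(A)]$ are automatically vertex-disjoint.

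\emph{Second bullet.} Let $i$ be a vertex of $C^+$, so $i \in \Delta(a,b) \subseteq \Delta(A) \cup \Delta(B)$; say $i \in \Delta(A)$. Then some $\alpha \in A$ has $f_i(\alpha) \neq \alpha_i$, giving an arc $\alpha \to \alpha + e_i$ of $\Gamma(f)$ inside $A$. Since $A$ is a strong component of $\Gamma(f)$, there is a shortest closed walk of the form $\alpha \to \alpha + e_i \to \cdots \to \beta \to \beta + e_i \to \cdots \to \alpha$ whose only $i$-flips are the two marked arcs. Applying \cref{lem:R10b} to each of the two maximal $i$-flip-free subpaths ending at an $i$-flip yields two walks in $G$ terminating at $i$, whose signs are prescribed by the lemma; combining them, together with the arcs of $C^+$ providing a $G$-walk back to the initial vertex from $i$, the plan is to assemble a negative closed walk through $i$, from which a negative cycle through $i$ can be extracted.

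\emph{Main obstacle.} The principal difficulty is the disjointness step: Hamming-distance minimality alone does not obviously control $|\Delta(a,b) \cap \Delta(A)|$, and producing the replacement $a \to a'$ appears to require exploiting the internal dynamics of $A$ in a way that goes beyond the minimality argument as stated. Verification on the non-separating examples of the paper (e.g.\ \cref{ex:not-sep,ex:not-sep-2,ex:negative_feedback_1,ex:negative_arc_feedback_1}) supports both bullets, but a general proof seems to require ideas beyond the tools used elsewhere in the paper. A secondary difficulty in the second bullet is that the walks produced by \cref{lem:R10b} end at $i$ rather than closing through $i$; doing the closing step rigorously likely requires combining them with the arcs of $C^+$, and checking that the resulting negative closed walk contains a negative cycle that still passes through $i$.
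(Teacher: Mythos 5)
This statement is not proved in the paper: it appears in the final section as an open \emph{conjecture}, supported only by examples, so there is no proof of record to compare your attempt against. Your proposal is, by your own admission, a sketch with unfilled gaps, and those gaps are real — they are essentially the reason the statement is open. The preliminary setup is fine and matches the paper's standard toolkit (minimizing $d(a,b)$ over $A\times B$ to get a positive cycle in $G[\Delta(a,b)]$ via \cref{lem:A08}, a negative cycle in $G[\Delta(A)]$ via \cref{lem:R10}, and the observation that $[A]\cap[B]\neq\emptyset$ forces $\Delta(a,b)\subseteq\Delta(A)\cup\Delta(B)$). But the two steps you flag as obstacles are exactly where a proof would have to live.

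For the first bullet, the secondary minimization of $|\Delta(a,b)\cap\Delta(A)|$ is only a plan: you never exhibit the replacement configuration $a'$, and there is no argument that walking inside $A$ until coordinate $i$ flips cannot increase $d(\cdot,b)$ or drag new coordinates of $\Delta(A)$ into the difference set. Note that \cref{pro:disjoint-opposite} together with \cref{ex:sep-not-conv-fix} shows that ``two attractors, one cyclic'' does not by itself yield disjoint opposite cycles — the full strength of $[A]\cap[B]\neq\emptyset$ must be exploited somewhere beyond the containment $\Delta(a,b)\subseteq\Delta(A)\cup\Delta(B)$, and your sketch does not identify where. For the second bullet there is an additional structural problem: even if you assemble a negative closed walk through $i$ by concatenating the walks from \cref{lem:R10b} with arcs of $C^+$, a negative closed walk only guarantees \emph{some} negative cycle inside it, not one passing through $i$; moreover the arcs of $C^+$ are positive arcs of $G$ but need not combine with the R10b-walks into anything closed in the underlying digraph. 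So neither bullet is established, and the attempt should be regarded as a (reasonable) exploration of the conjecture rather than a proof.
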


We saw that non-separating graphs have feedback number at least two (\cref{thm:PFN}), and that if the feedback number is exactly two then the graph contains $H_2$ (in the sense of \cref{thm:fvs2}). Starting from $H_2$ we can construct a non-separating strongly connected graph with $n\geq 3$ vertices by replacing the positive arc from $1$ to $2$ with a full-positive path with $n-2$ internal vertices (for instance, the signed digraph in \cref{ex:negative_feedback_1} is obtained by replacing the positive arc from $1$ to $2$ with a full-positive path with one internal vertex).

\begin{example}\label{ex:H2-n}
For each $n\geq 3$, let $f\in F(n)$ be defined by 
  \[f_1(x) = x_1 + x_2, \ \ f_2(x) = \bar{x}_1x_2 \vee x_n, \ \ f_3(x) = x_1, \ \ f_k(x) = x_{k-1} \text{ for } k=4,\dots,n.\]
The interaction graph of $f$ is as follows (the dotted green arrow represents a full-positive path).  
\[
\begin{tikzpicture}
\useasboundingbox (-2.2,-1.8) rectangle (2.2,0.7);
\node[outer sep=1,inner sep=2,circle,draw,thick] (1) at ({180}:1){$1$};
\node[outer sep=1,inner sep=2,circle,draw,thick] (2) at ({0}:1){$2$};
\node[outer sep=1,inner sep=2,circle,draw,thick] (3) at (-1.5,-1.5){$3$};
\node[outer sep=1,inner sep=2,circle,draw,thick] (4) at (-0.5,-1.5){$4$};
\node[outer sep=1,inner sep=2,circle,draw,thick] (n) at (1.5,-1.5){$n$};
\draw[Green,->,thick] (1.{180-20}) .. controls ({180-20}:2.3) and ({180+20}:2.3) .. (1.{180+20});
\draw[red,->,thick] (1.{180-60}) .. controls ({180-30}:2.8) and ({180+30}:2.8) .. (1.{180+60});
\draw[Green,->,thick] (2.{0-20}) .. controls ({0-20}:2.3) and ({0+20}:2.3) .. (2.{0+20});
\path[->,thick]
(1) edge[red,bend right=25] (2)
(1) edge[Green,bend left=5] (3)
(n) edge[Green,bend left=5] (2)
(2) edge[red,bend right=25] (1)
(2) edge[Green,bend right=55] (1)
(3) edge[Green] (4)
(4) edge[Green,dotted] (n)
;
\end{tikzpicture}
\]
It is strongly connected, has $n+5$ arcs and 7 cycles, of which 4 are positive and 3 are negative. This signed digraph is non-separating since $\Gamma(f)$ is non-separating. Indeed, $\ZERO$ is the unique fixed point of $f$. Furthermore, the set $T$ of configurations $x$ with $x_1=1$ or $x_2=1$ is a trap set which thus contains an attractor $A$. We can easily check that $\Gamma(f)$ has a path from any configuration in $T$ to $\ONE$, and thus $\ONE\in A$. Since $\Gamma(f)$ has a path from $\ONE$ to $e_2$ (with direction sequence $1,3,4\dots,n$) and a path from from $e_2$ to $e_1$ (with direction sequence $1,2$), we have $e_1,e_2,\ONE\in A$, and thus $[A]=\B^n$, so $\Gamma(f)$ is not separating.
\end{example}

Non-separating signed digraphs that do not contain $H_2$ do exist (\cref{ex:fn3});
however we conjecture that signed digraphs derived from $H_2$ provide lower bounds for strongly connected non-separating signed digraphs in terms of number of arcs and number of cycles.

\begin{conjecture}
  Every non-separating strongly connected signed digraph with $n\geq 3$ vertices has at least $n+5$ arcs and at least 7 cycles. At least 4 cycles are positive and at least 3 are negative.
\end{conjecture}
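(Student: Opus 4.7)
The plan is to build on the structural tools of the paper, especially \cref{thm:PFN}, \cref{thm:NEGATIVE} and \cref{thm:fvs2}, and proceed by a case analysis on the feedback number of $G$. Preliminary bounds come for free: since $G$ is strong and non-separating, \cref{thm:bib} gives at least one positive and one negative cycle; \cref{thm:PFN} forces the positive feedback number of $G$ to be at least $2$, so $G$ has at least two positive cycles (any single vertex of a unique positive cycle would be a positive feedback set of size one); and dually \cref{thm:NEGATIVE} forbids a strong non-separating graph from having a unique negative cycle, giving at least two negative cycles. In particular, the feedback number of $G$ is at least two.

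The heart of the argument is the extremal case, feedback number exactly $2$. By \cref{thm:fvs2} there is an injection $\phi:\{1,2\}\to V$ with $u=\phi(1)$ and $v=\phi(2)$ realising an embedding of $H_2$. The three loops of $H_2$ directly witness three simple cycles in $G$: a positive and a negative cycle through $u$ avoiding $v$, and a positive cycle through $v$ avoiding $u$. The four non-loop arcs of $H_2$ witness four paths between $u$ and $v$ with interiors avoiding $\{u,v\}$, one of each sign in each direction; concatenating a $u\to v$ witness with a $v\to u$ witness yields a closed walk visiting $u$ and $v$ exactly once, from which a simple cycle through both vertices can be extracted with sign equal to the product of the two path signs, provided the interior intersections are controlled. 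The four pairings then produce two positive and two negative cycles through $\{u,v\}$, which together with the three loop-witness cycles yield four positive and three negative cycles, pairwise distinct because the three loop cycles each miss one of $\{u,v\}$ while the four remaining cycles traverse both and are separated by their sign profiles. For the arc count, the seven cycles force at least seven distinct arcs incident to $\{u,v\}$ (the two cycles through $u$ alone must use distinct arcs since they have opposite signs, and likewise the four through-$\{u,v\}$ cycles must use four distinct incident-arc combinations at $u$ and $v$), while strong connectivity requires at least $n-2$ further arcs to pull each vertex of $V\setminus\{u,v\}$ onto some cycle, giving the total $n+5$.

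The case of feedback number at least $3$ should be more routine: taking a minimum feedback vertex set $\{i_1,i_2,i_3\}$, one counts cycles through each $i_k$ using the positive and negative feedback lower bounds from \cref{thm:PFN} and \cref{thm:NEGATIVE} and the cycle-intersection conclusion of \cref{thm:sep}; combined with strong connectivity this already pushes the arc and cycle counts past $n+5$ and $7$ respectively, and the sign distribution is recovered from the same ingredients. The principal obstacle is the feedback-number-two case: the notion of embedding in \cref{thm:fvs2} lets the witness paths share internal vertices and arcs freely, so the crisp cycle and arc counts above depend on ruling out delicate collapses. The natural way forward is to sharpen \cref{thm:fvs2} so as to produce witness paths that are internally disjoint whenever possible; after such a refinement the seven distinct cycles of the required signs and the $n+5$ distinct arcs should drop out by essentially mechanical enumeration.
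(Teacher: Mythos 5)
This statement is not proved in the paper: it appears in \cref{sec:open-problems} as an open conjecture, accompanied only by \cref{ex:H2-n}, which shows the bounds $n+5$, $7$, $4$ and $3$ would be tight. So there is no proof of the authors' to compare yours against, and your text should be judged as a standalone attempt. As such, it is not a proof but an outline with gaps at exactly the points where the difficulty lies, as you yourself acknowledge in the final sentences.

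Concretely: (i) The preliminary counts you extract from \cref{thm:PFN} and \cref{thm:NEGATIVE} give at least $2$ positive and at least $2$ negative cycles, not $4$ and $3$; nothing in the paper closes that gap, and your ``routine'' treatment of feedback number $\geq 3$ has no mechanism for producing the two extra positive cycles and the extra negative cycle (note \cref{ex:fn3} and \cref{ex:strong-h2-not-embedded}, where $H_2$ is not embedded, so the $H_2$ machinery is unavailable there). (ii) In the feedback-number-two case, concatenating a positive $u\to v$ witness path with a positive $v\to u$ witness path need not yield any simple cycle through both $u$ and $v$: if the two paths are $u\to w\to v$ with both arcs positive and $v\to w\to u$ with both arcs negative, the only simple cycles in their union are a negative digon on $\{u,w\}$ and a negative digon on $\{v,w\}$, neither containing both $u$ and $v$ and neither positive. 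So the four ``through-$\{u,v\}$'' cycles with prescribed signs may simply fail to exist as simple cycles, and the proposed sharpening of \cref{thm:fvs2} to internally disjoint witness paths is itself an open strengthening, not a refinement that ``should drop out''. (iii) The arc count is likewise unestablished: ``seven distinct arcs incident to $\{u,v\}$'' does not follow from the existence of seven cycles (distinct cycles of opposite sign differ in at least one arc, but many cycles can share almost all incident arcs), and the ``$n-2$ further arcs'' from strong connectivity would need to be shown disjoint from the arcs already counted. In short, the skeleton is reasonable as a plan of attack, but every quantitative claim in the conjecture remains unproved.
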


For signed digraphs that are separating but not trap-separating we can suggest stricter bounds, based on the following example.

\begin{example}\label{ex:sep-not-trap-sep-n}
For each $n\geq 4$ let $f\in F(n)$ be defined by
  \[f_1(x) = \bar{x}_{n-1} \bar{x}_{n}, \ \ f_n(x)= x_n \lor x_1\bar x_2x_3,\ \ f_k(x) = x_{k-1} \text{ for } k=2,\dots,n-1.\]
The interaction graph of $f$ is as follows (the dotted green arrow represents a full-positive path). 
\[
\begin{tikzpicture}
\node[outer sep=1,inner sep=2,circle,draw,thick] (1) at (90:1.5){$1$};
\node[outer sep=1,inner sep=2,circle,draw,thick] (2) at ({0}:1.5){$2$};
\node[outer sep=1,inner sep=2,circle,draw,thick] (3) at (-90:1.5){$3$};
\node[outer sep=1,inner sep=2,circle,draw,thick] (n-1) at (180:1.5){\scriptsize  $n-1$};
\node[outer sep=1,inner sep=2,circle,draw,thick] (n) at (3,0){$n$};
\draw[Green,->,thick] (n.{0-20}) .. controls ({0-15}:4.3) and ({0+15}:4.3) .. (n.{0+20});
\path[->,thick]
(1) edge[Green,bend left=15] (2)
(2) edge[Green,bend left=15] (3)
(3) edge[Green,bend left=15,dotted]  (n-1)
(n-1) edge[red,bend left=15] (1)
(1) edge[Green,bend left=15] (n)
(2) edge[red] (n)
(3) edge[Green,bend right=15] (n)
(n) edge[red,bend right=40] (1)
;
\end{tikzpicture}
\]
It is separating since it has feedback number two ($\{1,n\}$ is a feedback vertex set) and no embedding of $H_2$. It is strongly connected, has $n+5$ arcs and 5 cycles, of which 2 are positive and 3 are negative. However, it is not trap-separating since $\Gamma(f)$ is not trap-separating. Indeed, $e_n$ is a fixed point of $f$ and $\Gamma(f)$ has an attractor $A$ whose configurations are $\sum_{i=1}^k e_i$ and $\ONE+\sum_{i=1}^k e_i$ for $k=1,\dots,n-1$. Thus $[A]=\{x_n=0\}$, and since $\Gamma(f)$ has an arc from $e_1+e_3$ to $e_1+e_3+e_n$ we have $\langle A\rangle=\B^n$, and thus $\Gamma(f)$ is not trap-separating.
\end{example}

\begin{conjecture}
  If a strongly connected signed digraph with $n\geq 4$ vertices is separating but not trap-separating, then it has at least $n+5$ arcs and at least 5 cycles, of which at least 2 are positive and at least 3 are negative.
\end{conjecture}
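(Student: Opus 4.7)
The plan is to argue each of the four claims by contradiction, leveraging the separation theorems of the paper to contradict the non-trap-separating hypothesis. Fix a strongly connected $G$ on $n\geq 4$ vertices, some $f\in F(G)$, and distinct attractors $A,B$ of $\Gamma(f)$ with $\langle A\rangle\cap\langle B\rangle\neq\emptyset$ but $[A]\cap[B]=\emptyset$. As a preliminary observation, at least one of $A,B$ has size at least two: if $A=\{a\}$ and $B=\{b\}$ were both singletons, then $\{a\}$ and $\{b\}$ are already trap spaces (any singleton is a subspace, and a singleton attractor consists of a fixed point), forcing $\langle A\rangle\cap\langle B\rangle=\emptyset$. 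By \cref{lem:R10} this yields a negative cycle in $G$, and by \cref{thm:bib} a positive cycle as well (otherwise $G$ would be converging, hence trap-separating).

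For the negative cycle count, \cref{thm:NEGATIVE} already yields at least two negative cycles, since a strong graph with at most one negative cycle is trapping. To upgrade to three, I would handle the exactly-two case by asking whether some arc or vertex belongs to both negative cycles and to no positive cycle. If so, \cref{lem:switch} moves us by a switch to a graph in which all negative arcs share a terminal vertex, and then \cref{lem:special_vertex} forces trapping, contradicting non-trap-separation. If not, the two negative cycles must be interleaved with positive cycles in a very constrained way, and a finer case analysis together with the decomposition tools of \cref{lem:decomposition,lem:decomposition_5} should still force trap-separation. For the positive cycle count, suppose the positive cycle is unique and call it $C$. If every negative cycle meets $C$, \cref{pro:one_positive_cycle} gives trap-separating (in fact converging, since $G$ is strong with a negative cycle). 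So some negative cycle $D$ is disjoint from $C$; strong connectivity combined with \cref{pro:fixing} applied to carefully chosen strong subgraphs should then force $G$ itself to be fixing, contradicting the existence of a cyclic attractor established in the preliminary observation. The bound of at least five cycles follows by summing $2+3$.

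For the arc lower bound, with two positive and three negative cycles in place one attempts a combinatorial counting argument: strong connectivity on $n$ vertices already requires at least $n$ arcs, and realizing five cycles of the prescribed signs on top of a connectivity backbone, while simultaneously avoiding the hypotheses of \cref{thm:sep,thm:trap-sep,thm:PFN,thm:NEGATIVE,thm:fvs2}, should demand at least five additional arcs. The construction in \cref{ex:sep-not-trap-sep-n} saturates this bound and serves as a template for the extremal structure. The hardest obstacles I anticipate are the two ``exact number'' subcases, namely ruling out exactly two negative cycles and exactly one positive cycle in a strong separating-but-not-trap-separating graph. Both lie just outside the direct reach of \cref{thm:NEGATIVE} and \cref{pro:one_positive_cycle}, and a complete proof will likely require extending the switch-and-decompose machinery of \cref{sec:intersection_positive_negative,sec:number-positive-cycles,sec:number-negative-cycles} with a refined analysis of how the attractors $A$ and $B$ project onto the cycle skeleton of $G$, in the spirit of the configuration bookkeeping carried out in the proof of \cref{thm:fvs2}.
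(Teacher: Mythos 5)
This statement is one of the paper's concluding \emph{conjectures}: the authors offer no proof, only the extremal construction of \cref{ex:sep-not-trap-sep-n} as evidence. So there is no proof in the paper to compare against, and the real question is whether your proposal closes the gap. It does not. The parts you do establish are sound: if both attractors were fixed points their singletons would be disjoint trap spaces, so some attractor is cyclic and \cref{lem:R10} gives a negative cycle, \cref{thm:bib} gives a positive one, and \cref{thm:NEGATIVE} (strong plus at most one negative cycle implies trapping) gives at least two negative cycles. The reduction of the positive-cycle count to the case of a unique positive cycle $C$ with some negative cycle disjoint from $C$, via \cref{pro:one_positive_cycle}, is also correct. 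But everything beyond that is asserted, not proved, and the assertions are exactly where the difficulty lies.

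Concretely: (i) To rule out exactly two negative cycles, your plan works only when some arc or vertex lies on both negative cycles and on no positive cycle (so that \cref{lem:switch,lem:special_vertex} apply); in the remaining case you write that a ``finer case analysis should still force trap-separation,'' which is not an argument, and no result in the paper covers it. (ii) For the unique-positive-cycle case, \cref{pro:fixing} has hypotheses about a unique \emph{negative} cycle met by all cycles --- essentially the mirror image of your situation --- so it does not apply to ``carefully chosen strong subgraphs'' in any way you have specified. Worse, the paper explicitly states, just before \cref{ex:strong-not-trapping}, that whether a strong signed digraph with a single positive cycle is trap-separating is an open question; your step (ii), if correct, would resolve that open question, which should have signalled that it needs a genuine proof rather than a ``should.'' (iii) The arc bound $n+5$ is pure assertion: strong connectivity gives $n$ arcs, but nothing you write shows that five cycles of the prescribed signs, arranged so as to defeat \cref{thm:sep,thm:trap-sep,thm:PFN,thm:NEGATIVE,thm:fvs2}, cost five \emph{extra} arcs; a careful extremal argument over cycle spaces of strong digraphs would be needed and is absent. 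The statement therefore remains a conjecture after your proposal.
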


\section*{Acknowledgments}
The authors are grateful to Heike Siebert and the participants to the {\em Berlin Workshop on Theory and applications of Boolean interaction networks 2019} for helpful discussions.

Funding: Elisa Tonello was partially funded by the Volkswagen Stiftung (Volkswagen Foundation) under the initiative Life?—A fresh scientific approach to the basic principles of life (Project ID: 93063). Adrien Richard was supported by the Young Researcher project ANR-18-CE40-0002-01 ``FANs''.

\bibliographystyle{abbrv}
\bibliography{BIB}

\end{document}